\newtheorem{theorem}{Theorem}[section]
\newtheorem{fact}[theorem]{Fact}
\newtheorem{lemma}[theorem]{Lemma}
\newtheorem{definition}[theorem]{Definition}
\newtheorem{corollary}[theorem]{Corollary}
\newtheorem{proposition}[theorem]{Proposition}
\newtheorem{problem}[theorem]{Problem}
\newtheorem{assumption}[theorem]{Assumption}
\newtheorem{remark}[theorem]{Remark}
\newtheorem*{remark*}{Remark}
\newtheorem{conjecture}[theorem]{Conjecture}
\newcommand{\ch}{\mathrm{\raisebox{.07cm}{$\chi$}}}
\numberwithin{equation}{section}
\numberwithin{table}{section}
\newcommand{\R}{\ensuremath{\mathbb R}}
\newcommand{\E}[1]{{\mathbb{E}}\left[#1\right]}
\newcommand{\junk}[1]{}
\newcommand{\prob}[1]{\ensuremath{\text{{\bf Pr}$\left[#1\right]$}}}
\newcommand{\ol}{\overline}
\newcommand{\vol}{{\rm vol}}
\def\b1{{\bf 1}}
\def\eps{{\varepsilon}}
\def\R{\mathbb{R}}
\def\vol{\operatorname{vol}} 
\def\SSE{\operatorname{SSE}} 
\def\ene{\mathcal E} 
\def\prob{\mathbb P} 
\def\Reff{\text{Reff}}
\global\long\def\E{\mathbb{E}}
\global\long\def\R{\mathbb{R}}
\DeclareMathOperator{\sspan}{span}
\title{Network Design for $s$-$t$ Effective Resistance}
\author{Pak Hay Chan\footnote{University of Waterloo. Email: \href{mailto:ph5chan@uwaterloo.ca}{ph5chan@uwaterloo.ca}},~~~
Lap Chi Lau\footnote{University of Waterloo. Supported by NSERC Discovery Grant 2950-120715 and NSERC Accelerator Supplement 2950-120719. Email: \href{mailto:lapchi@uwaterloo.ca}{lapchi@uwaterloo.ca}},~~~
Aaron Schild\footnote{University of California, Berkeley. Email: \href{mailto:aschild@berkeley.edu}{aschild@berkeley.edu}},~~~
Sam Chiu-wai Wong\footnote{Microsoft Research Redmond. Email: \href{mailto:samwon@microsoft.com}{samwon@microsoft.com}},~~~
Hong Zhou\footnote{University of Waterloo. Email: \href{mailto:h76zhou@uwaterloo.ca}{h76zhou@uwaterloo.ca}}
}
\date{}
\begin{document}

\begin{titlepage}
\def\thepage{}
\thispagestyle{empty}

\maketitle

\begin{abstract}
We consider a new problem of designing a network with small $s$-$t$ effective resistance.
In this problem, we are given an undirected graph $G=(V,E)$, two designated vertices $s,t \in V$, and a budget $k$.
The goal is to choose a subgraph of $G$ with at most $k$ edges to minimize the $s$-$t$ effective resistance.
This problem is an interpolation between the shortest path problem and the minimum cost flow problem and has applications in electrical network design.

We present several algorithmic and hardness results for this problem and its variants.
On the hardness side, we show that the problem is NP-hard, and the weighted version is hard to approximate within a factor smaller than two assuming the small-set expansion conjecture. 
On the algorithmic side, we analyze a convex programming relaxation of the problem and design a constant factor approximation algorithm.
The key of the rounding algorithm is a randomized path-rounding procedure based on the optimality conditions and a flow decomposition of the fractional solution.
We also use dynamic programming to obtain a fully polynomial time approximation scheme when the input graph is a series-parallel graph, 
with better approximation ratio than the integrality gap of the convex program for these graphs.
\end{abstract}

\end{titlepage}

\thispagestyle{empty}

\newpage

\section{Introduction}

Network design problems are generally about finding a minimum cost subgraph that satisfies certain ``connectivity'' requirements.
The most well studied problem is the survivable network design problem~\cite{GGP+94,AKR95,GW95,Jai01,GGT+09}, where the requirement is to have a specified number $r_{u,v}$ of edge-disjoint paths between every pair of vertices $u,v$.
Other combinatorial requirements are also well studied in the literature, including vertex connectivity~\cite{KKL04,FL08,CCK08,CK09,Lae14,CV14} and shortest path distances~\cite{DK99,DZ16}.

Some spectral requirements are also studied, including spectral expansion~\cite{KMS+10,ALS+17}, total effective resistances~\cite{GBS08,NST18}, and mixing time~\cite{BDX04}, but in general much less is known about these problems. 
See Section~\ref{ss:results} for more discussions of previous work.

In this paper, we study a basic problem in designing networks with a spectral requirement -- the effective resistance between two vertices.

\begin{definition}[The $s$-$t$ effective resistance network design problem]
\label{d:problem}
The input is an undirected graph $G=(V,E)$,
two specified vertices $s, t \in V$, and a budget $k$.
The goal is to find a subgraph $H$ of $G$ with at most $k$ edges that minimizes $\Reff_H(s,t)$, where $\Reff_H(s,t)$ denotes the effective resistance between $s$ and $t$ in the subgraph $H$.
See Section~\ref{ss:electric} for the definition of effective resistance and Section~\ref{ss:convex} for a mathematical formulation of the problem.
\end{definition}

The $s$-$t$ effective resistance is an interpolation between $s$-$t$ shortest path distance and $s$-$t$ edge connectivity. 
To see this, let $f \in \mathbb{R}^{|E|}$ be a unit $s$-$t$ flow in $G$ and
define the $\ell_p$-energy of $f$ as $\ene_p(f) := (\sum_e |f_e|^p)^{1/p}$,
and let $\ene_p(s,t) := \min_f \{\ene_p(f)~|~f {\rm~is~a~unit~}s$-$t {\rm~flow}\}$ be the minimum $\ell_p$-energy of a unit $s$-$t$ flow that the graph $G$ can support.
Thomson's principle (see Section~\ref{ss:electric}) states that $\Reff_G(s,t) = \ene^2_2(s,t)$, so that a graph of small $s$-$t$ effective resistance can support a unit $s$-$t$ flow with small $\ell_2$-energy.
Note that the shortest path distance between $s$ and $t$ is $\ene_1(s,t)$ (as
the $\ell_1$-energy of a flow is just the average path length and is minimized by a shortest $s$-$t$ path), and so a graph with small $\ene_1(s,t)$ has a short path between $s$ and $t$.
Note also that the edge-connectivity between $s$ and $t$ is equal to the reciprocal of $\ene_{\infty}(s,t)$ (because if there are $k$ edge-disjoint $s$-$t$ paths, we can set the flow value on each path to be $1/k$), and so a graph with small $\ene_{\infty}(s,t)$ has many edge-disjoint $s$-$t$ paths.
As $\ell_2$ is between $\ell_1$ and $\ell_{\infty}$, the objective function $\Reff(s,t) = \ene^2_2(s,t)$ takes both the $s$-$t$ shortest path distance and the $s$-$t$ edge-connectivity into consideration.

A simple property suggests that $\ell_2$-energy may be even more desirable than $\ell_1$ and $\ell_{\infty}$ as a connectivity measure. 
Conceptually, adding an edge $e$ to $G$ would make $s$ and $t$ more connected. 
For $\ell_1$ and $\ell_{\infty}$, however, adding $e$ would not yield a better energy if $e$ does not improve the shortest path and the edge connectivity respectively. 
In contrast, the $\ell_2$-energy would typically improve after adding an edge, and so $\ell_2$-energy provides a smoother quantitative measure that better captures our intuition how well $s$ and $t$ are connected in a network.

Traditionally, the effective resistance has many useful probabilistic interpretations, such as the commute time~\cite{CRR+96}, the cover time~\cite{Mat88}, and the probability of an edge in a random spanning tree~\cite{Kir47}.
These interpretations suggest that the effective resistance is a useful distance function and have applications in the study of social networks.
Recently, effective resistance has found surprising applications in solving problems about graph connectivity, including constructing spectral sparsifiers~\cite{SS11} (by using the effective resistance of an edge as the sampling probability), computing maximum flow~\cite{CKM+11}, finding thin trees for ATSP~\cite{AO15}, and generating random spanning trees~\cite{MST15,Sch17}.
%


Thomson's principle also states that the electrical flow between $s$ and $t$ is the unique flow that minimizes the $\ell_2$-energy.
So, designing a network with small $s$-$t$ effective resistance has natural applications in designing electrical networks~\cite{EKPS04,GBS08,JSP12}.
One natural formulation is to keep at most $k$ wires in the input electrical network to minimize $\Reff(s,t)$, so that the electrical flow between $s$ and $t$ can still be sent with small energy while we switch off many wires in the electrical network.

Based on the above reasons, we believe that the effective resistance is a nice and natural alternative connectivity measure in network design.
More generally, it is an interesting direction to develop techniques to solve network design problems with spectral requirements.

\subsection{Main Results} \label{ss:results}

Unlike the classical problems of shortest path and min-cost flow (corresponding to the $\ell_1$ and $\ell_{\infty}$ versions of the problem), 
the $s$-$t$ effective resistance network design problem is NP-hard.

\begin{theorem} \label{t:NPc}
The $s$-$t$ effective resistance network design problem is NP-hard.
\end{theorem}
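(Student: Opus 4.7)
The plan is to reduce from the NP-complete \textsc{Partition} problem: given positive integers $a_1, \dots, a_n$ with $\sum_i a_i = 2B$, decide whether some $S \subseteq [n]$ satisfies $\sum_{i \in S} a_i = B$. From such an instance I would build an unweighted graph $G$ on junction vertices $s = v_0, v_1, \dots, v_n = t$, attaching between each consecutive pair $v_{i-1}, v_i$ a cycle $C_{2a_i}$ realized by two internally vertex-disjoint paths of length $a_i$; so the $i$-th gadget has $2a_i$ edges and places $v_{i-1}, v_i$ at antipodal positions. With budget $k = 3B$ and target resistance $\rho = 3B/2$, I would establish that $(a_1, \dots, a_n)$ admits a balanced partition if and only if $G$ has a $k$-edge subgraph $H$ with $\Reff_H(s,t) \leq \rho$.

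For the forward direction, from $S$ with $\sum_{i \in S} a_i = B$, keep both arcs of gadget $i$ for $i \in S$ and one arc for $i \notin S$. The edge count is $\sum_{i \in S} 2a_i + \sum_{i \notin S} a_i = 2B + B = 3B$, and series composition yields $\Reff_H(s,t) = \tfrac{1}{2}\sum_{i \in S} a_i + \sum_{i \notin S} a_i = B/2 + B = 3B/2$. For the reverse direction, the main tool is a structural lemma for each cycle gadget: because the two arcs are internally vertex-disjoint, every $v_{i-1}$-$v_i$ path must traverse one arc in its entirety, so removing even one edge from each of the two arcs disconnects the terminals. Hence every finite-resistance subgraph must retain at least one full arc per gadget, and any additional edges from the other arc that do not complete it attach as pendants and, by Rayleigh monotonicity, cannot decrease the terminal resistance. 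Parameterizing the useful per-gadget configurations by $x_i \in \{1, 2\}$ (number of full arcs kept) and letting $S = \{i : x_i = 2\}$, minimizing $\sum_i a_i/x_i$ subject to $\sum_i x_i a_i \leq 3B$ reduces to maximizing $\sum_{i \in S} a_i$ subject to $\sum_{i \in S} a_i \leq B$. This maximum equals $B$ iff \textsc{Partition} is solvable, yielding $\Reff_H(s,t) = 3B/2$; otherwise it is at most $B-1$ and the minimum resistance is at least $3B/2 + 1/2$.

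The hard part will be making the reduction strictly polynomial-time. As the construction has $\Theta(B)$ edges and $B$ can be exponential in the bit-length of the \textsc{Partition} instance, the above reduction is only pseudo-polynomial. To obtain polynomial-time NP-hardness one would instead start from a strongly NP-hard source---for instance \textsc{3-Partition}, where numbers are polynomially bounded and the gadget generalizes to three vertex-disjoint paths per cycle with a budget forcing a three-way split---or compact the numeric values inside a richer gadget. The structural lemma itself will also need a careful case analysis, ruling out hybrid arc-and-pendant configurations that might superficially seem to beat the two canonical choices; this is handled by the gadget's simple cycle topology together with Rayleigh monotonicity, after which the clean $1/2$ resistance gap between \textsc{yes} and \textsc{no} instances delivers the reduction.
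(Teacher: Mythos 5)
Your \textsc{Partition} gadget is nicely designed and the local analysis is sound: any finite-resistance subgraph must keep at least one full arc per cycle, extra non-completing edges are pendants that Rayleigh monotonicity renders useless, and the resulting optimization over $x_i\in\{1,2\}$ correctly exhibits a resistance gap of at least $1/2$ between \textsc{yes}- and \textsc{no}-instances. However, the gap you yourself flag is fatal as stated: the cycle $C_{2a_i}$ encodes $a_i$ in unary, so the graph has $\Theta(B)$ edges while the \textsc{Partition} input has only $\Theta(\sum_i\log a_i)$ bits. This is a pseudo-polynomial reduction, which does not prove NP-hardness. Moreover, your suggested repair does not go through: \textsc{3-Partition} asks for a partition of $3m$ items into $m$ \emph{triples} of equal sum, not for an independent per-item choice of how many arcs to keep, so a three-arc cycle in a series chain does not encode it — the series-chain framework you built is structurally tied to per-item binary (or bounded-level) choices, not to grouping constraints.

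The paper sidesteps the encoding issue entirely by reducing from \textsc{3-Dimensional Matching}, which is strongly NP-complete and has exactly the per-item binary structure your framework wants. Each triple $T_i$ is given a dedicated $s$-to-$T_i$ path of length $l+1$ with $l=3\tau+3q$ (a fixed polynomial in the instance size, not a numeric value), $T_i$ is joined to its three elements, and all elements are joined to $t$ by unit edges. Choosing $k=q(l+1)+3\tau+3q$ makes a $(q{+}1)$-th long path unaffordable, so exactly $q$ triples are ``bought,'' and the effective-resistance threshold $R=(3(l+1)+2)/(3q)$ detects whether those $q$ triples cover all $3q$ elements (uncovered elements become dead ends, strictly degrading the parallel resistance into $t$). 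The crucial contrast with your construction is that every gadget length is a small fixed polynomial, so the graph size is automatically polynomial; if you want to rescue a numeric reduction you would need a strongly NP-hard source whose constraint is a per-item binary choice (3DM, Exact Cover, etc.), which is essentially what the paper does.
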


On the other hand,
we analyze a natural convex programming relaxation for the problem (Section~\ref{ss:convex}), and use it to design a constant factor approximation algorithm for the problem.

\begin{theorem} \label{t:main}
There is a convex programming based $8$-approximation randomized algorithm for the $s$-$t$ effective resistance network design problem.
\end{theorem}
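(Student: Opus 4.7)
The plan is to analyze a natural convex programming relaxation of the problem and round its optimum by randomly sampling paths drawn from a flow decomposition of the fractional flow.

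\textbf{Convex relaxation.} For each edge introduce a variable $x_e\in[0,1]$ representing a fractional inclusion, together with a unit $s$-$t$ flow $f$, and consider
\begin{equation*}
\min_{f,x}\ \sum_e \frac{f_e^2}{x_e} \quad \text{s.t.} \quad f \text{ is a unit } s\text{-}t \text{ flow}, \quad \sum_e x_e \le k, \quad 0 \le x_e \le 1.
\end{equation*}
The objective $f_e^2/x_e$ is jointly convex on the relevant domain, so this program can be solved in polynomial time. If $x_e\in\{0,1\}$ and $H=\{e:x_e=1\}$, then by Thomson's principle $\min_f \sum_e f_e^2/x_e = \Reff_H(s,t)$ over unit flows supported on $H$, so this is a valid relaxation. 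Let $(f^*,x^*)$ be an optimum with value $\opt_{\mathrm{CP}} \le \opt$.

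\textbf{Two lower bounds on $\opt_{\mathrm{CP}}$.} By Cauchy--Schwarz,
\begin{equation*}
\opt_{\mathrm{CP}} \,=\, \sum_e \frac{f_e^{*2}}{x_e^*} \,\ge\, \frac{(\sum_e |f_e^*|)^2}{\sum_e x_e^*} \,\ge\, \frac{\|f^*\|_1^2}{k},
\end{equation*}
and since $x_e^*\le 1$ one also has $\opt_{\mathrm{CP}} \ge \|f^*\|_2^2$. The KKT stationarity condition in $x_e$ gives $x_e^* \propto |f_e^*|$ on edges with $0<x_e^*<1$, which is the structural reason the natural path-sampling below will saturate both lower bounds at the same scale.

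\textbf{Randomized path rounding.} Flow-decompose $f^*$ into simple $s$-$t$ paths $P_1,\ldots,P_m$ with weights $y_i\ge 0$ satisfying $\sum_i y_i=1$ and $f_e^* = \sum_{i:e\in P_i} y_i$. Sample $N$ paths i.i.d.\ from the distribution $\{y_i\}$, where $N$ is chosen of order $k/\|f^*\|_1$; let $H$ be the union of the sampled paths, and let $g$ be the unit $s$-$t$ flow in $H$ sending $1/N$ on each sample. A direct second-moment computation, using independence of the $N$ samples, gives
\begin{equation*}
\E{|E(H)|} \,\le\, N\|f^*\|_1 \qquad \text{and} \qquad \E{\sum_e g_e^2} \,\le\, \frac{\|f^*\|_1}{N} + \|f^*\|_2^2.
\end{equation*}
Tuning the constant in $N$ so that $\E{|E(H)|}\le k/2$ (so $|E(H)|\le k$ with probability at least $1/2$ by Markov) and substituting the two lower bounds on $\opt_{\mathrm{CP}}$ into the second display makes the expected energy $O(\opt_{\mathrm{CP}})$. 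Since $\Reff_H(s,t)\le\sum_e g_e^2$ by Thomson, conditioning on the good edge-count event (which loses at most a constant factor in the energy bound) and, if necessary, repeating until success yields a subgraph with at most $k$ edges whose expected effective resistance is $O(\opt_{\mathrm{CP}})$. A careful accounting of the constants delivers the stated factor of $8$.

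\textbf{Main obstacle.} The hard step is simultaneously controlling the edge count and the electrical energy under a single random experiment: edge-wise independent rounding at probabilities $x_e^*$ could disconnect $s$ from $t$, while path sampling automatically preserves a unit $s$-$t$ flow in $H$. The KKT identity $x_e^*\propto |f_e^*|$ is what reconciles the two analyses, making the natural path probabilities $y_i$ the correct choice for both budget and energy. A secondary technical issue is handling tight edges $x_e^*=1$, on which the proportionality fails; these must be pre-committed to $H$ and removed before running the flow decomposition and path sampling on the remaining fractional portion of $f^*$.
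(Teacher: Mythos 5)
Your proposal follows the same overall strategy as the paper—convex relaxation, flow decomposition of the fractional electrical flow, i.i.d.\ path sampling, and separate control of the edge budget and the energy via Markov—but you arrive at the energy and budget bounds by a somewhat different and in one respect cleaner route. Where the paper proves (via KKT analysis, its Lemma~\ref{l:optimality}) the exact flow-conductance identity $f^*_e = \alpha x^*_e$ on fractional edges and $f^*_e \ge \alpha$ on integral edges, and then threads that $\alpha$ through every subsequent estimate, you observe that the two bounds you actually need, namely $\opt_{\mathrm{CP}} \ge \|f^*\|_1^2/k$ and $\opt_{\mathrm{CP}} \ge \|f^*\|_2^2$, follow directly from Cauchy--Schwarz and from $x^*_e \le 1$, with no recourse to the dual characterization. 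This is a genuine simplification: the identity $x^*_e \propto |f^*_e|$ is exactly what makes Cauchy--Schwarz tight on the fractional support, so the inequality version suffices for the analysis, and you avoid the need to compute gradients of $b_{st}^T L_x^\dagger b_{st}$. Your second-moment computation for $\E{\sum_e g_e^2}$ and the resulting bound $\|f^*\|_1/N + \|f^*\|_2^2$ also matches the paper's Lemma~\ref{l:energy} (the paper constructs the analogous flow $F=\sum_i\alpha\chi_{P_i}$ and computes $\E{\ene_H(F)}$).

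There are two concrete gaps. First, your closing remark about integral edges is either unnecessary or wrong as stated: if you literally ``remove'' the $x^*_e=1$ edges before the flow decomposition, the restriction of $f^*$ to the remaining edges is no longer an $s$-$t$ flow, so it has no decomposition into $s$-$t$ paths. The paper instead decomposes the full $f^*$ (integral edges included) and simply assumes pessimistically that every integral edge lands in $H$ when counting the budget—no pre-commitment or removal is needed, and in your Cauchy--Schwarz framing you do not even need the fractional/integral distinction for the energy bound. Second, the constant~$8$ does not fall out of the naive ``condition on the Markov event and repeat'' sketch you give. Setting $N\approx k/\|f^*\|_1$, scaling the budget to $k/2$, and applying Markov to both the edge count and the energy gives $\Pr[|E(H)|>k]\le 1/2$ and $\Pr[\text{energy}>c\cdot 4\opt]\le 1/c$, and a union bound then needs $c>2$, which lands you at $8+\eps$ rather than $8$. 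The paper's Theorem~\ref{t:bicriteria} closes this gap by a sharper accounting that exploits $T\alpha\in[\tfrac12,1]$ (the integer-rounding loss in $T=\lfloor 1/\alpha\rfloor$) to show the two Markov failure probabilities sum to strictly less than~$1$, yielding the bicriteria pair $(2k,\,4\opt_{\mathrm{CP}})$ with probability $\Omega(\alpha)$ and then a clean factor~$8$ after the budget rescaling. You would need the analogous estimate tracking $\lfloor k/\|f^*\|_1\rfloor\cdot\|f^*\|_1 / k$ to make ``careful accounting delivers the stated factor of~$8$'' a proof rather than a promissory note. You also need, as the paper does in Theorem~\ref{t:constant}, a separate argument for when $N$ would be tiny or huge (the paper handles $k<2d_{st}$ via the shortest-path Lemma~\ref{l:path}, and $\alpha\le 1/(4m)$ via the integral-path flow), so that the sampling loop runs in polynomial time with constant success probability.
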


The algorithm crucially uses a nice characterization of the optimal solutions to the convex program (Lemma~\ref{l:optimality}) to design a randomized path-rounding procedure (Section~\ref{ss:algorithm}) for Theorem~\ref{t:main}.

A simple example shows that the integrality gap of the convex program is at least two.
When the budget $k$ is much larger than the length of a shortest $s$-$t$ path, 
we show how to achieve an approximation ratio close to two with a randomized ``short'' path rounding algorithm (Section~\ref{ss:short}). 

\begin{theorem} \label{t:2+eps}
There is a $(2+O(\eps))$-approximation algorithm for the $s$-$t$ effective resistance network design problem, when $k \geq 2d_{st}/\eps^{10}$ where $d_{st}$ is the length of a shortest $s$-$t$ path.
\end{theorem}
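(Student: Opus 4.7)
My plan is to refine the randomized path-rounding of Theorem~\ref{t:main} by truncating the flow decomposition to ``short'' paths, exploiting the hypothesis $k \ge 2d_{st}/\eps^{10}$ that gives substantial headroom over the shortest-path length. First I would solve the convex relaxation to obtain the optimal fractional solution $x^*$ and its electrical flow $f^*$, with objective value $\Reff^* = \sum_e (f^*_e)^2/x^*_e \le \opt$. Then I would compute a path decomposition $f^* = \sum_i y_i \chi_{P_i}$, where the $P_i$ are $s$-$t$ paths and $y_i \geq 0$ with $\sum_i y_i = 1$.

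The new step is to discard every path whose length exceeds a threshold $L := d_{st}/\eps^{c}$ for a constant $c$ chosen below. To justify the truncation I would apply Cauchy-Schwarz to the $\ell_1$-norm of $f^*$:
\[
\sum_i y_i \ell(P_i) \;=\; \sum_e |f^*_e| \;=\; \sum_e \frac{|f^*_e|}{\sqrt{x^*_e}}\sqrt{x^*_e} \;\le\; \sqrt{\textstyle\sum_e (f^*_e)^2/x^*_e}\,\sqrt{\textstyle\sum_e x^*_e} \;\le\; \sqrt{\Reff^*\cdot k}.
\]
Since the shortest path is itself a feasible integer solution, $\Reff^* \le d_{st}$, so $\sum_i y_i \ell(P_i) \le \sqrt{d_{st}\,k}$. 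Combined with $k \ge 2d_{st}/\eps^{10}$ and Markov's inequality applied to path length, a suitable choice of $c$ makes the total $y$-weight of discarded long paths $O(\eps)$, so projecting $f^*$ onto the short paths inflates the convex objective by at most a $(1+O(\eps))$ factor.

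Next I would perform a randomized rounding analogous to the one behind Theorem~\ref{t:main}, but restricted to the surviving short paths, sampling them with probabilities tuned so that the expected number of distinct edges in the sampled union $H$ is at most $k$. To bound $\Reff_H(s,t)$ I would invoke Thomson's principle by exhibiting a unit $s$-$t$ flow in $H$ whose expected $\ell_2$-energy is at most $(2+O(\eps))\Reff^*$. The factor $2$ is intrinsic, matching the integrality-gap lower bound stated immediately before Theorem~\ref{t:2+eps}, while the $O(\eps)$ overhead absorbs the discarded long paths and the slack needed for the rounding to fit within budget.

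The principal obstacle will be proving the sharp $2+O(\eps)$ bound on the expected energy. A naive ``parallel-multigraph'' argument loses more than a factor of $2$, because the simple graph $H$ has strictly higher resistance than the multigraph of parallel sampled copies, so edge overlaps among sampled paths must be treated carefully. I plan to overcome this by invoking the optimality characterization from Lemma~\ref{l:optimality}, which forces the ``current'' $|f^*_e|/x^*_e$ to be constant on the unsaturated edges of $x^*$. This uniformity lets me construct the Thomson flow in $H$ by rescaling $f^*$ along sampled paths and correcting for random fluctuations, using the large-$k$ headroom to keep the energy inflation beyond $2\Reff^*$ within $O(\eps)\Reff^*$.
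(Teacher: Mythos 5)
Your high-level plan matches the paper's: truncate the flow decomposition to ``short'' paths to gain concentration, then redo the path rounding. But the specific threshold you propose creates a genuine gap. You set $L := d_{st}/\eps^{c}$ with $c$ a fixed constant, i.e.\ a cutoff independent of the budget $k$. Your Cauchy--Schwarz estimate gives $\E[\ell(P)] = \sum_e |f^*_e| \le \sqrt{d_{st}\,k}$, so Markov only shows that the discarded weight is $O(\eps)$ when $L \ge \sqrt{d_{st}\,k}/\eps$, i.e.\ $k \le d_{st}/\eps^{2c-2}$. The theorem's hypothesis $k \ge 2d_{st}/\eps^{10}$ is a \emph{lower} bound on $k$, so for $k$ larger than $d_{st}/\eps^{2c-2}$ your truncation discards a $1-o(1)$ fraction of the flow, and the argument collapses. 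The cutoff has to scale with the instance: the paper instead calls a path ``long'' if it uses at least $c\,\alpha\,x^*_F$ \emph{fractional} edges (with $c=1/\eps$), a quantity proportional to the flow--conductance ratio $\alpha$ and hence roughly $\sqrt{d_{st}k}$ in the regime of interest. Because $\alpha\,x^*_F = \sum_{e\in E_F} f^*_e$ is exactly the expected number of fractional edges on a decomposition path, a one-line Markov argument (Lemma~\ref{l:flow-value}) then bounds the discarded weight by $1/c = \eps$ uniformly in $k$.

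You also never identify what the truncation actually buys. The expected budget is already at most $k$ in the unmodified rounding (Lemma~\ref{l:budget}); the reason to truncate is to cap the per-iteration contribution so that a Chernoff--Hoeffding bound (Fact~\ref{f:hoeffding}, used in Lemma~\ref{l:budget-short}) shows $|E(H)| \le (1+\eps)k$ with probability $1 - \exp(-\Omega(1/\eps))$. That is the step that lets one scale the budget down only by $1+\eps$ instead of by $2$ when converting the bicriteria guarantee into a true one, and it is where the $\alpha$-dependent threshold is essential: the per-iteration cap $c\alpha x^*_F$ is $\alpha$ times the expectation, so the Hoeffding exponent becomes $-2\delta^2/(c^2\alpha)$ which is large precisely because the hypothesis $k \ge 2d_{st}/\eps^{10}$ forces $\alpha \le \eps^5$. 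Your proposal's ``Markov's inequality applied to path length'' and ``the slack needed for the rounding to fit within budget'' gestures at this but supplies neither a working cutoff nor a concentration bound. Finally, the ``parallel multigraph'' worry you raise is not the actual obstacle: the energy bound in the paper is obtained by scaling the sampled flow $F = \alpha\sum_i \chi_{P_i}$ to a unit flow in the \emph{simple} graph $H$ and computing $\ene_H(F)$ directly (as in Lemma~\ref{l:energy} and its short-path analogue Lemma~\ref{l:energy-short}); no multigraph-to-simple-graph comparison is needed.
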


\subsection{Other Results}

We consider some variants of the $s$-$t$ effective resistance network design problem, including the weighted version, the dual version, and the problem on special graphs.

There is a natural weighted generalization of the $s$-$t$ effective resistance network design problem, where we associate a cost $c_e$ and resistance $r_e$ to each edge $e$ of the input graph.

\begin{definition}[The weighted $s$-$t$ effective resistance network design problem]
\label{d:weighted}
The input is an undirected graph $G=(V,E)$ where each edge $e$ has a non-negative cost $c_e$ and a non-negative resistance $r_e$, two specified vertices $s, t \in V$, and a cost budget $k$.
The goal is to find a subgraph $H$ of $G$ that minimizes $\Reff_H(s,t)$ subject to the constraint that the total edge cost of $H$ is at most $k$. In the following, we may refer to this problem as the weighted problem for simplicity.
\end{definition}

In the weighted problem, the integrality gap of the convex program (Section~\ref{ss:convex}) becomes unbounded, even when the cost on the edges are the same ($c_e =1$ for all $e \in E$). This suggests that the weighted version may be strictly harder.
Indeed, we show stronger hardness result for the weighted problem assuming the small-set expansion conjecture~\cite{RS10,RST12}.

\begin{theorem} \label{t:SSE}
Assuming the small-set expansion conjecture, it is NP-hard to approximate the weighted $s$-$t$ effective resistance network design problem within a factor of $2 - \eps$ for any $\eps > 0$, even when $c_e=1$ for every edge $e$. 
\end{theorem}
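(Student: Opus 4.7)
My plan is to reduce from a problem with a known $(2-\eps)$-hardness under the small-set expansion (SSE) conjecture. The factor $2$ is not accidental: two edge-disjoint unit-resistance $s$-$t$ paths of length $L$ compose in parallel to effective resistance $L/2$, whereas a single such path contributes $L$. A gap instance should therefore force the algorithm to distinguish whether the budget $k$ allows two balanced, nearly edge-disjoint halves of $s$-$t$ connectivity, or essentially only one.

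Concretely, starting from a graph $H = (V, E)$ that is either a YES-instance (a balanced vertex set of small expansion exists) or a NO-instance (every balanced vertex set has expansion close to $1$) of a $(2-\eps)$-SSE-hard problem such as Balanced Separator or Minimum Bisection, I would build the network as follows. Introduce new terminals $s, t$. For each $v \in V$, add unit-cost ``connector'' edges $(s,v)$ and $(v,t)$ whose resistance $r_0$ is tuned to dominate the total resistance; include the edges of $H$ as unit-cost edges of small resistance, serving as ``crossings''. Choose the edge budget $k$ so that any feasible subgraph effectively selects a subset $A \subseteq V$ of $s$-connectors, a subset $B \subseteq V$ of $t$-connectors, and only enough $H$-edges to connect $A$ to $B$; intuitively $(A, B)$ encodes a partition of $V$ and the kept $H$-edges encode its cut capacity.

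In the YES case, take $A := S$ and $B := V \setminus S$ for the small-expansion set $S$: send $1/2$ unit through $A$ and $1/2$ unit through $B$, composing two balanced paths in parallel to achieve $\Reff(s,t)$ close to $r_0/2$. In the NO case, every balanced partition has a large cut, so within the budget $k$ no subgraph can support a comparably balanced unit $s$-$t$ flow; by Thomson's principle the optimal unit flow must be nearly $1$-concentrated on a single ``side'', forcing $\Reff(s,t) \ge (1-O(\eps))\, r_0 = (2-O(\eps)) \cdot r_0/2$.

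The hardest step will be the NO-case lower bound: I need it to hold against \emph{every} $k$-edge subgraph, not only the ``natural'' cut-induced ones. I plan to use the dual (Dirichlet) characterization $\Reff(s,t) = \max_{\varphi: \varphi(s) - \varphi(t) = 1} 1 / \sum_e (\varphi_u - \varphi_v)^2 / r_e$, and, from the absence of small balanced cuts in $H$, construct a potential $\varphi$ whose energy is at most $1/((1-O(\eps))\, r_0)$ on every feasible subgraph simultaneously. Turning the SSE-type expansion guarantee into such a uniformly valid potential certificate -- ideally using the level sets of $\varphi$ against the assumed expansion of all balanced subsets of $V$ -- is the technical core of the argument.
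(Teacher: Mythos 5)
Your high-level plan---reduce from a gap small-set-expansion problem, encode a vertex subset through which edges connect to $s$ and $t$ are bought, and read off $\Reff(s,t)$ from the density of the induced subgraph---is in the same family as the paper's reduction, and your intuition that a dense low-conductance set gives small $\Reff$ is the right engine. However, several concrete pieces of your plan diverge from what actually works, and one is a genuine gap.

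First, the source problem. The paper reduces from the gap SSE problem $\text{SSE}_\delta(\alpha,\beta)$, where the distinguished set has size $\delta|V|$ for \emph{small} $\delta$, not a balanced bisection. Your ``balanced separator / minimum bisection'' framing doesn't match: there is no known $(2-\eps)$-hardness for those problems under SSE, and more importantly the smallness of the set is what lets the paper set the cost budget to $k=\delta|V_X\cup V_Y|$ and make the budget the sole bottleneck. The paper also first passes through a bipartite intermediate problem (SSE on $d$-regular bipartite graphs), which you omit but which is essential: the bipartition $V_X,V_Y$ lines up cleanly with ``which side is hooked to $s$'' versus ``which side is hooked to $t$''.

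Second, the gadget. The paper uses $c_e=0,r_e=1$ on the bipartite edges and $c_e=1,r_e=0$ on the connector edges, so all bipartite edges may be kept for free and the budget only selects sets $X\subseteq V_X$, $Y\subseteq V_Y$ with $|X|+|Y|=k$. The unit-cost claim in the theorem is then recovered at the end by replacing each zero-resistance connector edge with a sufficiently long unit-cost path (a remark in the paper). Your plan to do unit costs directly with large resistance $r_0$ on connectors is substantially harder: the budget then trades off connector edges against $H$-edges, and you have given no argument that the optimal subgraph has the clean ``select $A,B$ plus a cut'' structure you need.

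Third, and most importantly, the factor of $2$ and the {\sc No}-case. In the paper the factor of $2$ does not come from ``two parallel $s$-$t$ routes vs.\ one''; after contracting $X$, $Y$, $Z=(V_X\setminus X)\cup(V_Y\setminus Y)$ to three vertices, the graph has a direct $X$-$Y$ block of parallel unit-resistance edges (one hop) and a detour $X$-$Z$-$Y$ (two hops). In the {\sc Yes}-case nearly all of $dk/2$ edges go directly $X$-$Y$, giving $\Reff\approx 2/(dk)$; in the {\sc No}-case almost all of the $dk/2$ edges leave $X\cup Y$, so the flow must take the two-hop detour of resistance $\approx 4/(dk)$. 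The ratio is $2$. As for your proposed proof of the {\sc No}-case lower bound: a \emph{single} potential $\varphi$ cannot be valid against every feasible subgraph $H$. For a fixed $\varphi$ the Dirichlet energy is monotone under adding edges, so $\mathcal{E}_H(\varphi)\leq\mathcal{E}_G(\varphi)$ for every $H\subseteq G$, and the bound $\Reff_H(s,t)\geq 1/\mathcal{E}_H(\varphi)$ can therefore never be pushed above $\Reff_G(s,t)$---the effective resistance of the \emph{full} graph, which is the trivial (and useless) lower bound. The argument must be subgraph-dependent: the paper first shows that WLOG $H$ keeps all zero-cost bipartite edges and exactly $k$ connectors (using Rayleigh monotonicity), extracts the sets $X,Y$ that $H$ picked, invokes the {\sc No}-case expansion of $X\cup Y$, contracts to a three-node series-parallel network, and computes $\Reff_{H'}$ explicitly. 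This per-$H$ contraction step is what your potential-certificate plan is missing and what you would need to supply to make the approach go through.
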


On the other hand, when the cost on the edges are the same, the following approximation follows from the randomized path rounding algorithm in a black box manner.

\begin{corollary}
There is a convex programming based $O(R)$-approximation randomized algorithm for the weighted $s$-$t$ effective resistance network design problem when $c_e=1$ for every edge $e$, where $R = \max_e r_e/ \min_e r_e$ is the ratio between the maximum and minimum resistance.
\end{corollary}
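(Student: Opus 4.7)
The plan is a black-box reduction: apply the $8$-approximation of Theorem~\ref{t:main} to an auxiliary instance in which every resistance is set to $1$, and then translate the guarantee back to the original weighted instance, paying a factor of $R$ in the process.

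Concretely, since $c_e=1$ for every $e$, the only feasibility constraint is $|E(H)|\le k$, which does not involve the resistances. I would construct an auxiliary unweighted instance $\mathcal I'$ on the same graph $G$, with the same terminals $s,t$, the same budget $k$, and every $r_e$ replaced by $1$, and then run the algorithm of Theorem~\ref{t:main} on $\mathcal I'$ to obtain (in expectation, or with constant probability) a subgraph $H^{\star}$ with at most $k$ edges satisfying
\begin{equation*}
\Reff^{\mathcal I'}_{H^{\star}}(s,t) \;\le\; 8 \cdot \opt(\mathcal I').
\end{equation*}

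To relate the objectives across the two instances I would use the standard monotonicity and positive homogeneity of effective resistance, both direct consequences of Thomson's principle: for any subgraph $H$,
\begin{equation*}
r_{\min}\cdot \Reff^{\mathcal I'}_H(s,t) \;\le\; \Reff^{\mathcal I}_H(s,t) \;\le\; r_{\max}\cdot \Reff^{\mathcal I'}_H(s,t).
\end{equation*}
Applying the left inequality to an optimal weighted subgraph yields $\opt(\mathcal I') \le \opt(\mathcal I)/r_{\min}$, and applying the right inequality to $H^{\star}$ then gives
\begin{equation*}
\Reff^{\mathcal I}_{H^{\star}}(s,t) \;\le\; r_{\max}\cdot \Reff^{\mathcal I'}_{H^{\star}}(s,t) \;\le\; 8\,r_{\max}\cdot \opt(\mathcal I') \;\le\; 8 R \cdot \opt(\mathcal I),
\end{equation*}
which is exactly the claimed $O(R)$-approximation.

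There is no real obstacle: the two comparison inequalities are elementary, and everything else is a one-line reduction. The conceptual point worth emphasizing is that with unit edge costs the combinatorial constraint $|E(H)|\le k$ is oblivious to the $r_e$'s, so the rounding procedure of Theorem~\ref{t:main}, which is designed for the unit-resistance setting, transfers verbatim and only the objective is distorted by at most a factor of $R$.
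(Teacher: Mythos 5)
Your proof is correct and realizes exactly what the paper indicates by its remark that the bound "follows from the randomized path rounding algorithm in a black box manner": you run Theorem~\ref{t:main} on the unit-resistance instance $\mathcal I'$ and transfer the guarantee via the two-sided comparison $r_{\min}\Reff^{\mathcal I'}_H(s,t)\le\Reff^{\mathcal I}_H(s,t)\le r_{\max}\Reff^{\mathcal I'}_H(s,t)$, which is immediate from Thomson's principle (Fact~\ref{f:Thomson}) since $r_{\min}\sum_e f_e^2\le\sum_e r_e f_e^2\le r_{\max}\sum_e f_e^2$ for any unit $s$-$t$ flow $f$. The only implicit point worth stating is that the instances $\mathcal I$ and $\mathcal I'$ have identical feasible sets (both are just $|E(H)|\le k$), so applying the left inequality to an optimal subgraph for $\mathcal I$ indeed upper bounds $\opt(\mathcal I')$, and $\Reff^{\mathcal I'}_{H^\star}(s,t)\le 8\,\opt(\mathcal I')<\infty$ guarantees $H^\star$ connects $s$ to $t$, so the right inequality applies with a finite value.
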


As our problem is related to electrical network design, it is natural to consider the special case when the input graph is a series-parallel graph.
In this setting, we can use dynamic programming to design an exact algorithm for the original problem, and a fully polynomial time approximation scheme (FPTAS) for the weighted problem.

\begin{theorem} \label{t:SP}
There is an exact algorithm for the $s$-$t$ effective resistance network design problem with running time $O(|E| \cdot k^2)$ when the input graph is a series-parallel graph.

There is a $(1+\eps)$-approximation algorithm for the weighted $s$-$t$ effective resistance network design problem when the input graph is a series-parallel graph.
The running time of the algorithm is $O(|E|^7 R^2 / \eps^2)$ where $R = \max_e r_e / \min_e r_e$ is the ratio between the maximum and minimum resistance.
By a simple preprocessing scaling step, we can assume that $R$ is bounded by a polynomial, and so the algorithm is a FPTAS for the weighted problem.
\end{theorem}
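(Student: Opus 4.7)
The plan is to run dynamic programming on the binary series-parallel (SP) decomposition tree $T$ of the input graph, where the leaves of $T$ correspond to the edges of $G$ and each internal node represents either a \emph{series} or \emph{parallel} composition of its two child subgraphs, with $s$ and $t$ the two terminals of the root. Such a tree has $O(|E|)$ nodes and is computable in polynomial time, and it is natural to propagate two-terminal information bottom-up.

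For the unweighted exact algorithm, I would define $f_v(k)$ to be the minimum effective resistance between the two terminals of the subgraph at node $v$ using at most $k$ edges, with the convention $f_v(k)=\infty$ if the terminals cannot be connected. The base case is a single-edge leaf with $f_v(1)=1$ and $f_v(0)=\infty$. For an internal node $v$ with children $v_1, v_2$, the electrical identities give
\begin{equation*}
f_v^{\text{ser}}(k) \;=\; \min_{k_1 + k_2 \le k} \bigl(f_{v_1}(k_1) + f_{v_2}(k_2)\bigr),
\qquad
f_v^{\text{par}}(k) \;=\; \min_{k_1 + k_2 \le k} \frac{f_{v_1}(k_1)\, f_{v_2}(k_2)}{f_{v_1}(k_1) + f_{v_2}(k_2)},
\end{equation*}
where in the parallel case the convention $\infty \parallel x = x$ automatically accounts for dropping one branch. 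Each combine step is a $(k{+}1)$-convolution taking $O(k^2)$ time, and over the $O(|E|)$ internal nodes of $T$ this yields an $O(|E| \cdot k^2)$ algorithm; correctness is immediate by structural induction on $T$.

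For the FPTAS on the weighted problem, I would discretize the resistance axis and dynamic-program over resistance buckets rather than over edge counts. Let $g_v(r)$ be the minimum total edge-cost of a subgraph at $v$ achieving terminal effective resistance at most $r$. Any achievable nonzero resistance lies in $[r_{\min}/|E|,\; |E|\, r_{\max}]$, so discretizing with uniform additive step $s=\eps r_{\min}/|E|^2$ produces $L=O(|E|^3 R/\eps)$ buckets. The recurrences are analogous to above, but the min is taken over pairs of bucketed $r$-values and one adds \emph{costs} rather than edge counts. Each combine step costs $O(L^2)$, and with $O(|E|)$ internal nodes the total running time is $O(|E|\cdot L^2) = O(|E|^7 R^2/\eps^2)$. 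A preprocessing scaling step (divide all resistances by $r_{\min}$, then round to integers after multiplying by a $\poly(|E|/\eps)$ factor) makes $R$ polynomial, so the algorithm is a bona fide FPTAS.

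The main obstacle is bounding the propagation of rounding error through $T$. I would prove the following per-node guarantee: if the two children hold values accurate to an additive $\pm \delta$, then in a series combination the error is additive ($\pm 2\delta$), and in a parallel combination the error stays bounded by $\pm 2\delta$ because each partial derivative of $(r_1, r_2) \mapsto r_1 r_2/(r_1+r_2)$ is bounded above by $1$. Summing these local errors over the $O(|E|)$ nodes of $T$ gives total additive error $O(|E|\cdot s) = O(\eps r_{\min}/|E|)$, which is an $\eps$-fraction of $\opt \ge r_{\min}/|E|$ (achieved in the worst case by $|E|$ copies of a minimum-resistance edge placed in parallel). This error-budget argument both justifies the $(1+\eps)$ guarantee and pins down the choice of $s$, and hence of $L$.
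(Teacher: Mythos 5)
Your proposal is correct and follows essentially the same structure as the paper's proof: both build the SP-decomposition tree and run a bottom-up DP indexed by budget (for the unit-cost exact algorithm) or by a discretized resistance axis (for the FPTAS), using the series/parallel combination rules for effective resistance. Your exact-algorithm DP and its $O(|E|\cdot k^2)$ analysis match the paper verbatim.

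The one place where you diverge is in the error analysis for the FPTAS. The paper rescales resistances, rounds intermediate results up to integers at every node, and proves a \emph{multiplicative} invariant by structural induction:
\[\ol{R}(v,b)\le\Bigl(1+\tfrac{\eps\,|E(G_v)|}{m}\Bigr)R(v,b),\]
so the relative error accumulates linearly in subtree size and is $(1+\eps)$ at the root. You instead track an \emph{additive} error per node, observe that both the series map $r_1+r_2$ and the parallel map $r_1r_2/(r_1+r_2)$ are $1$-Lipschitz in each coordinate (so errors from the two children pass through without amplification), conclude that the total additive error is $O(|E|\cdot s)$, and then convert to a $(1+\eps)$-approximation using the lower bound $\opt\ge r_{\min}/|E|$. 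These are morally equivalent bookkeeping schemes and give the same bucket count and running time; the paper's multiplicative version is arguably cleaner because the final claim pops out directly at the root without needing a separate lower bound on $\opt$, while your additive version has the merit of making the Lipschitz structure of the combine maps explicit.

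Two small cautions. First, your phrasing ``if the two children hold values accurate to $\pm\delta$ the error becomes $\pm2\delta$'' reads like a per-level doubling, which would be exponential in tree depth; what you actually use (and what is correct) is the per-subtree accounting that total error at $v$ is at most the sum of local rounding errors over the nodes below $v$, which is bounded by $(2|E(G_v)|-1)s$. Second, you should fix the rounding direction (always round up, as the paper does with $\lceil\cdot\rceil$) so that the tabulated cost function remains a valid certificate: rounding to the nearest bucket could produce entries that overstate what is achievable. Neither of these is a gap in the approach, just points where the writeup would need to be tightened.
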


We note that the integrality gap examples in Section~\ref{ss:convex} are actually series-parallel graphs, and so the dynamic programming algorithms go beyond the limitation of the natural convex program. 
We leave it as an open problem whether the weighted problem admits a constant factor approximation algorithm (possibly by combining these techniques).

We also consider the ``dual'' problem where we set the effective resistance as a hard constraint, and the objective is to minimize the number of edges in the solution subgraph.
We present similar results as the original problem in Section~\ref{ss:dual}.

\subsection{Related Work} \label{ss:previous}

In the survivable network design problem, we are given an undirected graph and a connectivity requirement $r_{u,v}$ for every pair of vertices $u,v$, and the goal is to find a minimum cost subgraph such that there are at least $r_{u,v}$ edge-disjoint paths for all $u,v$.
This problem is extensively studied and captures many interesting special cases~\cite{GGP+94,AKR95,GW95,GGT+09}.
The best approximation algorithm for this problem is due to Jain~\cite{Jai01}, who introduced the technique of iterative rounding to design a $2$-approximation algorithm.
His result has been extended in various directions, including element-connectivity~\cite{FJW01,CVV06}, directed graphs~\cite{Gab05,GGT+09}, and with degree constraints~\cite{LNSS09,EV14,FNR15,LZ15}.

Other combinatorial connectivity requirements were also considered.
A natural variation is to require $r_{u,v}$ internally vertex disjoint paths for every pair of vertices $u,v$.
This problem is much harder to approximate~\cite{KKL04,Lae14}, but there are good approximation algorithms for global connectivity~\cite{FL08,CV14} and when the maximum connectivity requirement is small~\cite{CCK08,CK09}.
Another natural problem is to require a path of length $l_{u,v}$ between every pair of vertices $u,v$.
This problem is also hard to approximate in general but there are better approximation algorithms when every edge has the same cost and the same length~\cite{DK99}.

Spectral connectivity requirements were also studied, 
including spectral gap~\cite{GB06,KMS+10} (closely related to graph expansion),
total effective resistances~\cite{GBS08}, and mixing time~\cite{BDX04}.
Some of the earlier works only proposed convex programming relaxations and heuristic algorithms.
Approximation guarantees are only obtained in two recent papers for the more general experimental design problem.
When every edge has the same cost, there is a $(1+\eps)$-approximation algorithm for minimizing the total effective resistance when the budget is at least $\Omega(|V|/\eps)$~\cite{NST18}, and there is a $(1+\eps)$-approximation algorithm for maximizing the spectral gap when the budget is at least $\Omega(|V|/\eps^2)$~\cite{ALS+17}.
For our problem, the interesting regime is when $k$ is much smaller than $|V|$, where the techniques in~\cite{ALS+17,NST18} do not apply. We have developed a set of new techniques for analyzing and rounding the solutions to the convex program that will hopefully find applications for solving related problems.




\subsection{Techniques} \label{ss:techniques}

Our main technical contribution is in designing rounding techniques for a convex programming relaxation of our problem.
There is a natural convex programming relaxation, by using the conductance of the edges as variables, and writing the $s$-$t$ effective resistance as the objective function and noting that it is convex with respect to the variables (Section~\ref{ss:convex}).

We show that optimal solutions of this convex program enjoy some nice properties\footnote{We can also show that there {\em exists} an optimal solution such that the fractional edges form a forest, but this is not included in the paper as we have not used this property in the rounding algorithm.}. Given an optimal fractional solution $x^*$ and the unit $s$-$t$ electrical flow $f^*$ supported in $x^*$, we derive from the KKT optimality conditions that there is a flow-conductance ratio $\alpha > 0$ such that $f^*_e = \alpha x^*_e$ for every fractional edge $e$ with $0 < x^*_e < 1$ and $f^*_e \geq \alpha$ for every integral edge $e$ with $x^*_e=1$.
The flow-conductance ratio $\alpha$ is crucial in the rounding algorithm and the analysis.

The rounding techniques in the two recent papers on experimental design~\cite{ALS+17,NST18} considered each edge/vector as a unit.
In~\cite{ALS+17}, a potential function as in spectral sparsification is used to guide a local search algorithm to swap two edges/vectors at a time to improve the current solution.
In~\cite{NST18}, a probability distribution on the edges/vectors is carefully designed for an independent randomized rounding.
These techniques are only known to work in the case when the solutions form a spanning set so that the ``contribution'' of each individual edge/vector is well-defined.
This is basically the reason why the results in~\cite{ALS+17,NST18} only apply when the budget $k$ is at least $\Omega(n)$.

Our approach is based on a randomized rounding procedure on $s$-$t$ paths.
Given $x^*$, we compute the unit $s$-$t$ electrical flow  $f^*$ supported in $x^*$,
and decompose $f^*$ as a convex combination of $s$-$t$ paths.
The rounding algorithm has $T=1/\alpha$ iterations (recall that $\alpha$ is the flow-conductance ratio of the optimal solution $x^*$),
where we pick a random path $P_i$ from the convex combination in each iteration, and return $H := \cup_{i=1}^T P_i$ as our solution. 
One difference from the previous techniques in the literature is that each unit in the rounding algorithm is a $s$-$t$ path, so in particular $s$ and $t$ are always connected in our solution. 
Another difference is that our problem has some extra structure, so that we can compute the electrical flow $f^*$ to guide our rounding procedure, where the variables $f^*_e$ are not in the convex program.
These allow us to obtain a constant factor approximation algorithm for all budget $k \geq d_{st}$ (note that when $k<d_{st}$ there is no feasible integral solution).

In the analysis, we prove in Lemma~\ref{l:budget} that the expected number of edges in $H$ is at most $k$, and in Lemma~\ref{l:energy} that the expected effective resistance is $\Reff_H(s,t) \leq 2\Reff_{x^*}(s,t)$.
To bound the expected effective resistance, we use Thomson's principle and construct a unit $s$-$t$ flow $F$ to show that $\Reff_H(s,t) \leq \ene_H(F) \leq 2\Reff_{x^*}(s,t)$.
To construct the unit $s$-$t$ flow $F$, we keep the flow-conductance ratio and send $\alpha$ units of flow on each sampled path $P_i$ (i.e. $f_e=\alpha$ and $x_e=1$).
The flow-conductance ratio plays a crucial role in the proofs of both lemmas.
This is because the rounding algorithm is based on the flow variables $f^*_e$, and thus the performance guarantees are in terms of $f^*_e$, but the ratio $\alpha$ allows us to relate them back to the variables $x^*_e$ in the convex program.
Combining the two lemmas give us a constant factor bicriteria approximation algorithm for the problem.
This can be turned into a true approximation algorithm by scaling down the budget to $k/2$ and run the bicriteria approximation algorithm with some additional claims (Section~\ref{ss:constant}).

The improvement on the approximation ratio when budget $k$ is large comes from two observations. 
The first is that if $k$ is much larger than the length of the shortest $s$-$t$ path, then the number of independent iterations in the rounding scheme is large (Lemma~\ref{l:alpha}). 
The second is that we can ignore some $s$-$t$ paths in the flow decomposition with many fractional edges without affecting the performance much.
Combining these, we can apply a Chernoff-Hoeffding bound to show that the number of edges is at most $(1+\eps)k$ with high probability.
Then it is not necessary to scale down the budget by a factor of $2$ and we can prove a stronger bound that the effective resistance is at most $2+O(\eps)$ times the optimal value.

\subsection{Organization}

In Section~\ref{s:background}, we define the notations used in this paper and cover background knowledge on effective resistances.
We present the convex programming relaxation and our two rounding procedures in Section~\ref{s:rounding}, and the dynamic programming algorithm in Section~\ref{s:dynamic}.
The NP-hardness and small set expansion hardness results are provided in Section~\ref{s:hardness}. 

\section{Preliminaries} \label{s:background}

We introduce the notations and definitions for graphs and matrices in Section~\ref{ss:graph}, and then define electrical flow and effective resistance and state some basic results in Section~\ref{ss:electric}.

\subsection{Graphs and Matrices} \label{ss:graph}

Let $G=(V,E)$ be an undirected graph with edge weight $w_e\geq 0$ on each edge $e \in E$.
The number of vertices and the number of edges are denoted by $n:=|V|$ and $m:=|E|$.
For a subset of edges $F \subseteq E$, the total weight of edges in $F$ is $w(F) := \sum_{e \in F} w_e$.
For a subset of vertices $S \subseteq V$, the set of edges with one endpoint in $S$ and one endpoint in $V-S$ is denoted by $\delta(S)$.
For a vertex $v$, the set of edges incident on a vertex $v$ is $\delta(v):=\delta(\{v\})$, and the weighted degree of $v$ is $\deg(v) := w(\delta(v))$.
The volume of a set $\vol(S) := \sum_{v \in S} \deg(v)$ is defined as the sum of the weighted degrees of vertices in $S$.
The conductance of a set $\phi(S) := w(\delta(S))/\vol(S)$ is defined as the ratio of the total weight on the boundary of $S$ to the total weighted degrees in $S$.
For two subsets $S_1,S_2 \subseteq V$, the set of edges with one endpoint in $S_1$ and one endpoint in $S_2$ is denoted by $E(S_1,S_2)$.

In this paper, an undirected graph $G=(V,E)$ with non-negative edge weights $w \in \R^E$ is interpreted as an electrical network, where each edge $e$ is a resistor with conductance $w_e$ (not to be confused with the conductance $\phi(S)$ of a set $S$ as defined above), or equivalently with resistance $r_e := 1/w_e$. 
The adjacency matrix $A \in \R^{V \times V}$ of the graph is defined as $A_{u,v} = w_{u,v}$ for all $u,v \in V$.
The Laplacian matrix $L \in \R^{V \times V}$ of the graph is defined as $L = D - A$ where $D \in \R^{V \times V}$ is the diagonal degree matrix with $D_{u,u} = \deg(u)$ for all $u \in V$.
For each edge $e = uv \in E$, let $b_e := \chi_u - \chi_v$ where $\chi_u \in \R^n$ is the vector with one in the $u$-th entry and zero otherwise. 
The Laplacian matrix can also be written as 
\[
L = \sum_{e \in E} w_e b_e b_e^T.
\]
Let $\lambda_1 \leq \lambda_2 \leq \ldots \leq \lambda_n$ be the eigenvalues of $L$ with corresponding orthonormal eigenvectors $v_1, v_2, \ldots, v_n$
so that $L = \sum_{i=1}^n \lambda_i v_i v_i^T$.
It is well-known that the Laplacian matrix is positive semidefinite and $\lambda_1=0$ with $v_1 = \vec{1}/\sqrt{n}$ as the corresponding eigenvector,
and $\lambda_2 > 0$ if and only if $G$ is connected.
The pseudo-inverse of the Laplacian matrix $L$ of a connected graph is defined as 
\[L^\dagger = \sum_{i=2}^n \frac{1}{\lambda_i} v_i v_i^T,\]
which maps every vector $x$ orthogonal to $v_1$ to a vector $y$ such that $Ly=x$.

\subsection{Electrical Flow and Effective Resistance} \label{ss:electric}

Before defining $s$-$t$ electrical flow,
we first define the standard unit $s$-$t$ flow.
For each edge $e=uv$, we have two variables $f(uv)$ and $f(vu)$ with $f(uv) = -f(vu)$, where $f(uv)$ is positive if the flow is going from $u$ to $v$ and negative otherwise.
A unit $s$-$t$ flow $f$ satisfies the following flow conservation constraints:
\[ \sum_{u \in \delta(v)} f(vu) =
  \begin{cases}
    1 & v = s\\
    -1 & v = t\\
    0 & \text{otherwise.} 
  \end{cases}\]
Given a unit $s$-$t$ flow $f$, we overload the notation and define its undirected flow vector $f: E \to \R_{\geq 0}$ with $f_e := |f(uv)|$ for each edge $e = uv$.
A unit $s$-$t$ electrical flow is a unit $s$-$t$ flow $f$ that also satisfies the Ohm's law:
There exists a potential vector $\varphi \in \R^V$ such that for all $u,v \in V$,
\begin{equation*}
f(uv) = w_{uv} \cdot (\varphi(u) - \varphi(v)). 
\end{equation*}
The effective resistance between $s$ and $t$ is defined as 
\begin{equation*}
\Reff(s,t) := \varphi(s)-\varphi(t),  
\end{equation*}
which is the potential difference between $s$ and $t$ when one unit of electrical flow is sent from $s$ to $t$.
The $s$-$t$ effective resistance can be interpreted as the resistance of the whole graph $G$ as a big resistor when an electrical flow is sent from $s$ to $t$.

One can write the effective resistance in terms of the Laplacian matrix.
For $u,v \in V$, let $b_{uv} = \ch_u - \ch_v$, where $\ch_v \in \R^n$ is the unit vector with $1$ in the $v$-th entry and $0$ in other entries. 
Combining the flow conservation constraint and the Ohm's law, it can be checked that the potential vector $\varphi \in \R^V$ of a unit $s$-$t$ electrical flow is a solution to the linear system
\[L \cdot \varphi = b_{st}.\]
Note that $\varphi = L^\dagger b_{st}$ is a solution, and if $G$ is connected then any solution is given by $p+c\cdot \vec 1$ for $c\in\mathbb{R}$.
Therefore, we can write
\[\Reff(s,t) = \varphi(s) - \varphi(t) = b_{st}^T L^\dagger b_{st}.\]

The effective resistance can also be characterized by the energy of a flow.
The energy 
of an $s$-$t$ flow $f$ is defined as
\[\ene(f) := \sum_{e \in E} \frac{f_e^2}{w_e} = \sum_{e \in E} r_e f_e^2.\]
Thomson's principle~\cite{KT67} states that the unit $s$-$t$ electrical flow is the unique unit $s$-$t$ flow that minimizes the energy.
This can be verified by writing down the optimality condition of the minimization problem.
Moreover, this energy is exactly the $s$-$t$ effective resistance.
To see this, note that the flow value on edge $uv$ in the unit $s$-$t$ electrical flow satisfies $f(uv) = w_{uv} \cdot (\varphi(u)-\varphi(v)) = w_{uv} \cdot b^T_{uv} L^\dagger b_{st}$ and thus
\begin{equation*}
\ene(f) 
= \sum_{uv \in E} w_{uv} (b^T_{uv} L^\dagger b_{st})^2
= b_{st}^T L^\dagger \left(\sum_{uv \in E} w_{uv} b_{uv} b_{uv}^T \right) L^\dagger b_{st}
= b_{st}^T L^\dagger L L^\dagger b_{st} 
= \Reff(s,t).
\end{equation*}
To summarize, we will use the following result from Thomson's principle.
\begin{fact}[Thomson's principle~\cite{KT67}] \label{f:Thomson}
Let $f^*$ be the unit electrical $s$-$t$ flow in $G$.  Then
\[{\rm Reff}_G(s,t) = \min_f \{ \ene(f)~|~f {\rm~is~a~unit~} s\text{-}t {\rm~flow~in~} G\} =
\ene(f^*).\] 
\end{fact}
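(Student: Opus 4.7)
The plan is to establish the two equalities separately: first, that the electrical flow $f^*$ minimizes the energy $\ene(\cdot)$ among all unit $s$-$t$ flows; second, that $\ene(f^*)$ equals the effective resistance $\Reff_G(s,t)$.

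For the minimization claim, I would use the standard orthogonality/Pythagorean argument. Let $g$ be any unit $s$-$t$ flow and set $c := g - f^*$, viewed as a signed flow. Because $g$ and $f^*$ both carry one unit from $s$ to $t$ and obey flow conservation everywhere else, $c$ is a \emph{circulation}: at every vertex $v$, the signed net flow $\sum_{u \sim v} c(vu) = 0$. Expanding the energy,
\[
\ene(g) \;=\; \sum_{e \in E} r_e (f^*_e + c_e)^2 \;=\; \ene(f^*) \;+\; 2\sum_{e \in E} r_e f^*_e c_e \;+\; \ene(c).
\]
The cross term vanishes by Ohm's law: writing $e = uv$ and using $r_e f^*(uv) = \varphi(u) - \varphi(v)$, the sum $\sum_{e=uv} r_e f^*(uv) c(uv) = \sum_{e=uv} (\varphi(u)-\varphi(v)) c(uv)$ reorganizes (discrete integration by parts) to $\sum_v \varphi(v) \cdot \big(\text{net out-flow of } c \text{ at } v\big) = 0$ since $c$ is a circulation. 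Hence $\ene(g) = \ene(f^*) + \ene(c) \ge \ene(f^*)$, with equality iff $c \equiv 0$ on every edge of positive resistance. This gives both existence of the minimum and uniqueness of $f^*$ as the minimizer.

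For the energy identity, I would plug Ohm's law directly into the definition. Letting $\varphi = L^\dagger b_{st}$ be the potential (so that $L\varphi = b_{st}$ as derived in the excerpt), we have $f^*(uv) = w_{uv}(\varphi(u) - \varphi(v))$ for each edge $uv$, so
\[
\ene(f^*) \;=\; \sum_{uv \in E} r_{uv} \bpar{w_{uv}(\varphi(u)-\varphi(v))}^2 \;=\; \sum_{uv \in E} w_{uv} (\varphi(u)-\varphi(v))^2 \;=\; \varphi^T L \varphi.
\]
Since $L\varphi = b_{st}$, this equals $\varphi^T b_{st} = \varphi(s) - \varphi(t)$, which is the definition of $\Reff_G(s,t)$. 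Equivalently, using the matrix form $\varphi = L^\dagger b_{st}$, we get $\varphi^T L \varphi = b_{st}^T L^\dagger L L^\dagger b_{st} = b_{st}^T L^\dagger b_{st} = \Reff_G(s,t)$, where the middle step uses that $b_{st} \perp \vec{1}$ so $L^\dagger L$ acts as the identity on $b_{st}$.

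The only mildly subtle step is the vanishing of the cross term; this is essentially a discrete divergence theorem and the place where flow conservation is used in an essential way. Everything else is a direct calculation using Ohm's law and the definition of $L^\dagger$, so no substantial obstacle is expected.
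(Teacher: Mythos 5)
Your proof is correct. For the energy identity $\ene(f^*) = \Reff_G(s,t)$, your computation via $\varphi^T L \varphi = b_{st}^T L^\dagger L L^\dagger b_{st}$ is essentially identical to the calculation the paper gives in Section~\ref{ss:electric}. For the minimization claim, the paper merely asserts that it ``can be verified by writing down the optimality condition of the minimization problem'' without elaborating, whereas you supply the standard variational argument: decompose any competing unit flow as $g = f^* + c$ with $c$ a circulation, and show the cross term $\sum_e r_e f^*_e c_e$ vanishes by Ohm's law plus discrete summation by parts (since potentials are constant on the support of the divergence of $c$, which is zero). That route buys you a self-contained and fully explicit proof, including the Pythagorean identity $\ene(g) = \ene(f^*) + \ene(c)$ which also gives uniqueness of the minimizer for free; the paper's phrasing instead points the reader to the KKT-style optimality condition of the convex quadratic minimization, which is equivalent but left implicit. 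One small presentational caution: in the expansion $\sum_e r_e (f^*_e + c_e)^2$, the symbols $f^*_e$, $c_e$ must be read as \emph{signed} flow values $f^*(uv)$, $c(uv)$ under a fixed orientation (not the nonnegative undirected flow vector the paper denotes $f_e = |f(uv)|$), since $c = g - f^*$ only makes sense for signed flows; the energy is the same either way because it is a sum of squares, but the cross-term argument needs the signed convention.
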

A corollary of Thomson's principle is the following intuitive result known as the Rayleigh's monotonicity principle.
\begin{fact}[Rayleigh's monotonicity principle] \label{f:monotonicity}
The $s$-$t$ effective resistance cannot increase if the resistance of an edge is decreased.
\end{fact}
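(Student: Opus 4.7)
The plan is to deduce Rayleigh's monotonicity principle as a direct consequence of Thomson's principle (Fact~\ref{f:Thomson}). Let $G$ denote the original graph with resistances $(r_e)_{e \in E}$, and let $G'$ be obtained from $G$ by decreasing the resistance of a single edge $e^*$ from $r_{e^*}$ to some $r'_{e^*} \leq r_{e^*}$, with all other resistances unchanged; the general case follows by applying this step to one edge at a time.

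The key observation is that $G$ and $G'$ have the same underlying vertex and edge sets, so the set of unit $s$-$t$ flows (which depends only on combinatorial structure via the conservation constraints) is identical in $G$ and $G'$. Let $f^*$ be the unit $s$-$t$ electrical flow in $G$. Then $f^*$ is a feasible unit $s$-$t$ flow in $G'$ as well. Writing $\ene_G(f) = \sum_e r_e f_e^2$ and $\ene_{G'}(f) = \sum_{e \neq e^*} r_e f_e^2 + r'_{e^*} f_{e^*}^2$, we immediately get $\ene_{G'}(f^*) \leq \ene_G(f^*)$ since $r'_{e^*} \leq r_{e^*}$ and $f_{e^*}^{*2} \geq 0$.

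Applying Thomson's principle to $G$ gives $\Reff_G(s,t) = \ene_G(f^*)$, and applying it to $G'$ together with the feasibility of $f^*$ gives $\Reff_{G'}(s,t) \leq \ene_{G'}(f^*)$. Chaining the three inequalities yields
\[
\Reff_{G'}(s,t) \leq \ene_{G'}(f^*) \leq \ene_G(f^*) = \Reff_G(s,t),
\]
which is the desired conclusion.

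There is no real obstacle here: the entire argument rides on Thomson's variational characterization, and the only substantive point is noting that the feasibility region of the minimization is unaffected by changing edge resistances, so the electrical flow of $G$ serves as a convenient test flow in $G'$.
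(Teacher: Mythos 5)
Your proof is correct and is exactly the argument the paper has in mind: the paper states Rayleigh's monotonicity as ``a corollary of Thomson's principle'' without spelling out the details, and your derivation—using the electrical flow of $G$ as a test flow in $G'$ and invoking the variational characterization—is precisely the standard way to make that corollary explicit.
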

We will also use the following result to write a convex programming relaxation of our problem.
\begin{fact}[\cite{GBS08}] \label{f:convex}
The $s$-$t$ effective resistance is a convex function with respect to the conductance of the edges.
\end{fact}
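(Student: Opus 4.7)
The plan is to use Thomson's principle to reexpress the effective resistance as a minimum over flows, and then to invoke the classical fact that the partial minimization of a jointly convex function over a convex set is convex in the remaining variables. Concretely, let $w = (w_e)_{e \in E}$ denote the conductance vector. By Fact~\ref{f:Thomson},
\begin{equation*}
\Reff_w(s,t) \;=\; \min_f \sum_{e \in E} \frac{f_e^2}{w_e},
\end{equation*}
where the minimum is taken over unit $s$-$t$ flows $f$, i.e.\ vectors satisfying the linear flow-conservation constraints. So the strategy reduces to showing that the objective is jointly convex in $(f, w)$ and that the feasible set is convex in $(f, w)$.

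First, I would observe that the feasible region is an affine subspace in $f$ (independent of $w$), hence convex as a subset of $(f, w)$-space. Next, I would verify that for each edge $e$ the map $(f_e, w_e) \mapsto f_e^2 / w_e$ is convex on $\R \times \R_{>0}$. This is the perspective of the convex function $t \mapsto t^2$, and a direct Hessian calculation shows positive semidefiniteness: the Hessian equals $\tfrac{2}{w_e^3}\begin{pmatrix} w_e \\ -f_e\end{pmatrix}\begin{pmatrix} w_e \\ -f_e\end{pmatrix}^{\!\top}$, which is rank one and PSD. Summing over $e$ preserves convexity, so $\ene_w(f) = \sum_e f_e^2/w_e$ is jointly convex in $(f, w)$.

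Finally, I would apply the standard result that if $g(x, y)$ is jointly convex on a convex set $C$ and $h(y) := \min_{x : (x,y) \in C} g(x, y)$ is finite, then $h$ is convex in $y$. Applied to $g(f, w) = \ene_w(f)$ with $C$ the set of $(f, w)$ with $f$ a unit $s$-$t$ flow and $w \geq 0$, this yields convexity of $\Reff_w(s,t)$ in $w$.

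The only delicate point, and the one I would treat carefully in a formal write-up, is the boundary behavior when some $w_e = 0$. Using the standard perspective-function convention $f_e^2/w_e := 0$ if $f_e = 0 = w_e$ and $+\infty$ if $f_e \neq 0 = w_e$, the joint convexity extends to $\R \times \R_{\geq 0}$, and the minimum is still attained by restricting to flows that vanish on edges with $w_e = 0$ (these edges are effectively removed, and by Rayleigh's monotonicity principle this does not change $\Reff$). Up to this bookkeeping, the proof is a short application of convex analysis, and no other step appears obstructive.
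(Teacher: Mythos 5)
Your proof is correct. Note that the paper itself does not prove this fact; it states it as Fact~\ref{f:convex} and cites~\cite{GBS08} for it. The argument in~\cite{GBS08} is matrix-analytic: one writes $\Reff_w(s,t) = b_{st}^T L_w^\dagger b_{st}$, observes that $L_w$ is affine in $w$, and invokes the joint convexity of the matrix fractional function $(b,M)\mapsto b^T M^{-1} b$ on $\R^n \times \mathbb{S}^n_{++}$ (restricting to the subspace orthogonal to $\vec{1}$ to handle the pseudoinverse). Your route is genuinely different and somewhat more elementary: you use Thomson's principle (Fact~\ref{f:Thomson}) to write $\Reff_w(s,t)$ as a partial minimization of the energy $\ene_w(f)=\sum_e f_e^2/w_e$ over the affine set of unit $s$-$t$ flows, verify that each term $(f_e,w_e)\mapsto f_e^2/w_e$ is jointly convex (it is the perspective of $t\mapsto t^2$, and your rank-one Hessian factorization is correct), and then apply the standard fact that partial minimization of a jointly convex function over a convex set preserves convexity. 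Your careful handling of the boundary $w_e=0$ via the perspective-function convention is also the right way to make the argument airtight on all of $\R_{\geq 0}^E$. The matrix-analytic proof is shorter if one already has the matrix fractional function in hand; your proof avoids any matrix inverses and makes the role of Thomson's principle explicit, which dovetails nicely with how the rest of Section~\ref{s:rounding} actually uses the energy characterization.
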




\section{Convex Programming Algorithm} \label{s:rounding}

In this section, we analyze a convex programming relaxation for our problem.
We first describe the convex program and prove a characterization of the optimal solutions in Section~\ref{ss:convex}.
We then present a randomized rounding algorithm using flow decomposition in Section~\ref{ss:algorithm}, and show that it is a constant factor bicriteria approximation algorithm in Section~\ref{ss:bicriteria}.
Then, we show how to convert the bicriteria approximation algorithm into a true approximation algorithm in Section~\ref{ss:constant}, and how to modify the algorithm slightly to achieve a better approximation guarantee when the budget $k$ is large in Section~\ref{ss:short}.
Finally, we discuss the dual problem of minimizing the cost while satisfying the effective resistance constraint in Section~\ref{ss:dual}.

\subsection{Convex Programming Relaxation} \label{ss:convex}

The formulation is for the weighted problem,
where each edge has a weight $w_e := 1/r_e$.
We introduce a variable $x_e$ for each edge $e$ to indicate whether $e$ is chosen in our subgraph. 
Let $$L_x := \sum_{e \in E} x_e w_e b_e b_e^T$$ be the Laplacian matrix of the fractional solution $x$, and $\Reff_x(s,t)$ be the $s$-$t$ effective resistance of the graph with conductance $x_e w_e$ on edge $e\in E$. 
The following is a natural convex programming relaxation for the problem.

\begin{equation}
\begin{aligned}
& \underset{x \in \R^m}{\min} 
& & \Reff_x(s,t) = b_{st}^T L_x^\dagger b_{st}\\
& \text{subject to} & & \sum_{e \in E} c_e  x_e \leq k, \\
&&& 0 \leq x_e \leq 1, \qquad \forall e \in E.
\end{aligned} \tag{CP} \label{eq:P}
\end{equation}

This is an exact formulation if $x_e \in \{0,1\}$ for all $e \in E$.
The objective function is convex in $x$ by Fact~\ref{f:convex}.
The convex program can be solved in polynomial time by the ellipsoid method to inverse exponential accuracy, or by the techniques described in~\cite{ALS+17} to inverse polynomial accuracy, which are both sufficient for the rounding algorithm.

\subsubsection{Integrality Gap Examples} \label{ss:gap}

We show some limitations of the convex program for general $w_e$ and $c_e$.
The following figure shows a simple example where the integrality gap is unbounded if the cost could be arbitrary.

\begin{figure}[!ht]
\centering
\resizebox{0.65\textwidth}{!}{
\begin{tikzpicture}
	\tikzset{VertexStyle/.style = {shape = circle, fill = black, minimum size = 0.2}}
	\Vertex[Math,x=0,y=0,LabelOut,Lpos=180]{s}
	\Vertex[x=10,y=0,LabelOut]{t}	
	
    \Vertex[x=2,y=1,NoLabel]{v1}
    \Vertex[x=4,y=1,NoLabel]{v2}
    \node (dots) at (5, 1) {$\dots$};
    \Vertex[x=6,y=1,NoLabel]{v3}
    \Vertex[x=8,y=1,NoLabel]{v4}
    
    \Vertex[x=5,y=0,NoLabel]{v5}
    
    \Edge(v1)(v2)
    \Edge(v3)(v4)
	
	\Edge(t)(v5)
    \Edge(v5)(s)
    
    \Edge(s)(v1)
    \Edge(v4)(t)
\end{tikzpicture}
}
\caption{Integrality gap example with arbitrary cost and unit resistance.} 
\label{fig:intgapACUR}
\end{figure}
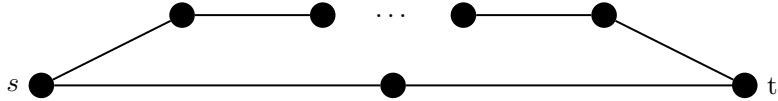

In this graph, the top path has length $n\!-\!2$ where each edge has cost $1/(n\!-\!2)$.
The bottom path has two edges with cost $1$. 
The resistance of each edge is $1$, and the budget is $k = 1$. 
The integrality gap of this example is $\Omega(n)$.
To see this, the integral solution can only afford the top path,
and the effective resistance is $n\!-\!2$.
However, the fractional solution can set $x_e=1/2$ for each of the two bottom edges, and the effective resistance of this fractional solution is $4$.

The following figure shows another simple example where the integrality gap is unbounded if the edge costs are the same but the resistances could be arbitrary.

\begin{figure}[!ht]
\centering
\resizebox{0.65\textwidth}{!}{
\begin{tikzpicture}
	\tikzset{VertexStyle/.style = {shape = circle, fill = black, minimum size = 0.2}}
	\Vertex[Math,x=0,y=0,LabelOut,Lpos=180]{s}
	\Vertex[x=10,y=0,LabelOut]{t}	
	
    \Vertex[x=2,y=1,NoLabel]{v1}
    \Vertex[x=4,y=1,NoLabel]{v2}
    \node (dots) at (5, 1) {$\dots$};
    \Vertex[x=6,y=1,NoLabel]{v3}
    \Vertex[x=8,y=1,NoLabel]{v4}
    
    \Edge(v1)(v2)
    \Edge(v3)(v4)
	\Edge(t)(s)	
    
    \Edge(s)(v1)
    \Edge(v4)(t)
	
\end{tikzpicture}
}
\caption{Integrality gap example with arbitrary resistance and unit cost.}

\end{figure}
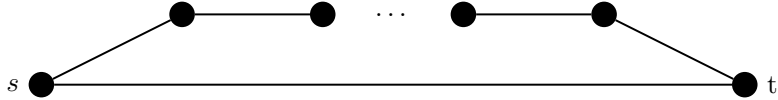

In this example, the top path has length $n\!-\!1$ with each edge of resistance $1$.
The bottom path has only one edge with resistance $R$. 
All edges have cost $1$ and the budget $k = n\!-\!2$.
The integral solution can only afford the bottom path, 
with effective resistance $R$. 
The fractional solution can set $x_e=(n\!-\!2)/(n\!-\!1)$ for each edge in the top path,
with effective resistance $O(n)$.
When $R \gg n$, the integrality gap could be arbitrarily large.

Even in the unit-cost unit-resistance case, the integrality gap is unbounded if $k$ is smaller than the $s$-$t$ shortest path distance.
Henceforth, in view of these observations we assume the following in the rest of this section.

\begin{assumption} \label{a:assumption}
We assume that $c_e=w_e=r_e=1$ for every edge $e \in E$, which is the setting of the $s$-$t$ effective resistance network design problem,
and the budget $k$ is at least the shortest path distance $d_{st}$ between $s$ and $t$ in the input graph.
\end{assumption} 

The integrality gap of the convex program is still at least two with Assumption~\ref{a:assumption}.
For a simple example, consider a graph with two vertex-disjoint $s$-$t$ paths, each of length $k/2 + 1$, and the budget is $k$.
Then the optimal integral value is $k/2 + 1$ while the optimal fractional value is close to $k/4$, and so the integrality gap gets arbitrarily close to two.

We will show that the integrality gap of the convex program is at most $8$ with these assumptions.
Note that just to connect $s$ and $t$, then $k$ must be at least the $s$-$t$ shortest path distance.
It is interesting that this small additional assumption could reduce the integrality gap from unbounded to a constant.

%
%
%

\subsubsection{Characterization of Optimal Solutions} \label{ss:optimality}

In the case $c_e=w_e=r_e=1$ for all edges $e \in E$, we will prove that the electrical flow $f^*$ supported in the optimal solution $x^*$ to~(\ref{eq:P}) satisfies a crucial property about the flow-conductance ratio $f^*_e/x^*_e$.

\begin{lemma}[Characterization of Optimal Solution] \label{l:optimality}
Let $G=(V,E)$ be the input graph with $c_e=w_e=1$ for all edges $e \in E$.
Let $x^*:E \to \R_{\geq 0}$ be an optimal solution to the convex program~\eqref{eq:P}.
Let $E_F \subseteq E$ be the set of fractional edges with $0 < x^*_e < 1$, 
and $E_I \subseteq E$ be the set of integral edges with $x^*_e = 1$.
Let $f^*: E \to \R_{\geq 0}$ be the undirected flow vector of the unit $s$-$t$ electrical flow supported in $x^*$.
There exists $\alpha > 0$ such that
\[
f^*_{e} = \alpha x^*_{e} ~~~\forall e \in E_F
\quad {\rm and} \quad
f^*_{e} \geq \alpha ~~~\forall e \in E_I.
\]
\end{lemma}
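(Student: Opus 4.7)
The plan is to apply the KKT optimality conditions to the convex program~\eqref{eq:P}, with the electrical flow $f^*$ emerging naturally from the stationarity conditions as (a rescaling of) the gradient of the objective.

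First I would compute $\partial \Reff_x(s,t)/\partial x_e$ at $x=x^*$. Since $\Reff_x(s,t) = b_{st}^{T} L_x^{\dagger} b_{st}$ and $\partial L_x^{\dagger}/\partial x_e = -L_x^{\dagger}(w_e b_e b_e^{T}) L_x^{\dagger}$ (valid for perturbations that preserve the support of $x^*$), this derivative equals $-w_e\bigl(b_e^{T} L_{x^*}^{\dagger} b_{st}\bigr)^2$. I would then recognize $x^*_e w_e \, b_e^{T} L_{x^*}^{\dagger} b_{st}$ as the electrical flow value $f^*_e$ on edge $e=uv$ by applying Ohm's law to the potential $\varphi^* = L_{x^*}^{\dagger} b_{st}$, so that, under the $w_e=1$ assumption of Assumption~\ref{a:assumption}, the gradient simplifies to $-(f^*_e/x^*_e)^2$ for any edge $e$ with $x^*_e > 0$.

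Next I would write the KKT stationarity conditions with multipliers $\lambda \geq 0$ for $\sum_e x_e \leq k$ and $\mu_e,\nu_e \geq 0$ for $x_e \leq 1$ and $x_e \geq 0$:
\[
-\frac{(f^*_e)^2}{(x^*_e)^2} + \lambda + \mu_e - \nu_e = 0.
\]
For a fractional edge $e \in E_F$, complementary slackness forces $\mu_e=\nu_e=0$, hence $(f^*_e/x^*_e)^2 = \lambda$, so the flow-to-conductance ratio is a single constant across all of $E_F$. For an integral edge $e \in E_I$ (where $x^*_e = 1$), only $\nu_e=0$ must hold, so $(f^*_e)^2 = \lambda + \mu_e \geq \lambda$. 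Setting $\alpha := \sqrt{\lambda}$ then yields both conclusions of the lemma.

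The step I expect to be most delicate is establishing the strict positivity $\alpha > 0$, since the KKT argument only delivers $\lambda \geq 0$. The worry is a degenerate optimum where $\lambda = 0$ and the identities above force $f^*_e = 0$ on every fractional edge. To rule this out I would pass to a minimum-support optimal solution: removing any edge $e$ with $x^*_e > 0$ from such an $x^*$ would strictly increase $\Reff$, and by Thomson's principle this forces $f^*_e > 0$ on every edge of the support. When $E_F \neq \emptyset$ this gives $\lambda > 0$ via the equality $(f^*_e/x^*_e)^2 = \lambda$; when $E_F = \emptyset$ the first conclusion is vacuous and I would simply take $\alpha := \min_{e \in E_I} f^*_e > 0$, which satisfies the second conclusion by the same minimum-support reasoning.
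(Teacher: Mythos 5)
Your proposal is correct and follows essentially the same route as the paper: compute $\partial_{x_e}\Reff_{x}(s,t) = -\bigl(b_e^{T}L_{x^*}^{\dagger}b_{st}\bigr)^2$, interpret that quantity via Ohm's law as $-(f^*_e/x^*_e)^2$, write the KKT stationarity and complementary slackness conditions, and read off a common ratio $\alpha$ on $E_F$ and the lower bound on $E_I$. The one place you handle things a bit more carefully than the paper is the strict positivity of $\alpha$. The paper dispenses with the degenerate case by saying that if the budget multiplier vanishes then all fractional edges carry zero flow and can be deleted, ``and we have an integral solution,'' leaving the reader to supply that the same zero-flow reasoning also guarantees a positive lower bound over $E_I$. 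Your minimum-support argument makes this explicit: on a support-minimal optimum every edge must carry nonzero flow (else Rayleigh/Thomson lets you drop it without changing $\Reff$, contradicting minimality), which gives $\lambda>0$ when $E_F\neq\emptyset$ and lets you set $\alpha:=\min_{e\in E_I}f^*_e>0$ when $E_F=\emptyset$. This is a cleaner way of saying what the paper leaves implicit; it is not a different proof, just a tidier treatment of the degenerate case.
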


\begin{proof}
By removing edges with $x^*_e=0$, we can assume $x^*_e > 0$ for every $e \in E$.
By removing isolated vertices, we can further assume that the nonzero edges form a connected graph.
So, we can write $\Reff_{x^*}(s,t) = b_{st}^T L_{x^*}^\dagger b_{st}$, 
where $L_{x^*}$ has rank $n-1$ and the null space of $L_{x^*}$ is $\sspan(\vec{1})$. 
Since $b_{st} \perp \vec{1}$, we have $L_{x^*} L_{x^*}^\dagger b_{st} = b_{st}$ and $L_{x^*}^\dagger b_{st} \perp \vec{1}$, which implies that $L_{x^*}^\dagger b_{st} = (L_{x^*} + \frac{1}{n}J)^{-1} b_{st}$ where $J$ is the all-ones matrix. 
Using the fact that $\partial A^{-1} = -A^{-1} (\partial A) A^{-1}$ (see e.g.~\cite{PP12}), 
we derive
\begin{eqnarray*}
\nabla_{x^*_e} \Reff_{x^*}(s,t) 
& = & \nabla_{x^*_e} b_{st}^T \left(L_{x^*} + \frac{1}{n} J\right)^{-1} b_{st} 
~ = ~ - b_{st}^T \left(L_{x^*} + \frac{1}{n} J\right)^{-1} \left(\nabla_{x^*_e} L_{x^*} \right) \left(L_{x^*} + \frac{1}{n} J\right)^{-1} b_{st} 
\\
& = & -b_{st}^T L_{x^*}^\dagger \left(\nabla_{x^*_e} \sum_{e \in E} x^*_e w_e b_e b_e^T \right) L_{x^*}^\dagger b_{st} 
~ = ~ -b_{st}^T L_{x^*}^\dagger b_e b_e^T L_{x^*}^\dagger b_{st} 
~ = ~ -(b_{st}^T L_{x^*}^\dagger b_e)^2,
\end{eqnarray*}
where we used the assumption that $w_e=1$ for all $e \in E$.
With this, we write down the KKT conditions for the convex program. 
Let $\mu$ be the dual variable for the budget constraint $\sum_{e \in E} c_e x^*_e \leq k$, and $\lambda^+_e$ and $\lambda^-_e$ be the dual variables for the upper bound $x^*_e \leq 1$ and the nonnegative constraint $x^*_e \geq 0$ respectively.
The KKT conditions states if $x^*$ is an optimal solution to \eqref{eq:P}, then there exist $\lambda^+, \lambda^-$ and $\mu$ such that
\begin{equation*}
\begin{aligned}
	& \sum_{e \in E} x^*_e \leq k, \quad 0 \leq x^*_e \leq 1~~\forall e \in E, & & \text{~~(Primal feasibility)} \\
        & \mu \geq 0, \quad \lambda^+_e \geq 0 {\rm~and~} \lambda^-_e \geq 0~~\forall e \in E, & & \text{~~(Dual feasibility)} \\
        & \mu \cdot \left(k - \sum_{e \in E} x^*_e \right) = 0, \quad 
	  \lambda^+_e \cdot (x^*_e - 1) = 0 {\rm~and~}
          \lambda^-_e \cdot x^*_e = 0~~\forall e \in E,   & & \text{~~(Complementary slackness)} \\
        & (b_e^T L_{x^*}^\dagger b_{st})^2 = \lambda^+_e - \lambda^-_e + c_e\mu = \lambda^+_e - \lambda^-_e + \mu, & & \text{~~(Lagrangian optimality)}
\end{aligned}
\end{equation*}
where we used the assumption that $c_e=1$ for all $e \in E$.
For an integral edge with $x^*_e = 1$, we have $\lambda^-_e = 0$ by the complementary slackness condition.
Since $\lambda^+_e \geq 0$, it follows from the Lagrangian optimality condition that $(b_e^T L_{x^*}^\dagger b_{st})^2 \geq \mu$.
For a fractional edge with $0 < x^*_e < 1$, we have $\lambda^+_e = \lambda^-_e = 0$ by the complementary slackness condition, and therefore $(b_e^T L_{x^*}^\dagger b_{st})^2 = \mu$ by the Lagrangian optimality condition.
We can assume that $\mu > 0$.
Otherwise, $\mu = 0$ implies that the flow on all fractional edges are zero, and so we can delete them from the graph without affecting the $s$-$t$ effective resistance, and we have an integral solution.

Let $\varphi$ be a potential vector of the electrical flow $f^*$ supported in $x^*$.
For an edge $e=uv \in E$,
\[\Big(\frac{f^*_{e}}{x^*_e}\Big)^2 = (\varphi(u) - \varphi(v))^2 = \left(b_{e}^T L_{x^*}^\dagger b_{st}\right)^2,\]
where the first equality is by Ohm's law and the assumption that $w_{uv}=1$ for all $uv \in E$, and the second equality uses that $L_{x^*} \varphi = b_{st}$ as explained in Section~\ref{ss:electric}.
The lemma then follows from the above paragraph and writing $\mu$ as $\alpha^2$.
\end{proof}

The flow-conductance ratio $\alpha$ will be crucial in the rounding algorithm and its analysis.
The following lemma shows an upper bound on $\alpha$ using the budget $k$ and the shortest path distance $d_{st}$ between $s$ and $t$.

\begin{lemma} \label{l:alpha}
Under the conditions in Assumption~\ref{a:assumption}, it holds that $\alpha^2 \leq d_{st}/k \leq 1$.
\end{lemma}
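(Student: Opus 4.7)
The plan is to sandwich $\Reff_{x^*}(s,t)$ between an explicit upper bound coming from a natural integral feasible solution and a lower bound in terms of $\alpha$ and the budget derived from the characterization in Lemma~\ref{l:optimality}.

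First, I would obtain the upper bound $\Reff_{x^*}(s,t) \leq d_{st}$. Because Assumption~\ref{a:assumption} gives $k \geq d_{st}$, the indicator vector of any shortest $s$-$t$ path (a set of $d_{st}$ edges) is a feasible integral solution to \eqref{eq:P}. In this solution the subgraph is a series of $d_{st}$ unit resistors, whose $s$-$t$ effective resistance is exactly $d_{st}$. Optimality of $x^*$ then yields $\Reff_{x^*}(s,t) \leq d_{st}$.

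For the matching lower bound, I would apply Thomson's principle (Fact~\ref{f:Thomson}) to write
\[
\Reff_{x^*}(s,t) \;=\; \sum_{e\,:\,x^*_e>0} \frac{(f^*_e)^2}{x^*_e},
\]
using $w_e = 1$ from Assumption~\ref{a:assumption}. Lemma~\ref{l:optimality} then gives two contributions: each fractional edge $e \in E_F$ contributes $(\alpha x^*_e)^2 / x^*_e = \alpha^2 x^*_e$, and each integral edge $e \in E_I$ contributes $(f^*_e)^2 \geq \alpha^2 = \alpha^2 x^*_e$. Summing yields $\Reff_{x^*}(s,t) \geq \alpha^2 \sum_e x^*_e$.

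To close the argument, I need $\sum_e x^*_e = k$. This is exactly complementary slackness for the budget constraint in the KKT system from the proof of Lemma~\ref{l:optimality}: the dual $\mu$ there satisfies $\mu = \alpha^2$, so either $\alpha = 0$ (in which case the statement is trivial) or $\mu > 0$ and the budget is tight, using $c_e=1$. Combining gives $\alpha^2 k \leq \Reff_{x^*}(s,t) \leq d_{st}$, i.e.\ $\alpha^2 \leq d_{st}/k$, and the second inequality $d_{st}/k \leq 1$ is immediate from $k \geq d_{st}$. I do not foresee a real obstacle here; the only subtle point is remembering that the KKT multiplier identified in the proof of Lemma~\ref{l:optimality} is precisely $\alpha^2$, which is what lets us assume without loss of generality that the budget is saturated.
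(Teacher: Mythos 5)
Your proof is correct and follows the same route as the paper: upper-bound $\Reff_{x^*}(s,t)$ by $d_{st}$ using feasibility of a shortest path, and lower-bound it by $\alpha^2\sum_e x^*_e$ via Thomson's principle together with the flow-conductance characterization from Lemma~\ref{l:optimality}, then combine. The only difference is your justification that $\sum_e x^*_e = k$: you invoke complementary slackness ($\mu = \alpha^2 > 0$ forces the budget to be tight), whereas the paper cites Rayleigh monotonicity (increase $x^*$ to saturate the budget without increasing the objective, or else there is an integral optimum); both are valid, and yours is arguably the cleaner bookkeeping since it reuses the KKT data already established.
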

\begin{proof}
Let $x^*$ be an optimal solution to \eqref{eq:P}, 
and $f^*$ be the unit $s$-$t$ electrical flow supported in $x^*$.
As $k \geq d_{st}$, a shortest path is a feasible solution to \eqref{eq:P}, and thus $\Reff_{x^*}(s,t) \leq d_{st}$. 
On the other hand, by Thomson's principle and Lemma~\ref{l:optimality},
\[
\Reff_{x^*}(s,t) 
= \sum_{e \in E} \frac{(f^*_e)^2}{x^*_e} 
= \sum_{e \in E_I} (f^*_e)^2 + \sum_{e \in E_F} \frac{(f^*_e)^2}{x^*_e} 
\geq \sum_{e \in E_I} \alpha^2 + \sum_{e \in E_F} \alpha^2 x^*_e 
= \alpha^2 \sum_{e \in E} x^*_e 
= \alpha^2 k,
\]
where the last equality holds since we can assume $\sum_{e \in E} x^*_e = k$ for the optimal solution $x^*$ without loss of generality by Rayleigh's principle (or otherwise we have an integral optimal solution).
The lemma follows by combining the upper bound and the lower bound.
\end{proof}



\subsection{Randomized Path-Rounding Algorithm} \label{ss:algorithm}

Our rounding algorithm uses the unit electrical flow $f^*$ supported in the optimal solution $x^*$ to construct an integral solution.
The algorithm will first decompose the flow $f^*$ as a convex combination of flow paths, and then randomly choose the flow paths and return the union of the chosen flow paths as our solution.

The following lemma about flow decomposition is by the standard argument to remove one (fractional) flow path at a time, which holds for any unit directed acyclic $s$-$t$ flow.

\begin{lemma}[Flow Decomposition] \label{l:decomposition}
Given a unit $s$-$t$ electrical flow $f$,
there is a polynomial time algorithm to find a set ${\mathcal P}$ of $s$-$t$ paths with $|{\mathcal P}| \leq |E|$ such that
the undirected flow vector $f : E \to \R_{\geq 0}$ can be written as a convex combination of the characteristic vectors of the paths in ${\mathcal P}$, i.e.
\[
f = \sum_{p \in {\mathcal P}} v_p \cdot \chi_{p} 
\quad {\rm and} \quad
\sum_{p \in {\mathcal P}} v_p = 1
\quad {\rm and} \quad
v_p > 0~{\rm for~each~} p \in {\mathcal P},
\]
where $\chi_{p} \in R^{|E|}$ is the characteristic vector of the path $p$ with one on each edge $e \in p$ and zero otherwise.
\end{lemma}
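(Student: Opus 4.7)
The plan is to leverage the fact that an $s$-$t$ electrical flow is automatically acyclic, which lets me apply the textbook iterative flow-decomposition procedure while controlling the number of paths.

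First I would orient the flow using the potentials. By Ohm's law, there exists $\varphi \in \R^V$ with $f(uv) = w_{uv}(\varphi(u) - \varphi(v))$, so I can orient each edge $e$ with $f_e > 0$ from the higher-potential to the lower-potential endpoint. Because the orientation is induced by a real-valued potential function, the resulting directed support of $f$ is acyclic. Moreover, flow conservation forces $s$ to be the unique source and $t$ to be the unique sink: every other vertex carrying positive flow must have both an incoming and an outgoing edge in the oriented support.

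Next I would run the standard decomposition loop on this DAG. While the support is nonempty, greedily follow oriented edges out of $s$; because the only sink is $t$, any forward walk reaches $t$ in at most $|V|$ steps, yielding an $s$-$t$ path $p$. Set $v_p := \min_{e \in p} f_e > 0$, add $p$ to $\mathcal P$, and update $f \leftarrow f - v_p \chi_p$ (viewed as a directed flow vector). Flow conservation is preserved, so the update is again a nonnegative scaled $s$-$t$ flow on a still-acyclic oriented subgraph, and at least one edge (a minimizer of $f_e$ on $p$) leaves the support. The loop therefore terminates within $|E|$ iterations, giving $|\mathcal P| \leq |E|$.

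To finish, I would verify the two required identities. By the loop invariant, $\sum_{p \in \mathcal P} v_p \chi_p$ equals the original undirected flow vector $f$. For $\sum_{p \in \mathcal P} v_p = 1$, note that the net flow across the cut $\delta(s)$ equals $1$ in the original flow and equals $\sum_{p \in \mathcal P} v_p$ in the decomposition, since every $s$-$t$ path crosses $\delta(s)$ with net contribution exactly $v_p$. Positivity of each $v_p$ is immediate from the definition as a bottleneck over a finite support. The only delicate point, and the sole place acyclicity is used, is ensuring that the greedy forward walk from $s$ cannot get stuck in a directed cycle or stranded at the wrong sink; the electrical potential function rules out both possibilities, so the argument needs essentially nothing beyond Ohm's law.
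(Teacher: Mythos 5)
Your proposal is correct and takes essentially the same approach the paper intends: the paper simply invokes ``the standard argument to remove one (fractional) flow path at a time, which holds for any unit directed acyclic $s$-$t$ flow,'' and your writeup supplies exactly those details. In particular, deriving acyclicity of the oriented support from Ohm's law (a cycle of strictly decreasing potentials is impossible) and then using flow conservation to identify $t$ as the unique sink of each residual DAG is the intended justification, and the bottleneck-removal count correctly bounds $|\mathcal{P}|$ by $|E|$.
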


With the flow decomposition, we are ready to present the rounding algorithm.

\begin{framed}
{\bf Randomized Path Rounding Algorithm}
\begin{enumerate}
\item Let $x^*$ be an optimal solution to the convex program \eqref{eq:P}.
Let $f^*$ be the unit $s$-$t$ electrical flow supported in $x^*$.
Let $\alpha$ be the flow-conductance ratio defined in Lemma~\ref{l:optimality}.

\item Compute a flow decomposition ${\mathcal P}$ of $f^*$ as defined in Lemma~\ref{l:decomposition}.


  
\item For $i$ from $1$ to $T:=\lfloor 1/\alpha \rfloor$ do
  \begin{itemize}
     \item Let $P_i$ be a random path from ${\mathcal P}$ where each path $p \in {\mathcal P}$ is sampled with probability $v_p$.
  \end{itemize}

\item Return the subgraph $H$ formed by the edge set $\cup_{i=1}^T P_i$.
\end{enumerate}
\end{framed}

The following lemma shows that the rounding algorithm will always return a non-empty subgraph.

\begin{lemma} \label{l:T}
Suppose the input instance satisfies the conditions in Assumption~\ref{a:assumption}.
Let $x^*$ be an optimal solution to \eqref{eq:P} and $\alpha>0$ be the flow-conductance ratio as defined in Lemma~\ref{l:optimality}.
Then 
\[\frac{1}{\alpha} \geq T \geq \frac{1}{2\alpha} > 0.\] 
\end{lemma}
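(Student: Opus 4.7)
The statement is essentially a quantitative consequence of the already-established bound $\alpha^2 \leq d_{st}/k \leq 1$ from Lemma~\ref{l:alpha}, combined with the fact that $\alpha > 0$ from Lemma~\ref{l:optimality}. So my plan is short.

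First, I would invoke Lemma~\ref{l:optimality} to obtain $\alpha > 0$, and then invoke Lemma~\ref{l:alpha} (which applies because we are in the setting of Assumption~\ref{a:assumption}) to conclude $\alpha \leq \sqrt{d_{st}/k} \leq 1$. Together these say $\alpha \in (0,1]$, equivalently $1/\alpha \in [1, \infty)$.

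Next, the upper bound $T \leq 1/\alpha$ is immediate from $T = \lfloor 1/\alpha \rfloor$. For the lower bound, I would use the elementary fact that for any real number $y \geq 1$, we have $\lfloor y \rfloor \geq y/2$: writing $k := \lfloor y \rfloor \geq 1$, we get $y < k+1 \leq 2k$, so $\lfloor y \rfloor = k > y/2$. Applying this with $y = 1/\alpha \geq 1$ gives $T = \lfloor 1/\alpha \rfloor \geq 1/(2\alpha)$, and the same chain shows $T \geq 1 > 0$.

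There is no real obstacle here; the only thing worth flagging is that one must verify Assumption~\ref{a:assumption} is in force so that Lemma~\ref{l:alpha} can be cited (the hypothesis of Lemma~\ref{l:T} says exactly this), and that the case $\alpha = 0$ was already excluded in Lemma~\ref{l:optimality} (there, $\mu = 0$ is ruled out because it would mean the fractional edges carry no flow and can be removed, leaving an integral solution for which no rounding is needed).
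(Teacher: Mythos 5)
Your proof is correct and follows essentially the same route as the paper: both deduce $\alpha \leq 1$ from Lemma~\ref{l:alpha} under Assumption~\ref{a:assumption} and then apply an elementary floor bound. The only cosmetic difference is that you prove $\lfloor y\rfloor \geq y/2$ directly for $y\geq 1$, whereas the paper passes through $\lfloor y\rfloor \geq \max\{1,\,y-1\}$ and then uses $\max\{\alpha,1-\alpha\}\geq 1/2$; these are equivalent.
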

\begin{proof}
Since we assumed that the budget $k$ is at least the length $d_{st}$ of a shortest $s$-$t$ path, it follows from Lemma~\ref{l:alpha} that $\alpha \leq 1$.
This implies that
\[
\frac{1}{\alpha} \geq T 
= \left\lfloor \frac{1}{\alpha} \right\rfloor 
\geq \max \left\{1,\frac{1}{\alpha} -1\right\}
\quad \implies \quad
1 \geq T\alpha \geq \max\{\alpha,1-\alpha\} \geq \frac{1}{2}.
\]
\end{proof}


\subsection{Bicriteria Approximation} \label{ss:bicriteria}

The analysis of the approximation guarantee goes as follows.
First, we show that the expected number of edge in the returned subgraph $H$ is at most the budget $k$.
Then, we prove that the expected effective resistance of the returned subgraph is at most two times that of the optimal fractional solution.
Both of these steps use the flow-conductance ratio $\alpha$ crucially.
These combine to show that the randomized path rounding algorithm is a constant factor bicriteria approximation algorithm.

Let $x^*$ be an optimal solution to \eqref{eq:P}.
Let $E_F$ and $E_I$ be the set of fractional edges and integral edges in $x^*$.
We assume that each edge $e \in E_I$ will be included in the subgraph $H$ returned by the rounding algorithm.
We focus on bounding the number of edges in $E_F$ that will be included in $H$.

\begin{lemma}[Expected Budget] \label{l:budget}
Let $x^*$ be an optimal solution to \eqref{eq:P} when $w_e=1$ for all edges $e \in E$.
Let $X_e$ be an indicator variable of whether $e$ is included in the returned subgraph $H$ by the rounding algorithm, 
Then,
\[
\E\left[ \sum_{e \in E_F} X_e \right] 
\leq T \alpha \sum_{e \in E_F} x^*_e
\leq \sum_{e \in E_F} x^*_e.
\]
\end{lemma}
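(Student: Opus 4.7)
The plan is to analyze $\E[X_e]$ for a single fractional edge $e \in E_F$ via a union bound over the $T$ independent sampling iterations, then sum over $E_F$ and invoke Lemma~\ref{l:optimality} to convert flow values back into conductance values $x^*_e$.

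First I would observe that in each single iteration $i$, the probability that a fixed edge $e$ appears in the sampled path $P_i$ equals
\[
\Pr[e \in P_i] \;=\; \sum_{p \in {\mathcal P} : e \in p} v_p \;=\; f^*_e,
\]
where the last equality is exactly the flow decomposition identity $f^* = \sum_{p} v_p \chi_p$ from Lemma~\ref{l:decomposition}. Since the $T$ iterations are independent (and we only need a one-sided bound), a simple union bound gives $\Pr[X_e = 1] = \Pr[e \in \bigcup_i P_i] \le T \cdot f^*_e$.

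Next I would plug in the characterization of the optimal solution. For every $e \in E_F$, Lemma~\ref{l:optimality} gives $f^*_e = \alpha \cdot x^*_e$. Therefore
\[
\E\Bigl[\sum_{e \in E_F} X_e\Bigr] \;=\; \sum_{e \in E_F} \Pr[X_e = 1] \;\le\; \sum_{e \in E_F} T \cdot f^*_e \;=\; T\alpha \sum_{e \in E_F} x^*_e,
\]
which is the first claimed inequality. The second inequality $T\alpha \sum_{e \in E_F} x^*_e \le \sum_{e \in E_F} x^*_e$ is immediate from $T = \lfloor 1/\alpha \rfloor \le 1/\alpha$, so $T\alpha \le 1$ (as also recorded in Lemma~\ref{l:T}).

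There is no real obstacle here; the argument is essentially bookkeeping. The only subtle point worth double-checking is that the union bound is applied only to edges in $E_F$, since the lemma only makes a statement about fractional edges (edges in $E_I$ are deterministically included in $H$ and accounted for separately in the later budget analysis). The step that does the real work is Lemma~\ref{l:optimality}: without the identity $f^*_e = \alpha x^*_e$ on fractional edges, we would only be able to bound the expected number of sampled edges in terms of the electrical flow values $f^*_e$, which are not the variables of the convex program and therefore cannot directly be compared to the budget $k$.
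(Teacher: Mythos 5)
Your proof is correct and follows essentially the same approach as the paper: a union bound over the $T$ iterations to get $\Pr[X_e=1]\le T f^*_e$, then the flow-conductance identity $f^*_e = \alpha x^*_e$ from Lemma~\ref{l:optimality} to translate to the convex-program variables, and finally $T\alpha = \lfloor 1/\alpha\rfloor \alpha \le 1$.
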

\begin{proof}
Note that an edge $e$ is contained in $P_i$ with probability $\sum_{p \in {\mathcal P}: p \ni e} v_p$.
By the union bound, an edge $e$ is included in the returned subgraph $H$ by the rounding algorithm with probability
\[
\prob(X_e = 1) 
\leq \sum_{i=1}^{T} \sum_{p \in {\mathcal P}:p \ni e} v_p 
= T \sum_{p \in {\mathcal P}:p \ni e} v_p 
= T f^*_e,
\]
where the last equality holds by the property of the flow decomposition ${\mathcal P}$ of the electrical flow $f^*$ in Lemma~\ref{l:decomposition}.

By Lemma~\ref{l:optimality}, $f^*_e = \alpha x^*_e$ for each fractional edge $e \in E_F$, and this implies that
\[
\prob(X_e = 1) \leq Tf_e^* = T\alpha x^*_e~~~\forall e \in E_F.
\]
Therefore, 
\[
\E\left[ \sum_{e \in E_F} X_e \right] 
= \sum_{e \in E_F} \prob(X_e = 1) 
\leq T\alpha \sum_{e \in E_F} x^*_e
= \left\lfloor \frac{1}{\alpha} \right\rfloor \alpha \sum_{e \in E_F} x^*_e
\leq \sum_{e \in E_F} x^*_e.
\]
%
\end{proof}

The key step is to show that $\E[\Reff_H(s,t)] \leq 2\Reff_{x^*}(s,t)$.
To prove this, we construct a unit $s$-$t$ flow $F$ and show that $\E[\ene_H(F)] \leq 2\Reff_{x^*}(s,t)$, and hence by Thomson's principle $\E[\Reff_H(s,t)] \leq \E[\ene_H(F)] \leq 2\Reff_{x^*}(s,t)$.
To construct the flow $F$, the idea is to follow the ratio $\alpha$ in the fractional solution $x^*$ and send $\alpha$ units of flow on each path $P_i$ selected.

\begin{lemma}[Expected Effective Resistance] \label{l:energy}
Suppose the input instance satisfies the conditions in Assumption~\ref{a:assumption}.
Let $x^*$ be an optimal solution to \eqref{eq:P} 
and $f^*$ be the unit $s$-$t$ electrical flow supported in $x^*$.
The expected $s$-$t$ effective resistance of the subgraph $H$ returned by the rounding algorithm is 
\[
\E\left[{\rm Reff}_H(s,t)\right] 
\leq \left(1 - \frac{1}{T} + \frac{1}{T\alpha} \right) \cdot \ene_{x^*}(f^*) 
= \left(1 - \frac{1}{T} + \frac{1}{T\alpha} \right) \cdot {\rm Reff}_{x^*}(s,t) 
\leq 2 {\rm Reff}_{x^*}(s,t). 
\]
\end{lemma}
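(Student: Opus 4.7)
The plan is to invoke Thomson's principle by exhibiting an explicit unit $s$-$t$ flow $F$ supported in $H$ and bounding its expected energy; then $\Reff_H(s,t) \leq \ene_H(F)$ deterministically, so $\E[\Reff_H(s,t)] \leq \E[\ene_H(F)]$. I will take $F$ to be the natural averaging flow produced by the sampling itself: send $1/T$ units along each sampled path $P_i$, so that for every edge $e$, $F_e = \tfrac{1}{T} N_e$ where $N_e := |\{i \in [T] : e \in P_i\}|$. Since the paths $P_1, \ldots, P_T$ are drawn independently from $\mathcal{P}$ and $\mathbb{P}[e \in P_i] = \sum_{p \in \mathcal{P}:\, e \in p} v_p = f^*_e$ by the definition of the flow decomposition (Lemma~\ref{l:decomposition}), the variable $N_e$ is $\mathrm{Bin}(T, f^*_e)$-distributed, so a one-line second-moment calculation gives
\[
\E[F_e^2] \;=\; \frac{\E[N_e^2]}{T^2} \;=\; \frac{1}{T}\, f^*_e + \Bigl(1 - \frac{1}{T}\Bigr) (f^*_e)^2 .
\]
Summing over $e$ (and using $r_e = 1$) yields $\E[\ene_H(F)] = \tfrac{1}{T} \sum_e f^*_e + (1-\tfrac{1}{T}) \sum_e (f^*_e)^2$.

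The main content is then to bound each of the two sums by $\Reff_{x^*}(s,t) = \sum_e (f^*_e)^2 / x^*_e$, and this is where the flow-conductance characterization of Lemma~\ref{l:optimality} does the real work. The quadratic sum is immediate: $x^*_e \leq 1$ edge-wise gives $(f^*_e)^2 \leq (f^*_e)^2/x^*_e$, hence $\sum_e (f^*_e)^2 \leq \Reff_{x^*}(s,t)$. For the linear sum I will verify the per-edge inequality $\alpha f^*_e \leq (f^*_e)^2 / x^*_e$: on a fractional edge $e \in E_F$ we have equality, since $f^*_e = \alpha x^*_e$ forces $\alpha f^*_e = \alpha^2 x^*_e = (f^*_e)^2/x^*_e$; on an integral edge $e \in E_I$ we have $x^*_e = 1$ and $f^*_e \geq \alpha$, hence $\alpha f^*_e \leq (f^*_e)^2 = (f^*_e)^2/x^*_e$. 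Summing gives $\alpha \sum_e f^*_e \leq \Reff_{x^*}(s,t)$, i.e.\ $\sum_e f^*_e \leq \Reff_{x^*}(s,t)/\alpha$.

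Plugging both bounds back gives $\E[\ene_H(F)] \leq \bigl(1 - \tfrac{1}{T} + \tfrac{1}{T\alpha}\bigr) \Reff_{x^*}(s,t)$, which is the first claimed inequality. For the numerical bound $\leq 2\Reff_{x^*}(s,t)$, the plan is to use $T \geq 1/\alpha - 1$ (an estimate that already appears inside the proof of Lemma~\ref{l:T}), equivalently $T\alpha \geq 1 - \alpha$, and to rewrite $1 - \tfrac{1}{T} + \tfrac{1}{T\alpha} = 1 + \tfrac{1-\alpha}{T\alpha} \leq 2$. The step I expect to take genuine thought is not the calculation but the choice of $F$ and the passage from a ``flow'' statement to an ``energy'' statement: substituting the expected flow $\E[F] = f^*$ into the energy functional only yields $\sum_e (f^*_e)^2$ by Jensen and completely loses the linear cross term, and without the per-edge lower bound $f^*_e \geq \alpha x^*_e$ furnished by Lemma~\ref{l:optimality} the total flow-path length $\sum_e f^*_e$ could be arbitrarily large compared to $\Reff_{x^*}(s,t)$. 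So the ratio $\alpha$ supplied by the KKT conditions is precisely the bridge that converts a path-length estimate on $\sum_e f^*_e$ into the desired energy estimate.
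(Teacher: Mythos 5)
Your proof is correct and follows essentially the same route as the paper: construct a flow on $H$ by superposing the sampled paths, bound its expected energy via a second-moment calculation, and use the flow-conductance ratio $\alpha$ from Lemma~\ref{l:optimality} to compare $\sum_e f^*_e$ and $\sum_e (f^*_e)^2$ against $\Reff_{x^*}(s,t) = \sum_e (f^*_e)^2/x^*_e$. The only (cosmetic) difference is that the paper sends $\alpha$ units per sampled path and normalizes by $T\alpha$ at the end, while you send $1/T$ units per path so that $F$ is a unit flow from the start, and you observe $N_e \sim \mathrm{Bin}(T, f^*_e)$ rather than expanding the inner product $\langle \sum_i \chi_{P_i}, \sum_j \chi_{P_j}\rangle$ term by term; these are the same computation.
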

\begin{proof}
Consider the undirected flow vector $F: E \to \R_{\geq 0}$ defined by sending $\alpha$ units of flow on each path $P_i$ chosen by the rounding algorithm, i.e. the random variable $F = \sum_{i=1}^T \alpha \cdot \chi_{P_i}$ with $F_e = \alpha \cdot \#\{P_i~|~1 \leq i \leq T, P_i \ni e\}$ for each edge $e \in E$.
We would like to upper bound the expected energy $\ene_H(F)$ in order to upper bound $\Reff_H(s,t)$.
 
Each $P_i$ is a random $s$-$t$ path sampled from the flow decomposition ${\mathcal P}$ of the undirected flow vector $f^* : E \to \R_{\geq 0}$ of the unit $s$-$t$ electrical flow supported in $x^*$, and $\chi_{P_i} \in \R^m$ is its characteristic vector with expected value
\[
\E[\chi_{P_i}] = \sum_{p \in {\mathcal P}} v_p \cdot \chi_p = f^*.
\]
Since each edge in $H$ is of conductance one,
the expected energy of $F$ in $H$ is
\[
\E\left[ \ene_H(F) \right] 
= \E \left[ \sum_{e \in E} F_e^2 \right]
= \E [\langle F, F \rangle]
= \E \left[\left \langle \sum_{i=1}^T \alpha \cdot \chi_{P_i}, \sum_{j=1}^T \alpha \cdot \chi_{P_j} \right \rangle \right]
= \sum_{i=1}^T \sum_{j=1}^T \alpha^2 \cdot \E [\langle \chi_{P_i}, \chi_{P_j} \rangle].
\]
As each path $P_i$ is sampled independently, for $i \neq j$,
\[
\E[\langle \chi_{P_i}, \chi_{P_j} \rangle]
= \langle \E[\chi_{P_i}], \E[\chi_{P_j}] \rangle
= \langle f^*, f^*\rangle
= \sum_{e \in E} (f^*_e)^2.
\]
For $i = j$,
\[
\E[\langle \chi_{P_i}, \chi_{P_i} \rangle]
= \sum_{p \in {\mathcal P}} v_p \langle \chi_p, \chi_p \rangle
= \sum_{p \in {\mathcal P}} v_p \sum_{e \in p} 1
= \sum_{e \in E}~ \sum_{p \in {\mathcal P}: p \ni e} v_p
= \sum_{e \in E} f^*_e,
\]
where the last equality follows from the property of the flow decomposition in Lemma~\ref{l:decomposition}.
Combining these two terms, it follows that
\[
\E\left[ \ene_H(F) \right] 
= \alpha^2 T \sum_{e \in E} f_e^* + \alpha^2 T(T-1) \sum_{e \in E} (f^*_e)^2.
\]
Thomson's principle states that the $\Reff_H(s,t)$ is upper bounded by the energy of any one unit $s$-$t$ flow.
Note that $F$ is an $s$-$t$ flow of $T \alpha$ units, and $T \alpha > 0$ by Lemma~\ref{l:T}.
Scaling $F$ to a one unit $s$-$t$ flow by dividing the flow on each edge by $T \alpha$ gives an upper bound on
\begin{eqnarray*}
\E[\Reff_H(s,t)] 
\leq \frac{\E\left[ \ene_H(F) \right]}{T^2 \alpha^2}
& = & \frac{1}{T} \sum_{e \in E} f_e^* + \Big(1-\frac{1}{T}\Big) \sum_{e \in E} (f^*_e)^2
\\
& \leq & \frac{1}{T \alpha} \sum_{e \in E} \frac{(f^*_e)^2}{x^*_e} + \Big(1-\frac{1}{T}\Big) \sum_{e \in E} \frac{(f^*_e)^2}{x^*_e}
\\
& = & \left(1-\frac{1}{T}+\frac{1}{T\alpha}\right) \cdot \ene_{x^*}(f^*)
\\
& = & \left(1-\frac{1}{T}+\frac{1}{T\alpha}\right) \cdot \Reff_{x^*}(s,t),
\end{eqnarray*}
where the second inequality follows from Lemma~\ref{l:optimality} that $f^*_e/x^*_e \geq \alpha$ for every edge $e \in E$ and also $x^*_e \leq 1$ for every edge $e \in E$, and the last equality is from Thomson's principle that $\Reff_{x^*}(s,t) = \ene_{x^*}(f^*)$.
Finally, notice that $1- 1/T + 1/(T\alpha) \leq 2$ as $1/\alpha - 1 \leq \lfloor 1/\alpha \rfloor = T$ .
\end{proof}

Combining Lemma~\ref{l:budget} and Lemma~\ref{l:energy}, it follows from a simple application of Markov's inequality that there is an outcome of the randomized path-rounding algorithm which uses at most $2k$ edges with $s$-$t$ effective resistance at most $4\Reff_{x^*}(s,t)$.
In the following, we apply Markov's inequality more carefully to show that the success probability is at least $\Omega(\alpha)$.
In the next subsection, we will argue that $\alpha$ can be assumed to be $\Omega(1/m)$ and so the path-rounding algorithm is a randomized polynomial time algorithm.

\begin{theorem}[Bicriteria Approximation] \label{t:bicriteria}
Suppose the input instance satisfies the conditions in Assumption~\ref{a:assumption}.
Let $x^*$ be an optimal solution to \eqref{eq:P}.
Given $x^*$, the randomized path rounding algorithm will return a subgraph $H$ with at most $2k$ edges and $\Reff_H(s,t) \leq 4\Reff_{x^*}(s,t)$ with probability at least $\Omega(\alpha)$.
\end{theorem}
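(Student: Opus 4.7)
The plan is to apply Markov's inequality to the two expectation bounds provided by Lemmas~\ref{l:budget} and~\ref{l:energy} and then combine them via a union bound on the bad events. From Lemma~\ref{l:budget} we have $\E[|E(H)|] \leq k$, so Markov's inequality immediately yields $\mathbb{P}[|E(H)| > 2k] \leq 1/2$. From Lemma~\ref{l:energy} we have $\E[\Reff_H(s,t)] \leq c \cdot \Reff_{x^*}(s,t)$ with $c := 1 - 1/T + 1/(T\alpha)$. Since $T = \lfloor 1/\alpha \rfloor$ forces $\alpha > 1/(T+1)$ and hence $T\alpha > T/(T+1)$, the coefficient $c$ is strictly less than $2$, so Markov's inequality yields $\mathbb{P}[\Reff_H(s,t) > 4\Reff_{x^*}(s,t)] \leq c/4 < 1/2$.

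Combining the two tail bounds via a union bound on the bad events gives
\[
\mathbb{P}\bigl[|E(H)| \leq 2k \text{ and } \Reff_H(s,t) \leq 4\Reff_{x^*}(s,t)\bigr] \;\geq\; 1 - \tfrac{1}{2} - \tfrac{c}{4} \;=\; \frac{(T+1)\alpha - 1}{4T\alpha},
\]
which is strictly positive by the definition of $T$; this already gives the qualitative existence statement mentioned in the paragraph preceding the theorem.

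The main obstacle is upgrading this to a quantitative $\Omega(\alpha)$ bound. Writing $\alpha = 1/(T+\theta)$ for $\theta \in [0,1)$, the expression above simplifies to $(1-\theta)/(4T)$; using $1/T \geq \alpha$, this is $\Omega(\alpha)$ whenever $\theta$ is bounded away from $1$, i.e., whenever $\alpha$ is not arbitrarily close to the left endpoint $1/(T+1)$ of the interval $(1/(T+1), 1/T]$ in which it lies. In the remaining boundary regime $\theta \to 1$, I would additionally exploit the sharper form of Lemma~\ref{l:budget}, namely $\E[|E(H)|] \leq k - (1-T\alpha) \sum_{e \in E_F} x^*_e$, which provides extra slack on the budget side proportional to the fractional mass of $x^*$. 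Combining this with the effective-resistance slack through a refined union bound, and handling the essentially integral case separately (in which $H$ lies in the support of $x^*$ so that $|E(H)| \leq k \leq 2k$ is met deterministically), yields the $\Omega(\alpha)$ success probability in all regimes.
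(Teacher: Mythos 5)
Your outline hits the right tools—Markov on both quantities and a union bound—but the argument as written has a genuine gap in the boundary regime, and the proposed fix does not obviously close it.

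Using $\E[|E(H)|] \le k$ with Markov gives only $\Pr[|E(H)|>2k]\le 1/2$, and your bound for the resistance failure is $c/4$ with $c$ approaching $2$ as $T\alpha\to 1/2$. Then $1/2 + c/4 \to 1$, so the union bound degenerates and gives no $\Omega(\alpha)$ slack. You correctly identify this, but the fix you propose — replacing $\E[|E(H)|]\le k$ by $\E[|E(H)|]\le k - (1-T\alpha)\sum_{e\in E_F}x_e^*$ and applying Markov against the threshold $2k$ — still yields $\Pr[|E(H)|>2k]\le \frac{1}{2} - \frac{(1-T\alpha)x_F^*}{2k}$, which collapses back to $1/2$ whenever the fractional mass $x_F^* := \sum_{e\in E_F}x_e^*$ is small. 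The ``essentially integral'' case you invoke does not rescue this: small $x_F^*$ does not imply the support of $x^*$ has at most $k$ edges, since many tiny fractional edges can coexist.

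The paper's cleaner move is to apply Markov to the number of \emph{fractional} edges alone against the reduced threshold $2k-|E_I|$ (the integral edges being deterministic), giving
\[
\Pr\Big(\sum_{e\in E_F}X_e > 2k-|E_I|\Big)\ \le\ \frac{T\alpha\,x_F^*}{2k-|E_I|}\ \le\ \frac{T\alpha}{2},
\]
where the last step uses $x_F^*\le k-|E_I|$ and $2k-|E_I|\ge 2(k-|E_I|)$. This $T\alpha/2$ bound is sharp uniformly in the regime $T\alpha\in[1/2,1]$ and does not degenerate when $x_F^*$ is small. Pairing it with the resistance failure bound $\frac{T\alpha+1}{4T\alpha}-\frac{1}{4T}$, the identity
\[
\frac{T\alpha}{2}+\frac{T\alpha+1}{4T\alpha}\le 1 \iff (2T\alpha-1)(T\alpha-1)\le 0,
\]
which holds by Lemma~\ref{l:T}, gives a success probability at least $\frac{1}{4T}\ge \alpha/4=\Omega(\alpha)$ uniformly — no case analysis needed. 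You should replace the $1/2$ bound with the $T\alpha/2$ bound from the start and drop the case split; the rest of your strategy is then essentially the paper's.
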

\begin{proof}
First, we bound the probability that the subgraph $H$ has more than $2k$ edges.
Let $X_e$ be an indicator variable of whether the edge $e$ is included in the returned subgraph $H$.
Recall that $E_F$ and $E_I$ denote the set of fractional edges and integral edges in $x^*$ respectively.
We assume pessimistically that all edges in $E_I$ will be included in the subgraph $H$ returned by the rounding algorithm.
Then, by Markov's inequality and Lemma~\ref{l:budget},
\[
\Pr\bigg( \sum_{e \in E} X_e > 2k \bigg)
\leq \Pr \bigg( \sum_{e \in E_F} X_e > 2k-|E_I|\bigg)
\leq \frac{\E\left[ \sum_{e \in E_F} X_e \right]}{2k-|E_I|}
\leq \frac{T\alpha \sum_{e \in E_F} x^*_e}{2k-|E_I|}
\leq \frac{T\alpha}{2},
\]
where the last inequality is by $\sum_{e \in E_F} x^*_e \leq k - |E_I|$.

Next, we bound the probability that $\Reff_H(s,t) > 4\Reff_{x^*}(s,t)$.
By Markov's inequality and Lemma~\ref{l:energy},
\[
\Pr\bigg( \Reff_H(s,t) > 4\Reff_{x^*}(s,t) \bigg)
\leq \frac{1}{4} \left( 1- \frac{1}{T} + \frac{1}{T\alpha} \right)
= \frac{T\alpha + 1}{4T\alpha} -\frac{1}{4T} 
\leq \frac{T\alpha + 1}{4T\alpha} - \Omega(\alpha),
\]
where the last inequality is because $T = \lfloor 1/\alpha \rfloor \leq 1/\alpha$.

To prove the lemma, it remains to show that
\[
\frac{T\alpha}{2} + \frac{T\alpha + 1}{4T\alpha} \leq 1
\quad \iff \quad
2(T\alpha)^2 - 3(T\alpha) + 1 = (2T\alpha - 1)(T\alpha - 1) \leq 0,
\]
which follows from Lemma~\ref{l:T}.
\end{proof}

\subsection{Constant Factor Approximation} \label{ss:constant}


We showed that the randomized path rounding algorithm is a bicriteria approximation algorithm.
To achieve a true approximation algorithm, a natural idea is to scale down the budget from $k$ to $k/2$ and apply the randomized path rounding algorithm.
The following lemma takes care of the case of $k/2 < d_{st}$, when the shortest path assumption does not hold after scaling, by showing that simply returning a shortest $s$-$t$ path is already a good enough approximation.

\begin{lemma} \label{l:path}
When the budget $k$ is at least the length $d_{st}$ of a shortest $s$-$t$ path, any $s$-$t$ shortest path is a $(k/d_{st})$-approximate solution for the $s$-$t$ effective resistance network design problem.
\end{lemma}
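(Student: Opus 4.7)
The plan is to prove the claim in two short pieces: first identify the approximation ratio achieved by a shortest path in the obvious way, and then lower-bound the optimum by $d_{st}^2/k$ using a Cauchy--Schwarz argument on any unit $s$-$t$ flow.

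For the upper bound on the shortest-path solution, let $P$ be a shortest $s$-$t$ path. As a subgraph, $P$ carries the unique unit $s$-$t$ flow that places one unit on every edge, and since $r_e = 1$ for every edge, Thomson's principle gives $\Reff_P(s,t) = \sum_{e \in P} 1 = d_{st}$. For the lower bound on $\opt$, let $H^*$ be any feasible subgraph with $|E(H^*)| \le k$, and let $f$ be the unit $s$-$t$ electrical flow in $H^*$. Decomposing $f$ as a convex combination of $s$-$t$ paths $f = \sum_{p} v_p \chi_p$ with $\sum_p v_p = 1$ (Lemma~\ref{l:decomposition}), every path has length at least $d_{st}$, so
\[
\sum_{e \in E(H^*)} f_e \;=\; \sum_p v_p |p| \;\ge\; d_{st}.
\]
Applying Cauchy--Schwarz on the $|E(H^*)| \le k$ edges of $H^*$,
\[
\Reff_{H^*}(s,t) \;=\; \sum_{e \in E(H^*)} f_e^2 \;\ge\; \frac{\bigl(\sum_{e} f_e\bigr)^2}{|E(H^*)|} \;\ge\; \frac{d_{st}^2}{k}.
\]

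Taking the minimum over feasible $H^*$, we get $\opt \ge d_{st}^2/k$. Combining with $\Reff_P(s,t) = d_{st}$ yields $\Reff_P(s,t) / \opt \le d_{st}/(d_{st}^2/k) = k/d_{st}$, which is exactly the claimed approximation factor.

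There is no real obstacle here; the only mild subtlety is to remember that for an electrical flow the $\ell_2$-energy equals $\sum_e r_e f_e^2 = \sum_e f_e^2$ (using $r_e = 1$), and that the $\ell_1$-energy $\sum_e f_e$ of any unit $s$-$t$ flow is at least $d_{st}$ via flow-path decomposition. Everything else is a one-line Cauchy--Schwarz inequality.
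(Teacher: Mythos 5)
Your proof is correct, and it takes a genuinely different route from the paper's. The paper lower-bounds the optimum by contracting all vertices at distance $i$ from $s$ into a single vertex $u_i$, invoking Rayleigh's monotonicity to justify the contraction, expressing the resulting path graph's resistance via the series formula $\sum_{i=1}^{d_{st}} 1/c_i$ where $c_i$ is the total conductance at layer $i$, and then applying Cauchy--Schwarz over the $d_{st}$ layers. You instead apply Cauchy--Schwarz directly over the at most $k$ edges of the subgraph, using the fact that a unit flow decomposed into $s$-$t$ paths (each of length $\geq d_{st}$) has $\ell_1$-mass $\sum_e f_e \geq d_{st}$. Your version is more elementary: it avoids the contraction argument, Rayleigh's monotonicity, and the series-parallel resistance formula. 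One small distinction worth noting is that the paper proves the stronger claim that $\Reff_x(s,t) \geq d_{st}^2/k$ for every \emph{fractional} feasible $x$ (i.e.\ the bound holds against the convex-programming optimum), while your argument as written compares only to the integral optimum --- which is all the lemma statement requires. In fact your argument extends to the fractional case with one more weighted Cauchy--Schwarz step, $\bigl(\sum_e f_e\bigr)^2 \leq \bigl(\sum_e x_e\bigr)\bigl(\sum_e f_e^2/x_e\bigr)$, should you want the stronger conclusion.
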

\begin{proof}
When $k \geq d_{st}$, a $s$-$t$ shortest path is a feasible solution to the problem with $s$-$t$ effective resistance at most $d_{st}$. 
To prove the lemma, we will show that $\Reff_x(s,t) \geq d_{st}^2/k$ for any feasible solution $x$ to \eqref{eq:P}, and so an $s$-$t$ shortest path is already a $(k/d_{st})$-approximation.

Let $G_x$ be the graph $G$ with fractional conductance $x_e$ on each edge $e \in E$.
To show a lower bound on $\Reff_x(s,t)$, we identify the vertices in $G_x$ to a form a path graph $P_x$ as follows:
For each $i \geq 0$, let $U_i$ be the set of vertices in $G$ with shortest path distance $i$ to $s$, where the shortest path distance is defined where each edge in $G$ is of length one.
First, for each $0 \leq i \leq d_{st} - 1$, we identify the vertices in $U_i$ to a single vertex $u_i$.
Then, we identify all the vertices in $\cup_{i \geq d_{st}} U_i$ to a single vertex $u_{d_{st}}$.
The path graph $P_x$ has vertex set $\{u_0, \ldots, u_{d_{st}}\}$ and edge set $\{ab \in E~|~a \in U_i {\rm~and~} b \in U_{i+1} {\rm~for~} 0 \leq i \leq d_{st}-1\}$.
For each edge $e$ in $P_x$, its conductance $x_e$ in $P_x$ is the same as that in $G_x$.
As an electrical network, identifying two vertices $uv$ is equivalent to adding an edge of resistance zero between $u$ and $v$.
So, it follows from Rayleigh's monotonicity principle (Fact~\ref{f:monotonicity}) that $\Reff_{G_x}(s,t) \geq \Reff_{P_x}(u_0,u_{d_{st}})$ as $s \in U_0$ and $t \in U_{d_{st}}$.

As $P_x$ is a series-parallel graph, we can compute $\Reff_{P_x}(s,t)$ directly.
For each $1 \leq i \leq d_{st}$,
let $E_i$ be the set of parallel edges connecting $u_{i-1}$ and $u_i$ in $P_x$,
and $c_i = \sum_{e \in E_i} x_e$ be the effective conductance between $u_{i-1}$ and $u_i$ in $P_x$.
Then, by Fact~\ref{f:resistance-SP}, 
\[
\Reff_{P_x}(u_{i-1}, u_i) = \frac{1}{c_i} 
\quad {\rm and} \quad
\Reff_{P_x}(u_0, u_{d_{st}}) = \sum_{i=1}^{d_{st}} \Reff_{P_x}(u_{i-1}, u_i) = \sum_{i=1}^{d_{st}} \frac{1}{c_i}.
\]
Note that $\sum_{i=1}^{d_{st}} c_i = \sum_{i=1}^{d_{st}} \sum_{e \in E_i} x_e \leq \sum_{e \in E} x_e \leq k$ for any feasible solution $x$.
Using Cauchy-Schwarz inequality, 
\[
d_{st} 
= \sum_{i=1}^{d_{st}} \sqrt{c_i} \cdot \frac{1}{\sqrt{c_i}} 
\leq \sqrt{\sum_{i=1}^{d_{st}} c_i} \cdot \sqrt{\sum_{i=1}^{d_{st}} \frac{1}{c_i}} 
\leq \sqrt{k} \cdot \sqrt{\Reff_{P_x}(u_0,u_{d_{st}})}.
\]
Therefore, we conclude that $\Reff_{G_x}(s,t) \geq \Reff_{P_x}(u_0,u_{d_{st}}) \geq d_{st}^2/k$.
\end{proof}

We are ready to prove our main approximation result.

\begin{theorem} \label{t:constant}
Suppose the input instance satisfies the conditions in Assumption~\ref{a:assumption}.
There is a polynomial time $8$-approximation algorithm for the $s$-$t$ effective resistance network design problem.
\end{theorem}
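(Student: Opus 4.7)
The plan is to reduce the true-approximation problem to the bicriteria guarantee (Theorem~\ref{t:bicriteria}) by shrinking the budget from $k$ to $k/2$ before running the randomized path-rounding algorithm. Since the bicriteria output uses at most twice the input budget, this restores feasibility with respect to the original $k$, and the extra factor of $2$ in the approximation ratio (from shrinking the budget) multiplies with the factor of $4$ from Theorem~\ref{t:bicriteria} to give the claimed factor of $8$. I first dispense with the boundary case $k < 2 d_{st}$, where the shrunken budget $k/2$ violates Assumption~\ref{a:assumption} and the bicriteria theorem does not apply: in this range I simply output a shortest $s$-$t$ path, which by Lemma~\ref{l:path} is a $(k/d_{st})$-approximation, and $k/d_{st} < 2 \le 8$.

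For the main case $k \ge 2 d_{st}$, I solve~\eqref{eq:P} with budget $k/2$ (now $k/2 \ge d_{st}$, so Assumption~\ref{a:assumption} holds), obtain an optimal fractional solution $y^*$, and run the randomized path-rounding algorithm on $y^*$. Theorem~\ref{t:bicriteria} then yields a subgraph $H$ with at most $2 \cdot (k/2) = k$ edges and $\Reff_H(s,t) \le 4 \Reff_{y^*}(s,t)$. The remaining step is to compare $\Reff_{y^*}(s,t)$ with $\opt$: let $x^*$ be the fractional optimum under the actual budget $k$. Then $x^*/2$ is feasible for the shrunken problem (its entries are in $[0,1/2] \subseteq [0,1]$ and $\sum_e x^*_e/2 \le k/2$). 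Because scaling all conductances by $1/2$ scales effective resistance by $2$, we have $\Reff_{x^*/2}(s,t) = 2 \Reff_{x^*}(s,t) \le 2\opt$, hence $\Reff_{y^*}(s,t) \le 2\opt$ and $\Reff_H(s,t) \le 8\opt$.

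The main obstacle is turning the probability-$\Omega(\alpha)$ guarantee of Theorem~\ref{t:bicriteria} into a polynomial-time algorithm, which requires $\alpha = \Omega(1/\poly(m))$ for the shrunken instance. I would do this by a preprocessing step: any fractional edge $e$ with $x^*_e < 1/m^{c}$ for a sufficiently large constant $c$ can be removed from the support with only a $(1+o(1))$ multiplicative effect on $\Reff_{x^*}(s,t)$ (by Rayleigh monotonicity combined with the fact that the total contribution of such edges to any $s$-$t$ flow is negligible), which can be absorbed into the constant. After this cleanup, the optimality characterization in Lemma~\ref{l:optimality} together with Lemma~\ref{l:alpha} (bounding $\alpha^2 \le d_{st}/k \le 1$ from above and the surviving $x^*_e \ge m^{-c}$ from below) forces $\alpha \ge 1/\poly(m)$. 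Repeating the bicriteria algorithm $O(\poly(m) \cdot \log(1/\delta))$ times and returning the best outcome over all trials that meet the edge budget then gives, with high probability, a subgraph with at most $k$ edges and $s$-$t$ effective resistance at most $8 \opt$, completing the proof.
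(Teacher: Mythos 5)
Your main argument (the boundary case $k < 2d_{st}$ via Lemma~\ref{l:path}, and the main case via budget-halving, the scaling identity $\operatorname{opt}(k/2) \le 2\operatorname{opt}(k)$, and Theorem~\ref{t:bicriteria}) matches the paper's proof. The gap is in how you handle the polynomial-running-time requirement, i.e.\ the case where the flow-conductance ratio $\alpha$ is very small.

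Your preprocessing idea — delete fractional edges with $x^*_e < m^{-c}$, then claim that ``the surviving $x^*_e \ge m^{-c}$ from below forces $\alpha \ge 1/\poly(m)$'' — does not work. The quantity $\alpha$ is the common ratio $f^*_e/x^*_e$ over fractional edges $e$; having $x^*_e$ bounded below says nothing about $f^*_e$ being bounded below, so $\alpha$ can still be arbitrarily small after your cleanup (e.g.\ when the fractional part of $x^*$ forms a nearly-idle parallel route carrying a vanishing share of the electrical flow, while each of its edges has $x^*_e = 1/2$). Moreover, after deleting edges the surviving vector is no longer an optimal solution to \eqref{eq:P}, so Lemma~\ref{l:optimality} no longer applies and there is no well-defined $\alpha$ for it; re-solving the CP on the reduced edge set could again produce a tiny $\alpha$. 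So your chain of reasoning does not establish the lower bound you need for the repeated-trials argument to run in polynomial time.

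The paper instead treats the small-$\alpha$ regime as a separate easy case rather than trying to make $\alpha$ large. When $\alpha \le 1/(4m)$, Lemma~\ref{l:optimality} gives $f^*_e = \alpha x^*_e < \alpha \le 1/(4m)$ for every fractional edge, so each path in the flow decomposition that touches a fractional edge carries flow at most $1/(4m)$; since there are at most $m$ paths, the total flow on such ``fractional paths'' is at most $1/4$. The union of the remaining ``integral paths'' (those using only edges with $x^*_e = 1$) has at most $k$ edges, and rescaling the at-least-$3/4$ unit of integral-path flow to a unit flow bounds $\Reff_H(s,t) \le \ene_{x^*}(f^*)/(3/4)^2 \le 2\Reff_{x^*}(s,t)$. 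This gives a direct $2$-approximation in the small-$\alpha$ regime without any repetition, which is what actually closes the polynomial-time argument. Your proposal should be replaced with this (or an equivalent) argument; the rest of your proof is fine.
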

\begin{proof}
If the budget $k \leq 2 d_{st}$, then Lemma~\ref{l:path} shows that simply returning an $s$-$t$ shortest path would give a $2$-approximation. 
Henceforth, we assume $k \geq 2 d_{st}$.

Let ${\rm opt}(k)$ be the objective value of an optimal solution $x^*$ to the convex program \eqref{eq:P} with budget $k$, so $\Reff_{x^*}(s,t) = {\rm opt}(k)$.
As $\frac12 x^*$ is a feasible solution to \eqref{eq:P} with budget $\frac12 k$,
by Thomson's principle,
\[
{\rm opt}\left(\frac{k}{2}\right) 
\leq \Reff_{\frac12 x^*}(s,t)
= b_{st}^T \Big( \sum_{e \in E} \frac{x^*_e}{2} b_e b_e^T \Big)^\dagger b_{st} 
= 2 b_{st}^T \Big( \sum_{e \in E} x^*_e b_e b_e^T \Big)^\dagger b_{st} 
= 2 \Reff_{x^*}(s,t) 
= 2 {\rm opt}(k).
\]

Given the original budget $k \geq 2d_{st}$, our algorithm is to find an optimal solution $z^*$ to \eqref{eq:P} with budget $k/2 \geq d_{st}$, and use the path-rounding algorithm with input $z^*$ to return a subgraph $H$.
By Theorem~\ref{t:bicriteria}, with probability $\Omega(\alpha)$,
the subgraph $H$ satisfies
\[
|E(H)| \leq 2\sum_{e \in E} z^*_e \leq 2\left(\frac{k}{2}\right) = k
\quad {\rm and} \quad
\Reff_H(s,t) \leq 4 {\rm opt}\left(\frac{k}{2}\right) 
\leq 8 {\rm opt}(k),
\]
and so $H$ is an $8$-approximate solution to the $s$-$t$ effective resistance network design problem.

Finally, we consider the time complexity of the algorithm.
The number of iterations in the path rounding algorithm is $O(1/\alpha)$, 
and we need to run the path rounding algorithm $O(1/\alpha)$ times to boost the success probability to a constant.
This is a randomized polynomial time algorithm when $\alpha = \Omega(1/m)$.

In the following, we show that when $\alpha \leq 1/(4m)$, it is easy to obtain a $2$-approximate solution without running the path-rounding algorithm.
Let $x^*$ be an optimal solution to \eqref{eq:P} with budget $k$,
and $f^*$ be the unit $s$-$t$ electrical flow supported in $x^*$.
Let ${\mathcal P}$ be the flow decomposition of $f^*$ as in Lemma~\ref{l:decomposition}.
We call a path $p \in {\mathcal P}$ an integral path if every edge $e \in p$ has $x^*_e=1$; otherwise we call $p$ a fractional path.
When $\alpha \leq 1/(4m)$, we simply return the union of all integral paths as our solution $H$.
Clearly, $H$ has at most $k$ edges as it only contains integral edges. 
Next, we bound $\Reff_H(s,t)$ by the energy of the flow supported in the integral paths.
By Lemma~\ref{l:optimality}, an edge $e$ with $x^*_e < 1$ has $f^*_e = \alpha x^*_e < \alpha \leq 1/(4m)$.
This implies that each fractional path $p$ has $v_p \leq 1/(4m)$.
Since ${\mathcal P}$ has at most $m$ paths (Lemma~\ref{l:decomposition}),
the total flow in the fractional paths is at most $1/4$,
and thus the total flow in the integral paths is at least $3/4$.
By scaling the flow supported in the integral paths to a one unit $s$-$t$ flow, we see that
\[
\Reff_H(s,t) 
\leq \frac{\ene_{x^*}(f^*)}{(3/4)^2} 
\leq 2 \ene_{x^*}(f^*)
= 2 \Reff_{x^*}(s,t).
\]
To summarize, in all cases including $k < 2d_{st}$ and $\alpha \leq 1/(4m)$, there is a polynomial time algorithm to return an $8$-approximate solution to the $s$-$t$ effective resistance network design problem.
\end{proof}

We make two remarks about improvements of Theorem~\ref{t:constant}.

\begin{remark}[Approximation Ratio] \label{r:five}
The analysis of the 8-approximation algorithm is not tight. 
By a more careful analysis of the expected energy in Lemma~\ref{l:energy} and the short path idea used in the next subsection, we can show that the approximation guarantee of the same algorithm in Theorem~\ref{t:constant} is less than $5$.
However, the analysis is quite involved and not very insightful, 
so we have decided to omit those details and only keep the current analysis.
\end{remark}

\begin{remark}[Deterministic Algorithm]
Using the standard pessimistic estimator technique, 
we can derandomize the path-rounding algorithm to obtain a deterministic $8$-approximation algorithm. 
The analysis is standard and we omit the details that would take a few pages.
\end{remark}

\subsection{The Large Budget Case} \label{ss:short}

In this subsection, we show how to modify the algorithm in Theorem~\ref{t:constant} to achieve a better approximation ratio when the budget is much larger than the $s$-$t$ shortest path distance. 

The observation is that when $k \gg d_{st}$, then $\alpha$ is small by Lemma~\ref{l:alpha}, and so there are many iterations in the path-rounding algorithm.
Since each iteration is independent, we can use Chernoff-Hoeffding's bound to prove a stronger bound on the probability that the number of edges in the returned solution is significantly more than $k$ (which outperforms the bound proved in Lemma~\ref{l:budget} using Markov's inequality). 
We can then show that the expected $s$-$t$ effective resistance is close to two times the optimal value by arguments similar to the proof of Lemma~\ref{l:energy}.

\subsubsection*{Modified Rounding Algorithm}

For our analysis, we slightly modify the path-rounding algorithm to ignore ``long'' paths in the flow decomposition, so that we have a worst case bound to apply Chernoff-Hoeffding's bound.
Unlike the flow decomposition in Lemma~\ref{l:decomposition}, the short path flow decomposition definition is specific to the electrical flow of an optimal solution to \eqref{eq:P}.
In the following definition, $c$ is a parameter which will be set to be $1/\eps > 1$ to achieve a $(2+O(\eps))$-approximation.

\begin{definition}[Short Path Decomposition of Electrical Flow of Optimal Solution] \label{d:decomposition-short}
Let $x^*$ be an optimal solution to the convex program \eqref{eq:P}.
Let $f^*$ be the unit $s$-$t$ electrical flow supported in $x^*$.
Let $\alpha$ be the flow-conductance ratio defined in Lemma~\ref{l:optimality}.

Let ${\mathcal P}^*$ be a flow decomposition of $f^*$ as defined in Lemma~\ref{l:decomposition}.
Let $x^*_F := \sum_{e \in E_F} x^*_e$ be the total fractional value on the fractional edges $E_F$ in the optimal solution $x^*$.

We call a path $p \in {\mathcal P}^*$ a long path if $p$ has at least $c \alpha  x^*_F$ edges in $E_F$, i.e. $|p \cap E_F| \geq c \alpha x^*_F$.
Otherwise we call a path $p \in {\mathcal P}^*$ a short path.

Let $\mathcal{P} := \{p \in \mathcal{P}^* \mid p {\rm~is~a~short~path}\}$ be the collection of short paths in $\mathcal{P}^*$.
Let $f_{\mathcal P} := \sum_{p \in {\mathcal P}} v_p \chi_p$ be the $s$-$t$ flow defined by the short paths,
and $v_{\mathcal P} := \sum_{p \in {\mathcal P}} v_p$ be the total flow value of $f_{\mathcal P}$.
\end{definition}

The modified algorithm is very similar to the randomized path-rounding algorithm in Section~\ref{ss:bicriteria}.
The only difference is that we only sample the paths in the short path flow decomposition in Definition~\ref{d:decomposition-short},
and we adjust the sampling probability of a path $p$ to $v_p / v_{\mathcal P}$ so that the sum is one.

\begin{framed}
{\bf Randomized Short Path Rounding Algorithm}
\begin{enumerate}
\item Let $x^*$ be an optimal solution to the convex program \eqref{eq:P}.
Let $f^*$ be the unit $s$-$t$ electrical flow supported in $x^*$.
Let $\alpha$ be the flow-conductance ratio defined in Lemma~\ref{l:optimality}.

\item Compute a short path flow decomposition ${\mathcal P}$ of $f^*$ as described in Definition~\ref{d:decomposition-short}.

\item For $i$ from $1$ to $T=\lfloor 1/\alpha \rfloor$ do
  \begin{itemize}
     \item Let $P_i$ be a random path from ${\mathcal P}$ where each path $p \in {\mathcal P}$ is sampled with probability $v_p/v_{\mathcal P}$.
  \end{itemize}

\item Return the subgraph $H$ formed by the edge set $\cup_{i=1}^T P_i$.

\end{enumerate}
\end{framed}

The following simple lemma shows that the total flow on the long paths is negligible when $c$ is large, which will be useful in the analysis.

\begin{lemma} \label{l:flow-value}
For the short path flow decomposition in Definition~\ref{d:decomposition-short}, $v_{\mathcal P} \geq 1 - \frac{1}{c}$.
\end{lemma}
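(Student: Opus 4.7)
The plan is to use the identity $f^*_e = \alpha x^*_e$ on fractional edges from Lemma~\ref{l:optimality} together with the flow decomposition identity to get a Markov-type bound on the total flow on long paths.

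First I would compute the total amount of flow $f^*$ that passes through $E_F$ in two different ways. On one hand, since $f^*_e = \alpha x^*_e$ for every $e \in E_F$ by Lemma~\ref{l:optimality},
\[
\sum_{e \in E_F} f^*_e = \alpha \sum_{e \in E_F} x^*_e = \alpha x^*_F.
\]
On the other hand, by the defining identity $f^* = \sum_{p \in {\mathcal P}^*} v_p \chi_p$ of the flow decomposition, swapping summations gives
\[
\sum_{e \in E_F} f^*_e = \sum_{e \in E_F} \sum_{p \in {\mathcal P}^* : p \ni e} v_p = \sum_{p \in {\mathcal P}^*} v_p \cdot |p \cap E_F|.
\]

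Next I would apply the definition of a long path. Restricting the second sum to long paths and using $|p \cap E_F| \geq c \alpha x^*_F$,
\[
\alpha x^*_F \;=\; \sum_{p \in {\mathcal P}^*} v_p \cdot |p \cap E_F| \;\geq\; \sum_{p \text{ long}} v_p \cdot c \alpha x^*_F,
\]
which (assuming $\alpha x^*_F > 0$; otherwise all of $f^*$ is supported on $E_I$ and every path is trivially short) gives $\sum_{p \text{ long}} v_p \leq 1/c$. Since $\sum_{p \in {\mathcal P}^*} v_p = 1$ by Lemma~\ref{l:decomposition}, we conclude
\[
v_{\mathcal P} = \sum_{p \in {\mathcal P}} v_p = 1 - \sum_{p \text{ long}} v_p \geq 1 - \frac{1}{c}.
\]

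There is no real obstacle here: the proof is essentially a one-line Markov inequality applied to the ``length in $E_F$'' random variable under the probability distribution $\{v_p\}$, with expected value $\alpha x^*_F$ computed via the KKT flow-conductance relation. The only minor care needed is the degenerate case $\alpha x^*_F = 0$, which is handled trivially.
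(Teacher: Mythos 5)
Your proof is correct and follows essentially the same route as the paper: compute $\sum_{e \in E_F} f^*_e$ once via the flow-conductance ratio from Lemma~\ref{l:optimality} and once via the flow-decomposition identity, restrict the latter to long paths, and divide through by $c\alpha x^*_F$. The only cosmetic difference is that you explicitly flag the degenerate case $\alpha x^*_F = 0$, which the paper leaves implicit.
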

\begin{proof}
Using $\alpha x_e^* = f_e^*$ for $e \in E_F$ from Lemma~\ref{l:optimality} and the properties of the flow decomposition ${\mathcal P}^*$ of $f^*$ in Lemma~\ref{l:decomposition},
\[
\alpha x^*_F 
= \sum_{e \in E_F} f^*_e 
= \sum_{p \in {\mathcal P}^*} v_p \cdot |p \cap E_F| \geq \sum_{p \in {\mathcal P}^*-{\mathcal P}} v_p \cdot |p \cap E_F|
\geq c \alpha x_F^* \sum_{p \in {\mathcal P}^*-{\mathcal P}} v_p
= c \alpha x_F^* (1-v_{\mathcal P}),
\]
where the last inequality is by the definition of long paths and the last equality is because $f^*$ is a unit $s$-$t$ flow.
\end{proof}

\subsubsection*{Analysis of Approximation Guarantee}

First, we consider the expected $s$-$t$ effective resistance of the returned subgraph $H$.
For intuition, we can think of the modified rounding algorithm as applying the rounding algorithm in the scaled flow $f_{\mathcal P} / v_{\mathcal P}$, and so it should follow from Lemma~\ref{l:energy} that
\[
\E[\Reff_H(s,t)] 
\leq 2 \ene_{x^*}\left( \frac{f_{\mathcal P}}{v_{\mathcal P}} \right) 
= \frac{2}{v_{\mathcal P}^2} \ene_{x^*}(f_{\mathcal P})
\leq \frac{2}{v_{\mathcal P}^2} \ene_{x^*}(f^*)
= \frac{2}{v_{\mathcal P}^2} \Reff_{x^*}(s,t),
\]
which will be at most $(2+O(\eps)) \Reff_{x^*}(s,t)$ when $c = 1/\eps$ from Lemma~\ref{l:flow-value}.

We cannot directly apply Lemma~\ref{l:energy} as stated, as the flow $f_{\mathcal P}$ does not satisfy the flow-conductance ratio $\alpha$ in Lemma~\ref{l:optimality}, but essentially the same proof will work to get the same conclusion (but not exactly the same intermediate step).

\begin{lemma} \label{l:energy-short}
Suppose the input instance satisfies the conditions in Assumption~\ref{a:assumption}.
Let $x^*$ be an optimal solution to \eqref{eq:P} 
and $f^*$ be the unit $s$-$t$ electrical flow supported in $x^*$.
The expected $s$-$t$ effective resistance of the subgraph $H$ returned by the randomized short path rounding algorithm is 
\[
\E\left[{\rm Reff}_H(s,t)\right] 
\leq \frac{2}{v_{\mathcal P}^2} \ene_{x^*}(f^*)
= \frac{2}{v_{\mathcal P}^2} \Reff_{x^*}(s,t),
\]
where $\mathcal P$ is the short path flow decomposition of $f^*$ as described in Definition~\ref{d:decomposition-short}.
\end{lemma}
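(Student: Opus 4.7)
The plan is to mimic the proof of Lemma~\ref{l:energy} with appropriate modifications for the short path decomposition. Define the random undirected flow $F := \sum_{i=1}^T \alpha\cdot \chi_{P_i}$ supported on $H$, which routes $T\alpha$ units from $s$ to $t$. Since each $P_i$ is sampled independently from $\mathcal{P}$ with probability $v_p/v_{\mathcal P}$, we have
\[
\E[\chi_{P_i}] = \sum_{p \in \mathcal{P}} \frac{v_p}{v_{\mathcal P}} \chi_p = \frac{f_{\mathcal P}}{v_{\mathcal P}}.
\]
Expanding $\ene_H(F) = \alpha^2 \sum_{i,j}\langle \chi_{P_i}, \chi_{P_j}\rangle$ and using independence for $i\neq j$ yields a cross-term contribution $T(T-1)\alpha^2 \cdot \|f_{\mathcal P}\|^2/v_{\mathcal P}^2$, and a diagonal contribution $T\alpha^2 \cdot \sum_e (f_{\mathcal P})_e / v_{\mathcal P}$ (using $\|\chi_p\|^2 = |p|$). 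Since $F/(T\alpha)$ is a unit $s$-$t$ flow and $T\alpha > 0$ by Lemma~\ref{l:T}, Thomson's principle gives
\[
\E[\Reff_H(s,t)] \leq \frac{\E[\ene_H(F)]}{T^2\alpha^2} = \frac{1}{Tv_{\mathcal P}} \sum_{e\in E} (f_{\mathcal P})_e + \frac{T-1}{T v_{\mathcal P}^2}\sum_{e\in E} (f_{\mathcal P})_e^2.
\]

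Next I would relate these quantities to $\Reff_{x^*}(s,t)$. Since $\mathcal{P}\subseteq\mathcal{P}^*$, the flow $f_{\mathcal P}$ is pointwise dominated by $f^*$, so $(f_{\mathcal P})_e \leq f^*_e$ for every $e$. Lemma~\ref{l:optimality} gives $f^*_e \geq \alpha x^*_e$ on every edge (both fractional and integral), which together with $x^*_e \leq 1$ implies $\sum_e f^*_e \leq \frac{1}{\alpha}\sum_e (f^*_e)^2/x^*_e = \frac{1}{\alpha}\Reff_{x^*}(s,t)$ and $\sum_e (f^*_e)^2 \leq \sum_e (f^*_e)^2/x^*_e = \Reff_{x^*}(s,t)$. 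Combining these,
\[
\E[\Reff_H(s,t)] \leq \left(\frac{1}{T\alpha v_{\mathcal P}} + \frac{T-1}{Tv_{\mathcal P}^2}\right) \Reff_{x^*}(s,t).
\]

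It remains to verify the coefficient is at most $2/v_{\mathcal P}^2$. Multiplying through by $v_{\mathcal P}^2$, this reduces to showing $\frac{v_{\mathcal P}}{T\alpha} + \frac{T-1}{T} \leq 2$, equivalently $v_{\mathcal P} \leq \alpha(T+1)$. Since $T = \lfloor 1/\alpha\rfloor$ implies $\alpha(T+1) > 1 \geq v_{\mathcal P}$, this is immediate, and the lemma follows.

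The main obstacle is that the short path flow $f_{\mathcal P}$ does \emph{not} satisfy the tight flow-conductance relation $f_e = \alpha x^*_e$ that drives the cleaner Lemma~\ref{l:energy} computation. One cannot simply rescale and invoke Lemma~\ref{l:energy} as a black box; instead one must carry $v_{\mathcal P}$ through the bookkeeping and rely only on the one-sided inequality $f^*_e \geq \alpha x^*_e$ together with $(f_{\mathcal P})_e \leq f^*_e$ to control both the linear and quadratic sums, which is what makes the final coefficient $2/v_{\mathcal P}^2$ come out with the exact constant $2$ rather than something weaker.
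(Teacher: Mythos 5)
Your proof is correct and follows the route the paper clearly intended (the paper omits the explicit proof, saying only that ``essentially the same proof [as Lemma~\ref{l:energy}] will work to get the same conclusion (but not exactly the same intermediate step)''). You correctly reproduce the second-moment calculation with the modified sampling probabilities $v_p/v_{\mathcal P}$, then supply exactly the two one-sided inequalities that replace the flow-conductance identity used in Lemma~\ref{l:energy}: pointwise domination $(f_{\mathcal P})_e \le f^*_e$ (from $\mathcal P \subseteq \mathcal P^*$) and $f^*_e \ge \alpha x^*_e$ together with $x^*_e \le 1$. The final coefficient check $v_{\mathcal P} \le \alpha(T+1)$, which holds because $\alpha(T+1) = \alpha(\lfloor 1/\alpha \rfloor + 1) > 1 \ge v_{\mathcal P}$, is a clean way to close the argument and matches the role that $1/\alpha - 1 \le T$ plays in the original Lemma~\ref{l:energy}.
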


The main difference of the analysis is to apply the following Hoeffding's inequality (instead of Markov's inequality) to bound the probability that the returned subgraph has significantly more than $k$ edges.

\begin{fact}[Hoeffding's Inequality] \label{f:hoeffding}
Let $X_1, \dots, X_n\in [0, M]$ be independent random variables. Let $X = \sum_{i=1}^n X_i$, and $\mu = \E[X]$, then for any $\delta > 0$,
\[
\Pr(X  \geq (1+\delta) \mu) \leq \exp\left( -\frac{2\delta^2 \mu^2}{n M^2}\right).
\]
\end{fact}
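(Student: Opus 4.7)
The plan is to prove the stated concentration bound via the classical Chernoff method combined with Hoeffding's lemma on the moment generating function (MGF) of bounded random variables.

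First, for any $s > 0$, I would apply Markov's inequality to the exponentiated random variable $e^{sX}$:
\[
\Pr(X \geq (1+\delta)\mu) = \Pr(e^{sX} \geq e^{s(1+\delta)\mu}) \leq e^{-s(1+\delta)\mu} \cdot \mathbb{E}[e^{sX}].
\]
Since the $X_i$ are independent, the MGF factorizes as $\mathbb{E}[e^{sX}] = \prod_{i=1}^n \mathbb{E}[e^{sX_i}]$, reducing the task to controlling the centered MGF $\mathbb{E}\bigl[e^{s(X_i - \mathbb{E}[X_i])}\bigr]$ of each coordinate.

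The main technical ingredient, and the principal obstacle, is Hoeffding's lemma: for any random variable $Y$ supported on an interval $[a,b]$,
\[
\mathbb{E}\bigl[e^{s(Y - \mathbb{E}[Y])}\bigr] \leq \exp\!\left(\tfrac{s^2 (b-a)^2}{8}\right).
\]
I would prove it by first centering so that $\mathbb{E}[Y]=0$, then using convexity of $y\mapsto e^{sy}$ to bound $e^{sy} \leq \tfrac{b-y}{b-a} e^{sa} + \tfrac{y-a}{b-a} e^{sb}$ for $y \in [a,b]$, and finally studying the log-MGF $\varphi(s) := \log \mathbb{E}[e^{sY}]$ through Taylor expansion at $0$. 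The key fact is that $\varphi''(s)$ equals the variance of $Y$ under the exponentially tilted measure with density proportional to $e^{sY}$, and any random variable supported in an interval of length $b-a$ has variance at most $(b-a)^2/4$ by Popoviciu's inequality, giving $\varphi(s) \leq s^2(b-a)^2/8$ after integrating twice and using $\varphi(0)=\varphi'(0)=0$. Getting the sharp constant $1/8$ is the delicate part.

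Applying Hoeffding's lemma to each $X_i \in [0,M]$ and multiplying gives
\[
\mathbb{E}\bigl[e^{s(X - \mu)}\bigr] \leq \exp\!\left(\tfrac{s^2 n M^2}{8}\right),
\]
and hence $\Pr(X \geq (1+\delta)\mu) \leq \exp\!\bigl(-s\delta\mu + s^2 n M^2/8\bigr)$. I would then optimize the right-hand side over $s > 0$: setting $s = 4\delta\mu/(nM^2)$ yields an exponent of $-2\delta^2\mu^2/(nM^2)$, matching the stated bound exactly.
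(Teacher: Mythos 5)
The paper states Hoeffding's inequality as a known Fact~\ref{f:hoeffding} and gives no proof, so there is no in-paper argument to compare against; your task was effectively to supply a standard proof, and you did. Your Chernoff-method derivation is correct: the exponential Markov bound, the factorization of the MGF by independence, Hoeffding's lemma $\E[e^{s(X_i-\E X_i)}]\leq e^{s^2M^2/8}$, and the optimization at $s=4\delta\mu/(nM^2)$ all check out and yield exactly the exponent $-2\delta^2\mu^2/(nM^2)$.

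One small remark on presentation: your sketch of Hoeffding's lemma interleaves two distinct proofs of it. The convexity bound $e^{sy}\leq \frac{b-y}{b-a}e^{sa}+\frac{y-a}{b-a}e^{sb}$ leads to an explicit upper bound on the MGF, after which one analyzes the function $L(h)=-ph+\log(1-p+pe^h)$ by Taylor expansion; this route does not pass through the tilted-measure variance. The alternative route you also invoke — that $\varphi''(s)$ is the variance of $Y$ under the exponentially tilted measure and hence at most $(b-a)^2/4$ by Popoviciu — is a self-contained second proof that never needs the convexity bound at all. Either works, but you should commit to one: writing out both pieces as if they were steps of a single argument would confuse a reader. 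This is a stylistic issue, not a gap; the overall proof is sound.
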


\begin{lemma} \label{l:budget-short}
Suppose the input instance satisfies the conditions in Assumption~\ref{a:assumption}.
Let $x^*$ be an optimal solution to \eqref{eq:P} and $f^*$ be the unit $s$-$t$ electrical flow supported in $x^*$.
Let $H$ be the subgraph returned by the randomized short path rounding algorithm given $x^*$ as input, and $|E(H)|$ be the number of edges in $H$.
Then, for any $\delta > 0$,
\[\Pr\left( |E(H)| \geq (1+\delta) k \right) \leq \exp\left( - \frac{2 \delta^2}{c^2 \alpha}\right),\]
where $c$ is the parameter in the short path flow decomposition in Definition~\ref{d:decomposition-short} and $\alpha$ is the flow-conductance ratio of $f^*$ and $x^*$ as defined in Lemma~\ref{l:optimality}.
\end{lemma}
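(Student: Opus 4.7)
The plan is to apply Hoeffding's inequality (Fact~\ref{f:hoeffding}) to the independent random variables $X_i := |P_i \cap E_F|$ counting the number of fractional edges contributed by the $i$-th sampled path. Since an integral edge appears in $H$ only if it lies in $E_I$, and each fractional edge of $H$ must be contained in at least one $P_i$, we have the deterministic bound $|E(H)| \le |E_I| + \sum_{i=1}^T X_i$, which reduces the tail bound on $|E(H)|$ to a tail bound on $\sum_i X_i$.

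First, I would collect the two ingredients needed for Hoeffding. By the short-path condition in Definition~\ref{d:decomposition-short}, each $X_i$ takes values in $[0, M]$ with $M := c\alpha x^*_F$; the $X_i$ are mutually independent because the algorithm samples each $P_i$ independently. For the expectation, combining Lemma~\ref{l:decomposition} with Lemma~\ref{l:optimality} gives $\sum_{p \in \mathcal{P},\, p \ni e} v_p \le \sum_{p \in \mathcal{P}^*,\, p \ni e} v_p = f^*_e = \alpha x^*_e$ for every $e \in E_F$, so interchanging the order of summation yields
\[
\E[X_i] = \frac{1}{v_{\mathcal{P}}} \sum_{p \in \mathcal{P}} v_p \, |p \cap E_F| \le \frac{\alpha x^*_F}{v_{\mathcal{P}}},
\]
and hence $\mu := \E\bigl[\sum_i X_i\bigr] \le T\alpha x^*_F/v_{\mathcal{P}} \le x^*_F/v_{\mathcal{P}}$ by Lemma~\ref{l:T}. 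Without loss of generality the budget is tight (as in Lemma~\ref{l:budget}), so $|E_I| + x^*_F = k$, and the event $|E(H)| \ge (1+\delta)k$ forces $\sum_i X_i \ge x^*_F + \delta k$.

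Finally, I would plug into Hoeffding's inequality with deviation $t := x^*_F + \delta k - \mu$. Using $\mu \le x^*_F/v_{\mathcal{P}}$ together with $v_{\mathcal{P}} \ge 1 - 1/c$ from Lemma~\ref{l:flow-value}, one concludes $t \ge \delta k$ up to absorbing an $O(1)$ factor, and the exponent becomes
\[
\frac{2t^2}{TM^2} \ge \frac{2(\delta k)^2}{(1/\alpha)\, c^2 \alpha^2 (x^*_F)^2} = \frac{2\delta^2}{c^2 \alpha} \cdot \Bigl(\frac{k}{x^*_F}\Bigr)^2 \ge \frac{2\delta^2}{c^2 \alpha},
\]
using $T \le 1/\alpha$ from Lemma~\ref{l:T} and $x^*_F \le k$. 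The main technical obstacle is the slack $1/v_{\mathcal{P}}$ in the bound on $\mu$, which could a priori push $\mu$ above $x^*_F$ and shrink $t$; this is handled either by appealing directly to Lemma~\ref{l:flow-value} ($v_{\mathcal{P}} \ge 1 - 1/c$ so the deviation loses only a constant factor that can be absorbed), or more cleanly by observing that the rescaled unit $s$-$t$ flow $f_{\mathcal{P}}/v_{\mathcal{P}}$ still respects a flow-conductance-ratio-type inequality analogous to Lemma~\ref{l:optimality}, which tightens $\mu \le x^*_F$ directly.
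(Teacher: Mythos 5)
Your setup is correct and matches the paper's: you decompose $|E(H)| \le |E_I| + \sum_{i=1}^T X_i$ with independent $X_i := |P_i \cap E_F|$, observe the short-path bound $X_i \in [0, c\alpha x^*_F]$, and aim to apply Hoeffding's inequality. The elementary expectation bound $\E[X_i] \le \alpha x^*_F / v_{\mathcal P}$ that you obtain by discarding the long paths' flow is also correct, and you rightly flag the slack factor $1/v_{\mathcal P}$ as the obstacle. However, neither of your two proposed fixes actually closes the gap, and the lemma does not follow from $\E[X_i] \le \alpha x^*_F / v_{\mathcal P}$.

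Your first fix (absorbing an $O(1)$ loss via $v_{\mathcal P} \ge 1 - 1/c$) fails because the deviation $t := x^*_F + \delta k - \mu$ can be \emph{negative}, not merely smaller by a constant. With $\mu$ only bounded by $x^*_F/v_{\mathcal P}$ and $v_{\mathcal P}$ as small as $1 - 1/c$, one has $\mu - x^*_F$ potentially as large as $x^*_F/(c-1)$, so $t \ge \delta k - x^*_F/(c-1)$. In the application of this lemma one takes $c = 1/\eps$ and $\delta = \eps$; then $t \ge \eps k - \eps x^*_F/(1-\eps)$, which is negative whenever $x^*_F$ is close to $k$. There is no constant to absorb---Hoeffding's bound is vacuous when the target lies below the mean bound. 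The lemma as stated must hold for \emph{all} $\delta > 0$, so this is fatal. Your second fix (that $f_{\mathcal P}/v_{\mathcal P}$ satisfies a pointwise flow-conductance inequality) is simply false: if some fractional edge $e$ lies only on short paths, then $(f_{\mathcal P})_e = f^*_e = \alpha x^*_e$, so $(f_{\mathcal P})_e / v_{\mathcal P} = \alpha x^*_e / v_{\mathcal P} > \alpha x^*_e$.

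The missing idea, which the paper uses, is that the flow discarded on long paths is \emph{concentrated on fractional edges}. Writing $\ol{f}_e := f^*_e - (f_{\mathcal P})_e$ for the flow on $e$ contributed by long paths, the definition of a long path ($|p \cap E_F| \ge c\alpha x^*_F$) gives
\[
\sum_{e \in E_F} \ol{f}_e = \sum_{p \in \mathcal{P}^*\setminus\mathcal{P}} v_p\,|p \cap E_F| \ge c\alpha x^*_F (1 - v_{\mathcal P}).
\]
Plugging this into $\E[X_i] = \sum_{e \in E_F} (\alpha x^*_e - \ol{f}_e)/v_{\mathcal P}$ yields $\E[X_i] \le \alpha x^*_F \cdot \bigl(c - (c-1)/v_{\mathcal P}\bigr) \le \alpha x^*_F$ (using $v_{\mathcal P} \le 1$, $c > 1$), i.e.\ the $1/v_{\mathcal P}$ factor cancels exactly. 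With this clean bound $\E[\sum_i X_i] \le T\alpha x^*_F \le x^*_F$, the deviation $\delta x^*_F$ is nonnegative and the Hoeffding exponent simplifies to $2\delta^2 (x^*_F)^2/(T c^2\alpha^2 (x^*_F)^2) \ge 2\delta^2/(c^2\alpha)$ as required. Without this argument, the chain of inequalities in your last display does not go through.
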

\begin{proof}
As in Lemma~\ref{l:budget}, we assume pessimistically that all integral edges $E_I$ will be included in $H$, and so we focus on the fractional edges $E_F$.
Let $X_{i,e}$ be the indicator variable of whether the edge $e$ is sampled in the $i$-th iteration of the short path rounding algorithm,
and $X_{i,F} := \sum_{e \in E_F} X_{i,e}$ be the total number of fractional edges sampled in the $i$-th iteration.
Let $X_F$ be the total number of fractional edges in $H$. Note that $X_F \leq \sum_{i=1}^T X_{i,F}$, since if some fractional edge was sampled in different iterations, we only count it once in $X_F$.
By linearity of expectation, 
$\E[X_F] \leq \sum_{i=1}^T \E[X_{i,F}]$.

Let ${\mathcal P}^*$ be the flow path decomposition of $f^*$ in Lemma~\ref{l:decomposition}, and ${\mathcal P}$ be the short path flow decomposition of $f^*$ as described in Definition~\ref{d:decomposition-short}.
For an edge $e$, recall that $(f_{\mathcal P})_e := \sum_{p \in {\mathcal P}: p \ni e} v_p$ is the total flow value on $e$ from the short paths in ${\mathcal P}$.
As we scaled the probability of each path by $1/v_{\mathcal P}$ in the rounding algorithm, the probability that edge $e$ is sampled in the $i$-th iteration is $(f_{\mathcal P})_e / v_{\mathcal P}$.
Let $\ol{f}_e := f^*_e - (f_{\mathcal P})_e$ be the total flow value on $e$ from the long paths in ${\mathcal P}^* - {\mathcal P}$.
The expected value of $X_{i,F}$ is
\[
\E[X_{i,F}]
= \sum_{e \in E_F} \E[X_{i,e}]
= \sum_{e \in E_F} \frac{(f_{\mathcal P})_e}{v_{\mathcal P}}
= \sum_{e \in E_F} \frac{f^*_e - \ol{f}_e}{v_{\mathcal P}}
= \sum_{e \in E_F} \frac{\alpha x^*_e - \ol{f}_e}{v_{\mathcal P}}
\]
By the definition of the long paths,
\[
\sum_{e \in E_F} \ol{f}_e 
= \sum_{e \in E_F} \sum_{p \in {\mathcal P}^* - {\mathcal P}} v_p
= \sum_{p \in {\mathcal P}^* - {\mathcal P}} v_p \cdot |p \cap E_F|
\geq c \alpha x^*_F \sum_{p \in {\mathcal P}^* - {\mathcal P}} v_p
= c \alpha x^*_F (1-v_{\mathcal P}),
\]
where we recall that $x^*_F = \sum_{e \in E_F} x^*_e$.
Therefore,
\[
\E[X_{i,F}]
= \sum_{e \in E_F} \frac{\alpha x^*_e - \ol{f}_e}{v_{\mathcal P}}
\leq \alpha x^*_F \cdot \frac{1 - c + cv_{\mathcal P}}{v_{\mathcal P}}
= \alpha x^*_F \cdot (c - \frac{c-1}{v_{\mathcal P}})
\leq \alpha x^*_F,
\]
where the last inequality uses that $v_{\mathcal P} \leq 1$ and $c > 1$.
It follows that $\E[X_F] \leq T \alpha x^*_F \leq x^*_F$.

As each iteration is independent, the random variables $X_{i,F}$ for $1 \leq i \leq T$ are independent.
Since we only use short paths, the maximum value of each $X_{i,F}$ is at most $c \alpha x^*_F$.
So we can apply Hoeffding's inequality to show that
\[
\Pr\left( X_F \geq (1+\delta) x^*_F \right) 
\leq \exp\left( - \frac{2\delta^2 (x^*_F)^2}{T c^2 \alpha^2 (x^*_F)^2} \right)
\leq \exp\left( - \frac{2\delta^2}{c^2 \alpha} \right).
\]
Let $X_I$ be the total number of integral edges in $H$.
As $X_I \leq |E_I|$, we conclude that
\[
\Pr\Big( |E(H)| \geq (1+\delta) k \Big)
= \Pr\Big( X_I + X_F \geq (1+\delta) (|E_I| + x^*_F)\Big)
\leq \Pr \Big( X_F \geq (1+\delta) x^*_F\Big)
\leq \exp\left( - \frac{2\delta^2}{c^2 \alpha} \right).
\]
\end{proof}

As in Section~\ref{ss:bicriteria}, we can combine Lemma~\ref{l:budget-short} and Lemma~\ref{l:energy-short} to show that the randomized short path rounding algorithm is a bicriteria approximation algorithm.

\begin{theorem} \label{t:bicriteria-short}
Suppose the input instance satisfies the conditions in Assumption~\ref{a:assumption}.
Suppose further that $k \geq d_{st}/\eps^{10}$, 
where $\eps > 0$ is an error parameter satisfying $\eps \leq \eta$ for a small constant $\eta$.
Let $x^*$ be an optimal solution to \eqref{eq:P}.
Given $x^*$, the randomized short path rounding algorithm with $c = 1/\eps$ will return a subgraph $H$ with at most $(1+\eps)k$ edges and $\Reff_H(s,t) \leq (2+10\eps) \cdot \Reff_{x^*}(s,t)$ with probability at least $\eps$.
\end{theorem}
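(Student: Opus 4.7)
The plan is to follow the same two-part structure as the proof of Theorem~\ref{t:bicriteria}, bounding separately the probability that the budget is exceeded and the probability that the effective resistance is too large, and then combining by a union bound. The crucial point is that the hypothesis $k \geq d_{st}/\eps^{10}$ forces $\alpha$ to be extremely small, so that the Hoeffding-type concentration in Lemma~\ref{l:budget-short} is strong enough to yield a $(1+\eps)$-factor budget guarantee, rather than the factor-of-two loss obtained through Markov's inequality in Theorem~\ref{t:bicriteria}.

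First, I would invoke Lemma~\ref{l:alpha} together with the large-budget hypothesis to obtain $\alpha^2 \leq d_{st}/k \leq \eps^{10}$, hence $\alpha \leq \eps^5$. Substituting $\delta = \eps$ and $c = 1/\eps$ into Lemma~\ref{l:budget-short} then gives
\[
\Pr\bigl( |E(H)| \geq (1+\eps) k \bigr) \leq \exp\bigl(-2\eps^4/\alpha\bigr) \leq \exp(-2/\eps),
\]
which is negligible compared to $\eps$ whenever $\eps$ is below an absolute constant.

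Next, I would combine Lemma~\ref{l:flow-value} (with $c = 1/\eps$, giving $v_{\mathcal P} \geq 1-\eps$) with the expected-energy bound $\E[\Reff_H(s,t)] \leq (2/v_{\mathcal P}^2) \Reff_{x^*}(s,t)$ from Lemma~\ref{l:energy-short}, and apply Markov's inequality with threshold $(2+10\eps) \Reff_{x^*}(s,t)$:
\[
\Pr\bigl( \Reff_H(s,t) \geq (2+10\eps) \Reff_{x^*}(s,t) \bigr) \leq \frac{2}{(2+10\eps)(1-\eps)^2}.
\]
A direct expansion gives $(2+10\eps)(1-\eps)^2 = 2 + 6\eps - 18\eps^2 + 10\eps^3 \geq 2 + 4\eps$ for $\eps \leq 1/9$, so the right-hand side is at most $2/(2+4\eps) \leq 1 - 4\eps/3$, and the energy event succeeds with probability at least $4\eps/3$.

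Finally, a union bound yields a simultaneous success probability of at least $4\eps/3 - \exp(-2/\eps) \geq \eps$, where the last inequality holds once $\eps$ is below a small enough constant $\eta$ (so that $\exp(-2/\eps) \leq \eps/3$). The main obstacle is the constant chasing in the energy step: the slack of exactly $10\eps$ in the threshold is chosen so that Markov's inequality produces a success probability of order $\eps$ after absorbing the $1/v_{\mathcal P}^2 \leq 1 + O(\eps)$ blow-up, leaving just enough margin to absorb the negligible failure probability of the budget event.
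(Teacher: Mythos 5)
Your proof is correct and follows essentially the same route as the paper: both use Lemma~\ref{l:alpha} to get $\alpha \leq \eps^5$, plug $\delta=\eps$, $c=1/\eps$ into Lemma~\ref{l:budget-short} for the budget bound, apply Markov's inequality with Lemma~\ref{l:energy-short} and Lemma~\ref{l:flow-value} for the resistance bound, and close with a union bound. Your more explicit expansion of $(2+10\eps)(1-\eps)^2$ merely makes the paper's ``for sufficiently small $\eps$'' concrete; there is no substantive difference.
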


\begin{proof}
The additional assumption $k \geq d_{st}/\eps^{10}$ implies that $\alpha \leq \eps^5$ by Lemma~\ref{l:alpha}.

Setting $c = 1/\eps$ and $\delta = \eps$,
it follows from Lemma~\ref{l:budget-short} that
\[
\Pr ( |E(H)| \geq (1+\eps)k )
\leq \exp\left( - \frac{2\delta^2}{c^2 \alpha} \right)
\leq \exp\left( - \frac{2}{\eps} \right) 
< \eps,
\]
where the last inequality holds for $\eps > 0$.

Since $c = 1/\eps$, Lemma~\ref{l:flow-value} implies that $v_{\mathcal P} \geq 1-\eps$ for the short path flow decomposition in Definition~\ref{d:decomposition-short}.
Using Markov's inequality and Lemma~\ref{l:energy-short}, for sufficiently small $\eps$ we have
\begin{eqnarray*}
\Pr \Big( \Reff_H(s,t) \geq (2+10\eps) \cdot \Reff_{x^*}(s,t) \Big)
& \leq & \frac{\E\left[ \Reff_H(s,t) \right]}{(2+10\eps) \cdot \Reff_{x^*}(s,t)}
\\
& \leq & \frac{2}{v_{\mathcal P}^2 (2+10\eps)} 
\leq \frac{2}{(1-\eps)^2 (2+10\eps)} < 1-2\eps
\end{eqnarray*}
Therefore, with probability at least $\eps$, the subgraph $H$ returned by the randomized short path rounding algorithm satisfies both properties.
\end{proof}

Using the same arguments as in Section~\ref{ss:constant}, we can turn the above bicriteria approximation algorithm into a true approximation algorithm.

\begin{theorem}
Suppose the input instance satisfies the conditions in Assumption~\ref{a:assumption}.
Suppose further that $k \geq 2d_{st}/\eps^{10}$, 
where $\eps > 0$ is an error parameter satisfying $\eps \leq \eta$ for a small constant $\eta$.
There is a polynomial time $(2+O(\eps))$-approximation algorithm for the $s$-$t$ effective resistance network design problem.
\end{theorem}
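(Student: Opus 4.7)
\begin{proofof}{Theorem (Large Budget Case)}
The plan is to mirror the bicriteria-to-true conversion carried out in Section~\ref{ss:constant} for the $8$-approximation, but using Theorem~\ref{t:bicriteria-short} in place of Theorem~\ref{t:bicriteria}. The idea is to scale the budget down by a small multiplicative factor so that the bicriteria guarantee on the edge count becomes a true feasibility guarantee, while losing only an additional $(1+\eps)$ factor in the effective resistance.

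Concretely, given the original budget $k \geq 2d_{st}/\eps^{10}$, I would run the randomized short path rounding algorithm on an optimal solution $z^*$ to the convex program~\eqref{eq:P} with scaled budget $k' := k/(1+\eps)$. First one has to check that $z^*$ falls under the hypothesis of Theorem~\ref{t:bicriteria-short}: since $1+\eps \leq 2$ for small $\eps$, we have $k' \geq k/2 \geq d_{st}/\eps^{10}$, so Assumption~\ref{a:assumption} and the extra large-budget hypothesis are satisfied at scale $k'$. Applying Theorem~\ref{t:bicriteria-short} to $z^*$ yields, with probability at least $\eps$, a subgraph $H$ with
\[
|E(H)| \leq (1+\eps) k' = k
\quad\text{and}\quad
\Reff_H(s,t) \leq (2+10\eps) \cdot \Reff_{z^*}(s,t) = (2+10\eps) \cdot \opt(k').
\]

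To relate $\opt(k')$ back to $\opt(k)$, I would use the same scaling trick as in the proof of Theorem~\ref{t:constant}: if $x^*$ is an optimal solution for the original budget $k$, then $x^*/(1+\eps)$ is feasible for budget $k'$, and its objective value is exactly $(1+\eps) \cdot \Reff_{x^*}(s,t) = (1+\eps) \cdot \opt(k)$ by the linearity of $L_x$ in $x$ and the fact that $\Reff_x(s,t) = b_{st}^T L_x^\dagger b_{st}$ scales inversely with the conductance. Therefore $\opt(k') \leq (1+\eps)\opt(k)$, and combining gives $\Reff_H(s,t) \leq (2+10\eps)(1+\eps) \opt(k) = (2 + O(\eps)) \opt(k)$, which is the desired approximation guarantee.

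Two remaining issues concern running time. First, to boost the success probability from $\eps$ to a constant, I would repeat the randomized procedure $O(1/\eps)$ times and return the best feasible subgraph. Second, the number of iterations $T = \lfloor 1/\alpha \rfloor$ in each run is polynomial in $n$ only when $\alpha$ is not too small; the main obstacle here is the same degenerate regime encountered in Theorem~\ref{t:constant}. I would handle it by the same case split: if the flow-conductance ratio $\alpha$ at the scaled optimum is at most $1/(4m)$, then by the argument at the end of the proof of Theorem~\ref{t:constant}, the union of the integral flow paths alone already gives a subgraph with at most $k'\leq k$ edges whose effective resistance is at most $2\,\Reff_{z^*}(s,t) \leq 2(1+\eps)\opt(k)$, which is within the $(2+O(\eps))$ target; otherwise $T = O(m)$ and the short-path rounding runs in polynomial time. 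Putting these pieces together yields a polynomial time randomized $(2+O(\eps))$-approximation, and standard derandomization via pessimistic estimators (as remarked after Theorem~\ref{t:constant}) can be applied if a deterministic algorithm is desired.
\end{proofof}
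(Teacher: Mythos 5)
Your proposal is correct and mirrors the paper's own proof essentially step for step: scale the budget to $k/(1+\eps)$, verify the scaled budget still meets the $d_{st}/\eps^{10}$ threshold, invoke Theorem~\ref{t:bicriteria-short}, relate $\opt(k/(1+\eps))$ to $\opt(k)$ via the same conductance-scaling argument from Theorem~\ref{t:constant}, and handle polynomial running time by the $\alpha \leq 1/(4m)$ case split and $O(1/\eps)$ repetitions.
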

\begin{proof}
As in the proof of Theorem~\ref{t:constant}, we apply the bicriteria approximation algorithm in Theorem~\ref{t:bicriteria-short} with input $x^*$, an optimal solution to \eqref{eq:P} with the scaled-down budget $k/(1+\eps)$, to return a subgraph $H$.
As the new budget $k/(1+\eps)$ is still greater than $d_{st}/\eps^{10}$, by Theorem~\ref{t:bicriteria-short}, with probability at least $\eps$ the subgraph $H$ satisfies $|E(H)| \leq (1+\eps) k / (1+\eps) = k$ and
\[
\Reff_H(s,t) \leq \big(2+O(\eps)\big) \cdot {\rm opt}\left( \frac{k}{1+\eps} \right)
\leq \big(2+O(\eps)\big)(1+\eps) \cdot {\rm opt}(k) 
\leq \big( 2+O(\eps) \big) \cdot {\rm opt}(k),
\]
where we used the notations and arguments in Theorem~\ref{t:constant}.

For the time complexity, note that $\alpha \leq \eps^5$ by Lemma~\ref{l:alpha} and the large budget assumption, and so we can assume that $\eps^5 \geq \alpha \geq 1/(4m)$, as otherwise there is a simple $2$-approximation algorithm in the case $\alpha \leq 1/(4m)$ described in Theorem~\ref{t:constant}.
Therefore, the success probability can be boosted to a constant in polynomial number of executions of the bicriteria algorithm in Theorem~\ref{t:bicriteria-short}.
\end{proof}

\subsection{Cost Minimization with $s$-$t$ Effective Resistance Constraint} \label{ss:dual}

In this subsection, we consider a ``dual'' problem of the $s$-$t$ effective resistance minimization problem.
In the dual problem, we are given a graph $G=(V,E)$ and a target effective resistance $R$, and the objective is to find a subgraph $H$ of minimum number of edges such that $\Reff_H(s,t) \leq R$. 
The same NP-hardness proof in Section~\ref{ss:NPc} can be used to show that the dual problem is NP-complete. 

Using the same techniques for the $s$-$t$ effective resistance minimization problem, we can obtain a constant factor bicriteria approximation algorithm for this problem.
As the proofs are very similar, we will just state the results and highlight the differences.
The main difference is that the convex program has unbounded integrality gap, and as a consequence we cannot turn the bicriteria approximation algorithm into a true approximation algorithm as in the $s$-$t$ effective resistance network design problem.
Using the same technique as in Theorem~\ref{t:constant}, however, we can return an $8$-approximation to the optimal number of edges without violating the effective resistance constraint, if we are allowed to buy up to four copies of the same edge (see Theorem~\ref{t:constant-dual}).


\subsubsection*{Convex Programming Relaxation}

We consider the following natural convex programming relaxation for the dual problem.
\begin{equation}
\begin{aligned}
& \underset{x \in \R^m}{\min} 
& & \sum_{e \in E} x_e \\ 
& \text{subject to} & & \Reff_x(s,t) = b_{st}^T L_x^\dagger b_{st} \leq R, \\
&&& 0 \leq x_e \leq 1 \qquad \forall e \in E.
\end{aligned} \tag{DCP} \label{eq:DP}
\end{equation}


\subsubsection*{Integrality Gap Examples}


Unlike the $s$-$t$ effective resistance network design problem, 
the convex program \eqref{eq:DP} has unbounded integrality gap. 
Consider the following example in Figure~\ref{fig:dual_intgap}, 
where the top path has length $n\!-\!1$, and the bottom path has only one edge. 
The target effective resistance is $R = (n\!-\!1)^2/((n\!-\!1)^2 + \eps)$ for some constant $\eps > 0$. 
Since $R < 1$, to satisfy the effective resistance constraint, any integral solution must contain both paths and thus has cost $n$. 
However, the fractional solution can set $x_e = \eps / (n\!-\!1)$ for each edge in the top path and set $x_e = 1$ for the bottom edge. 
It can be checked that this fractional solution satisfies the constraint, 
and the total cost is 1+$\eps$. 
Therefore, the integrality gap of this example is $\Omega(n)$.

\begin{figure}[!ht]
\centering
\resizebox{0.65\textwidth}{!}{
\begin{tikzpicture}
	\tikzset{VertexStyle/.style = {shape = circle, fill = black, minimum size = 0.2}}
	\Vertex[Math,x=0,y=0,LabelOut,Lpos=180]{s}
	\Vertex[x=10,y=0,LabelOut]{t}	
	
    \Vertex[x=2,y=1,NoLabel]{v1}
    \Vertex[x=4,y=1,NoLabel]{v2}
    \node (dots) at (5, 1) {$\dots$};
    \Vertex[x=6,y=1,NoLabel]{v3}
    \Vertex[x=8,y=1,NoLabel]{v4}
    
    \Edge(v1)(v2)
    \Edge(v3)(v4)
	\Edge(t)(s)	
    
    \Edge(s)(v1)
    \Edge(v4)(t)
	
\end{tikzpicture}
}
\caption{$\Omega(n)$ integrality gap example.} \label{fig:dual_intgap}
\end{figure}
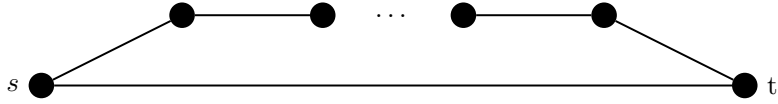

\subsubsection*{Optimal Solutions}

Although the convex program \eqref{eq:DP} has a large integrality gap, 
the same rounding technique can be used to obtain a constant factor bicriteria approximation algorithm. 
%
%
%
%
Exactly the same characterization of the optimality conditions as in the $s$-$t$ effective resistance network design problem holds, such that any optimal solution satisfies the flow-conductance ratio $\alpha > 0$ as described in Lemma~\ref{l:optimality}.

Analogous to Lemma~\ref{l:alpha}, we can prove an upper bound on $\alpha$ that
\[
\alpha^2 \leq \frac{R}{d_{st}}.
\]
Analogous to Lemma~\ref{l:path}, we can prove a lower bound on any optimal solution $x$ that
\[
{\rm opt} := \sum_{e \in E} x_e \geq \frac{d^2_{st}}{R}.
\]
We can assume that $R < d_{st}$, as otherwise a shortest $s$-$t$ path is an optimal solution, and so we can assume that $0 < \alpha < 1$.

\subsubsection*{Rounding Algorithm} 

The rounding algorithm is exactly the same as in Section~\ref{ss:bicriteria}.
The same proofs as in Lemma~\ref{l:budget} and Lemma~\ref{l:energy} will imply that, with probability $\Omega(\alpha)$, the subgraph $H$ returned by the randomized path rounding algorithm satisfies
\[
|E(H)| \leq 2 \sum_{e \in E} x^*_e 
\quad {\rm and} \quad
\Reff_H(s,t) \leq 4\Reff_{x^*}(s,t),
\]
where $x^*$ is an optimal solution to \eqref{eq:DP} and so $|E(H)| \leq 2{\rm opt}$.
The same lower bound on $\alpha = \Omega(1/m)$ as described in Theorem~\ref{t:constant} applies, and so this is a randomized polynomial time algorithm.

\subsubsection*{An Alternative Bicriteria Approximation Algorithm}

In the $s$-$t$ effective resistance network design problem, we turn a bicriteria approximation algorithm into a true approximation algorithm, by scaling down the budget $k$ by a factor of two and running the bicriteria approximation algorithm.
For the proof, we argue ${\rm opt}(k/2) \leq 2 \cdot {\rm opt}(k)$ by scaling down an optimal solution $x^*$ with budget $k$ to a solution $x^*/2$ with budget $k/2$.

In the dual problem, we can also try a similar approach, by scaling down the target effective resistance $R$ by a factor of $4$ and run the bicriteria approximation algorithm.
However, we cannot argue that ${\rm opt}(R/4) \leq 4 \cdot {\rm opt}(R)$, as an optimal solution $x^*$ with effective resistance $R$ may not be able to scale up to $4x^*$ with effective resistance $R/4$ because of the capacity constraints $0 \leq x_e \leq 1$ for $e \in E$.
This approach would work if we are allowed to violate the capacity constraint by a factor of $4$.

\begin{theorem} \label{t:constant-dual}
Given an weighted input graph $G=(V,E)$, there is a polynomial time algorithm for the dual problem which returns a multi-subgraph $H$ with $|E(H)| \leq 8{\rm opt}$ and $\Reff_H(s,t) \leq R$ where there are at most $4$ parallel copies of each edge.
\end{theorem}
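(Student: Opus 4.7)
The plan is to reduce to the bicriteria guarantee for the dual problem (the $|E(H)| \leq 2\sum_e x_e^*$, $\Reff_H(s,t) \leq 4\Reff_{x^*}(s,t)$ statement from the ``Rounding Algorithm'' paragraph of Section~\ref{ss:dual}) by scaling down the target resistance from $R$ to $R/4$ and allowing the capacity of each edge to be violated by a factor of $4$. Concretely, I replace each edge $e\in E$ by $4$ parallel copies (each of unit conductance and unit cost) to obtain an auxiliary multigraph $G'=(V,E')$, and run the standard convex program \eqref{eq:DP} on $G'$ with target $R/4$. For any integral optimal solution $H^*\subseteq E$ to the original dual problem with cost ${\rm opt}$, including all $4$ parallel copies of every edge of $H^*$ gives a feasible integral solution in $G'$ of cost $4\cdot{\rm opt}$ whose effective resistance is exactly $R/4$, since scaling all conductances by $4$ scales the effective resistance by $1/4$. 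Hence an optimal fractional solution $z^*$ to \eqref{eq:DP} on $G'$ with target $R/4$ satisfies $\sum_{e'\in E'} z^*_{e'} \leq 4\cdot{\rm opt}$ and $\Reff_{z^*}(s,t)\leq R/4$.

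Next, I apply the randomized path rounding algorithm of Section~\ref{ss:algorithm} to $z^*$ on $G'$. Because every copy in $G'$ is unit cost and unit conductance, the machinery developed in Sections~\ref{ss:optimality}--\ref{ss:bicriteria} transfers verbatim: the optimality characterization and flow-conductance ratio $\alpha$ of Lemma~\ref{l:optimality}, the flow decomposition of Lemma~\ref{l:decomposition}, and the bicriteria estimates of Lemmas~\ref{l:budget} and~\ref{l:energy} all hold on $G'$ without change. Consequently, with probability $\Omega(\alpha)$ the rounded subgraph $H'\subseteq E'$ satisfies
\[
|E(H')| \leq 2 \sum_{e'\in E'} z^*_{e'} \leq 8\cdot{\rm opt}
\quad\text{and}\quad
\Reff_{H'}(s,t) \leq 4\Reff_{z^*}(s,t) \leq R.
\]
Viewing $H'$ as a multi-subgraph $H$ of $G$, the construction of $G'$ guarantees that $H$ uses at most $4$ parallel copies of each original edge, meeting the specification in the theorem.

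Time complexity is handled by the same two-case argument as in the proof of Theorem~\ref{t:constant}: when $\alpha = \Omega(1/m)$, repeating the rounding $O(m)$ times boosts the success probability to a constant; when $\alpha \leq 1/(4m)$, the total value on fractional paths of the decomposition of the electrical flow supported in $z^*$ is at most $1/4$, so simply taking the union of the integral paths yields a valid multi-subgraph meeting both guarantees deterministically (the same estimate $\Reff \leq \Reff_{z^*}/(3/4)^2 \leq 2\Reff_{z^*}$ used at the end of Theorem~\ref{t:constant} applies). The main (minor) point to verify is that the bicriteria argument, which was written up for the primal problem in Section~\ref{ss:bicriteria}, applies to $z^*$ for the dual convex program \eqref{eq:DP} on $G'$; this is exactly the content of the ``Rounding Algorithm'' paragraph of Section~\ref{ss:dual}, so no new technical work is needed beyond the scaling construction.
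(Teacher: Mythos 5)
Your proof is correct and implements the exact approach the paper sketches in the paragraph preceding the theorem: scale the target resistance down to $R/4$, relax the capacity constraint by a factor of $4$, and invoke the bicriteria guarantee. Your multigraph construction (replacing each edge by four unit-capacity copies) is simply a clean way to realize the relaxed constraint $0 \leq x_e \leq 4$ while keeping the optimality characterization of Lemma~\ref{l:optimality} and the rounding machinery of Sections~\ref{ss:algorithm}--\ref{ss:bicriteria} applicable verbatim, and the handling of the small-$\alpha$ case mirrors the argument at the end of Theorem~\ref{t:constant}.
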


\section{Dynamic Programming Algorithms for Series-Parallel Graphs} \label{s:dynamic}

In this section, we will present the dynamic programming algorithms for solving the weighted $s$-$t$ effective resistance network design problem on series-parallel graphs.
We first review the definitions of series-parallel graphs in Section~\ref{ss:SP}.
Then, we present the exact algorithm in Theorem~\ref{t:SP} when every edge has the same cost in Section~\ref{ss:SP-unit}, and the fully polynomial time approximation scheme in Theorem~\ref{t:SP} in Section~\ref{ss:SP-FPTAS}.

\subsection{Series-Parallel Graphs} \label{ss:SP}

\begin{definition} [two-terminal series-parallel graph]
A two-terminal series-parallel graph (SP graph) is a graph with two distinguished vertices (the source $s$ and the target $t$) that can be constructed recursively as follows:
    \begin{itemize}
    \item Base case: A single edge $(s, t)$
    \item Compose step: If $G_1$ and $G_2$ are two series parallel graphs with source $s_i$ and target $t_i$ ($i=1,2$), then we can combine them in two ways:
      \begin{itemize}
        \item Series-composition: We identify $t_1$ with $s_2$ as the same vertex, the source of the new graph is $s_1$ and the target is $t_2$.
        \item Parallel-composition: We identify $s_1$ with $s_2$ as the same vertex and $t_1$ with $t_2$ as the same vertex, the new source is $s_1 = s_2$ and the new target is $t_1 = t_2$.
      \end{itemize}
    \end{itemize}
\end{definition}

Given the sequence of steps of constructing a series-parallel graph $G$, we can define a tree $T$ (SP-tree) as follows.

\begin{definition} [SP-tree]~
    \begin{itemize}
    \item Leaf node: If $G$ is a single edge, then $T$ is a single node containing the edge.
    \item Recurse step: $G$ is either a series-composition (S) or a parallel-composition (P) of $G_1$ and $G_2$, then $T$ is a S-node (P-node) containing $G$, and its children are roots of the SP-trees of $G_1$ and $G_2$.
    \end{itemize}
\end{definition}

\begin{figure}[ht]
	\centering
	\includegraphics[width=0.8\textwidth]{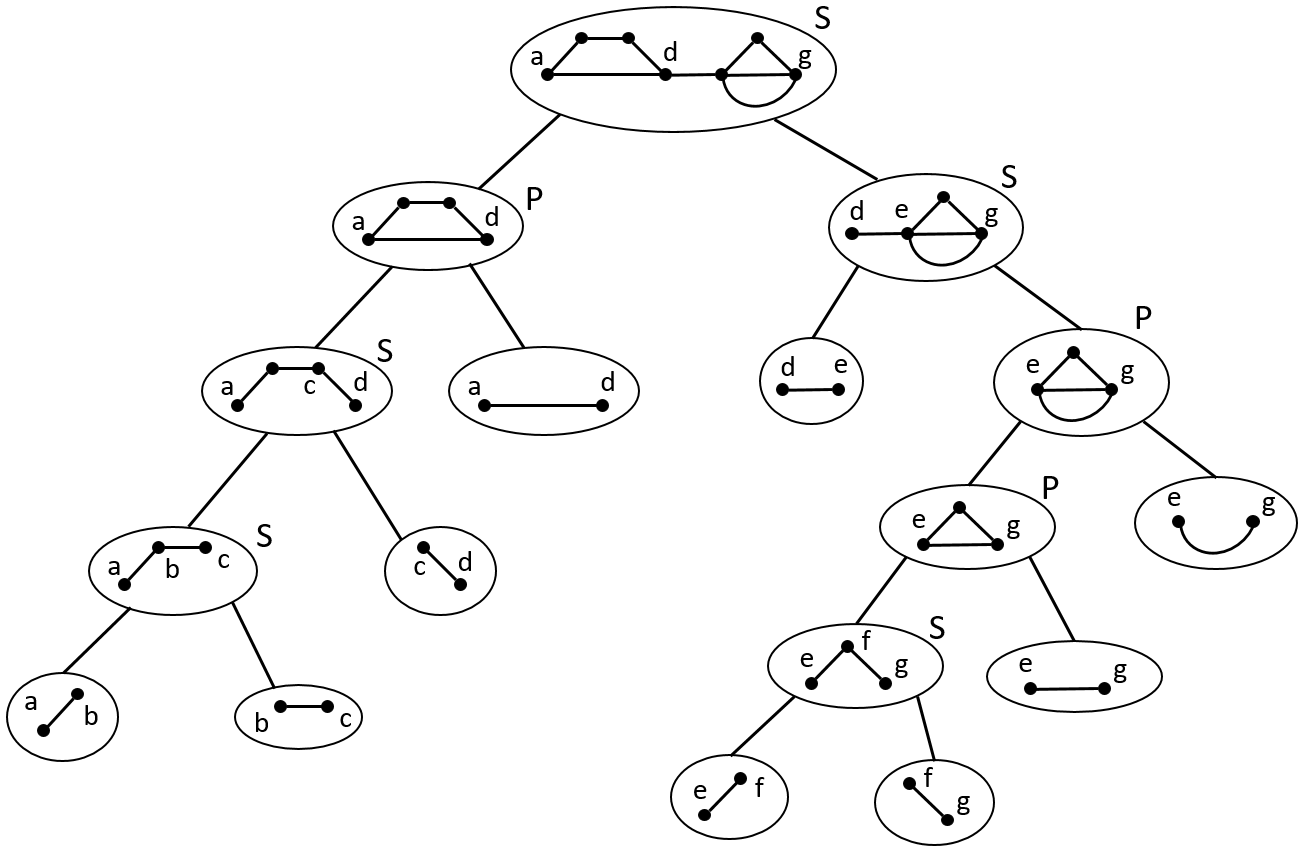}
	\caption{An example of a SP-tree.}
\end{figure}

For a tree node $v$ in a SP-tree $T$, let $G_v$ be the subgraph that $v$ represents, $s_v, t_v$ be the two terminals of $G_v$, and $v_l, v_r$ be its left and right child if $v$ is an internal node.
Note that the SP-tree is a fully binary tree with $2m\!-\!1$ nodes.

Given a two-terminal SP graph, the corresponding SP-tree can be computed in $O(n\!+\!m)$ time. 
The linear time SP-graph recognition algorithm in~\cite{valdes1979recognition} will give us the construction sequence of $G$, and we can build the SP-tree bottom-up.

\subsection{Exact Algorithm for Unit-Cost} \label{ss:SP-unit}

The following fact shows that the weighted $s$-$t$ effective resistance can be computed easily from the SP-tree.

\begin{fact} [Resistance of series-parallel network] \label{f:resistance-SP}
Let $G$ be a two-terminal SP graph and each edge $e$ has a non-negative resistance $r_e$. Let $T$ be the corresponding SP-tree. 
For every tree node $v$, we can compute the source-target effective resistance as follows.
\begin{itemize}
\item[] Leaf node: $\Reff_{G_v}(s_v, t_v) = r_e$ if $v$ is a leaf node with a single edge $e$.
\item[] S-node: $\Reff_{G_v}(s_v, t_v) = \Reff_{G_{v_l}}(s_{v_l}, t_{v_l}) + \Reff_{G_{v_r}}(s_{v_r}, t_{v_r})$.
\item[] P-node: $\Reff_{G_v}(s_v, t_v) = \dfrac{\Reff_{G_{v_l}}(s_{v_l}, t_{v_l}) \cdot \Reff_{G_{v_r}}(s_{v_r}, t_{v_r})}{\Reff_{G_{v_l}}(s_{v_l}, t_{v_l}) + \Reff_{G_{v_r}}(s_{v_r}, t_{v_r})}.$
\end{itemize}
\end{fact}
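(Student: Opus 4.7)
The plan is to prove each of the three cases via Thomson's principle (Fact~\ref{f:Thomson}), which characterizes $\Reff_{G_v}(s_v,t_v)$ as the minimum of $\ene(f)$ over all unit $s_v$-$t_v$ flows $f$ in $G_v$. The leaf case is immediate, since the only unit flow on a single edge $e$ sets $f_e = 1$ and has energy $r_e \cdot 1^2 = r_e$.

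For the series case, the key structural observation is that after identifying $t_{v_l}$ with $s_{v_r}$, this common vertex is a cut vertex of $G_v$ separating $s_v$ from $t_v$. By flow conservation, every unit $s_v$-$t_v$ flow $f$ in $G_v$ must restrict to a unit $s_{v_l}$-$t_{v_l}$ flow $f_l$ on $G_{v_l}$ and a unit $s_{v_r}$-$t_{v_r}$ flow $f_r$ on $G_{v_r}$; conversely, any such pair concatenates to a unit flow on $G_v$. Because the edge sets of $G_{v_l}$ and $G_{v_r}$ are disjoint, $\ene(f) = \ene(f_l) + \ene(f_r)$, and minimizing each summand independently via Thomson's principle yields the series formula.

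For the parallel case, $G_{v_l}$ and $G_{v_r}$ share only the two terminals $s_v$ and $t_v$. Any unit $s_v$-$t_v$ flow $f$ in $G_v$ decomposes as $f = f_l + f_r$, where $f_l$ is supported on the edges of $G_{v_l}$ and carries some amount $\alpha$ of net $s_v$-$t_v$ flow, while $f_r$ is supported on the edges of $G_{v_r}$ and carries $1-\alpha$ units (the values of $\alpha$ and $1-\alpha$ are forced by conservation at $s_v$, since every edge incident to $s_v$ belongs to exactly one of the two subgraphs). Energies again add across disjoint edge sets. For a fixed split $\alpha$, the minimum of $\ene(f_l) + \ene(f_r)$ equals $\alpha^2 R_l + (1-\alpha)^2 R_r$, where $R_l := \Reff_{G_{v_l}}(s_{v_l},t_{v_l})$ and $R_r := \Reff_{G_{v_r}}(s_{v_r},t_{v_r})$, because scaling a unit electrical flow by $\alpha$ scales its energy by $\alpha^2$. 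A one-variable optimization of $\alpha^2 R_l + (1-\alpha)^2 R_r$ attains its minimum at $\alpha = R_r/(R_l+R_r)$ with value $R_l R_r/(R_l+R_r)$, which is exactly the parallel formula.

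The only nontrivial point is the clean decomposition of a flow in the parallel case into two subflows whose net $s_v$-$t_v$ values sum to one; this follows immediately from the fact that the only vertices shared by $G_{v_l}$ and $G_{v_r}$ are $s_v$ and $t_v$, so partitioning $f$ edge-by-edge according to which subgraph each edge lies in yields two valid $s_v$-$t_v$ flows whose values at $s_v$ must sum to $1$ by conservation. With this in hand, the leaf and series cases are essentially bookkeeping, and the parallel case reduces to an elementary calculus exercise.
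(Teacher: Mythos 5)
Your proof is correct. Note, however, that the paper itself states Fact~\ref{f:resistance-SP} without proof, treating it as the classical series/parallel reduction rule for resistor networks, so there is no paper proof to compare against. Your argument via Thomson's principle is a clean way to verify it: the leaf case is immediate; in the series case the cut vertex $t_{v_l}=s_{v_r}$ forces every unit $s_v$-$t_v$ flow to restrict to unit flows on the two sides, so energies and hence minima add; and in the parallel case, since $G_{v_l}$ and $G_{v_r}$ share only the terminals, restricting a flow edge-by-edge gives a valid $\alpha$ vs.\ $1-\alpha$ split, and optimizing $\alpha^2 R_l + (1-\alpha)^2 R_r$ at $\alpha = R_r/(R_l+R_r)$ yields $R_l R_r/(R_l+R_r)$. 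The one place worth a word more of care is the claim that, for a fixed split $\alpha$, the minimum energy of an $\alpha$-unit flow in $G_{v_l}$ is exactly $\alpha^2 R_l$: this needs both directions of the scaling argument (scaling the unit electrical flow up by $\alpha$ gives the upper bound, and scaling an arbitrary $\alpha$-unit flow down by $\alpha$ gives the lower bound via Thomson's principle), but this is elementary and your proposal is essentially complete.
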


We design the dynamic programming algorithm by defining the subproblems using the SP-tree $T$.
%
For every tree node $v$ and $b = 0, 1 \dots k$, we define the subproblem
\[R(v, b) := \min_{H \subseteq  G_v} \{\Reff_H(s_v, t_v) ~|~ \sum_{e \in H} c_e \leq b \}.\]
Since we assume that every edge $e$ has cost $c_e=1$, there are at most $2mk$ subproblems, as the SP-tree has at most $2m$ nodes and there are at most $k$ possibilities for the cost of a subgraph.

It follows from the definition that $R(v_\text{root}, k)$ would be the optimal $s$-$t$ effective resistance for our problem. 	
To compute $R(v, b)$, with Fact~\ref{f:resistance-SP}, we can use the following recurrence which exhausts all possible distributions of the budget among the two children:
\[
R(v, b) = 
\begin{cases}
~~~~\infty & \text{ if $v$ is a leaf node and $b < c_e$}\\
~~~~r_e & \text{ if $v$ is a leaf node and $b \geq c_e$}\\
\displaystyle
~~~~\min_{b' = 0 \dots b} R(v_l, b') + R(v_r, b - b') & \text{ if $v$ is a S-node }\\
\displaystyle
~~~~\min_{b' = 0 \dots b} \dfrac{R(v_l, b') \cdot R(v_r, b - b')}{R(v_l, b') + R(v_r, b - b')} & \text{ if $v$ is a P-node}.
\end{cases}
\]
As there are $O(mk)$ subproblems and each subproblem can be computed in $O(k)$ time, the time complexity of this dynamic programming algorithm is $O(mk^2)$.

\subsection{Fully Polynomial Time Approximation Scheme} \label{ss:SP-FPTAS}

In this subsection, we use dynamic programming to design a fully polynomial time approximation scheme to prove Theorem~\ref{t:SP}.
In the previous subsection, we assume that every edge has the same cost to obtain an exact algorithm, by having a bounded number of subproblems in dynamic programming.
When the cost could be arbitrary, the number of subproblems can no longer be bounded by a polynomial.
Since the cost constraint must be satisfied, we do not change the cost of the edges, but instead discretize the resistance of the edges and optimize over the cost. We show that this gives an arbitrarily good approximation provided that the discretization is fine enough.

{\bf Rescaling:}
First, by rescaling we assume that $\min_e r_e=1$ and $\max_e r_e = U$ in $G$.
Let $m = |E|$ and $L=\eps/m^2$ where $\eps>0$ is the error in the approximation guarantee.
We further rescale the resistance by setting $r_e \gets r_e / L$.
This rescaling ensures that for any subgraph of $G$ in which $s$-$t$ is connected, the $s$-$t$ effective resistance is upper bounded by $Um/L$ (when all the edges are in series) and is lower bounded by $1/(mL)$ (when all the edges are in parallel).

{\bf Subproblems and Recurrence:}
Let $T$ be the SP-tree of $G$ and let $v_{\text{root}}$ be the root of $T$.
We define two similar sets of subproblems.
For every tree node $v$ and a value $R \in [1/(mL), Um/L]$,
we define the subproblem
\[C(v, R) := \min_{H \subseteq G_v} \left\{ \sum_{e \in H} c_e \mid \text{Reff}_H(s_v, t_v) \leq R \right\}.\]
Similar to the reasoning in the previous subsection, 
the subproblems satisfy the following recurrence relation:
\begin{align*}
&C(v, R) =
&\begin{cases}
c_e \text{~~~~if $v$ is a leaf node with a single edge $e$ and $R \geq r_e$} & \\
\infty \text{~~~~if $v$ is a leaf node with a single edge $e$ and $R < r_e$} &\\
\displaystyle
  \min_{R_1, R_2 \in [1/(mL), Um/L]} \{C(v_l, R_1) + C(v_r, R_2) \mid R_1 + R_2 \leq R\} & \text{if $v$ is a S-node}\\
\displaystyle
  \min_{R_1, R_2 \in [1/(mL), Um/L]} \{C(v_l, R_1) + C(v_r, R_2) \mid \dfrac{R_1 R_2}{R_1 + R_2} \leq R\} & \text{if $v$ is a P-node.}\\
\end{cases}
\end{align*}

{\bf Discretized subproblems:}
We cannot use dynamic programming to solve the above recurrence relation efficiently as there are unbounded number of subproblems.
Instead, we use dynamic programming to compute the solution of all the ``discretized'' subproblems using the same recurrence relation.
For every integer $R$ from $\lceil 1/(mL) \rceil$ to $\lceil Um/L \rceil$, we define
\begin{align*}
&\ol{C}(v, R) :=
&\begin{cases}
  c_e \text{~~~~if $v$ is a leaf node with a single edge $e$ and $R \geq \lceil r_e \rceil$} & \\
  \infty \text{~~~~if $v$ is a leaf node with a single edge $e$ and $R < \lceil r_e \rceil$} &\\
  \displaystyle
  \min_{R_1, R_2 \in \{\lceil 1/(mL) \rceil \dots \lceil Um/L \rceil\}}\{\ol{C}(v_l, R_1) + \ol{C}(v_r, R_2) \mid R_1 + R_2 \leq R\} & \text{if $v$ is a S-node}\\
  \displaystyle
  \min_{R_1, R_2 \in \{\lceil 1/(mL) \rceil \dots \lceil Um/L \rceil\}} \{\ol{C}(v_l, R_1) + \ol{C}(v_r, R_2) \mid \left\lceil \dfrac{R_1 R_2}{R_1 + R_2} \right\rceil \leq R\} & \text{if $v$ is a P-node.}\\
\end{cases}
\end{align*}
We can think of $\ol{C}(v, R)$ as the minimum cost required to select a subset of edges such that the effective resistance between $s_v$ and $t_v$ is at most $R$, when the effective resistance is rounded up to an integer during each step of the computation in the recurrence relation. 

{\bf Algorithm and Complexity:}
After computing all $\ol{C}(v, R)$, the algorithm will return 
\[\min \{ R \mid \ol{C}(v_{\text{root}}, R) \leq k \}\] 
as the approximate minimum $s$-$t$ effective resistance.
Given a tree node $v$, by trying all possible integral values of $R_1$ and $R_2$, we can compute the values of $\ol{C}(v, R)$ for each possible $R$ in $O((Um/L)^2)$ time. 
Therefore, the total running time of computing all $\ol{C}(v, R)$ is $O(m) \cdot O((Um/L)^2) = O(m^7U^2/\varepsilon^2)$.
To output the optimal edge set, we can store the optimal values of $R_1, R_2$ for each pair of $(v, R)$ to reconstruct the edge set.
    
{\bf Correctness and Approximation Guarantee:}
Since we have not changed the edge cost, the solution returned by the algorithm will have total cost at most $k$.
It remains to show that the $s$-$t$ effective resistance is at most $(1+\eps)$ times the optimal $s$-$t$ effective resistance.
For every tree node $v$ and every $b \in [0, k]$, we define
  \begin{align*}
    R(v, b) &:= \min \{ R \mid C(v, R) \leq b, R \in [1/(mL), Um/L] \} \\
    \ol{R}(v, b) &:= \min \{ R \mid \ol{C}(v, R) \leq b, R \in \{\lceil 1/(mL) \rceil, \dots \lceil Um/L \rceil\}\}.
  \end{align*}
It follows from the definitions that the optimal $s$-$t$ effective resistance is $R(v_{\text{root}}, k)$, and the output of our algorithm will be $\ol{R}(v_{\text{root}}, k)$.
The following lemma establishes the approximation guarantee.

\begin{lemma}
For every tree node $v$ and for every $b \in [0,k]$, it holds that
\[\ol{R}(v, b) \leq \left(1 + \frac{\varepsilon|E(G_v)|}{m} \right) R(v, b).\]
\end{lemma}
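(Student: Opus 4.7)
The plan is to prove the bound by structural induction on the SP-tree rooted at $v$, with the multiplicative error $\eps |E(G_v)|/m$ being ``shared'' among the subtree. For the base case, $v$ is a leaf with single edge $e$, so $|E(G_v)|=1$ and (for $b \geq c_e$) $R(v,b) = r_e$ while $\ol{R}(v,b) = \lceil r_e\rceil \leq r_e + 1$. After the rescaling $r_e \gets r_e/L$ with $L = \eps/m^2$, every edge satisfies $r_e \geq 1/L = m^2/\eps$, so $1 \leq (\eps/m^2)\, r_e \leq (\eps/m)\, r_e$, which gives $\ol{R}(v,b) \leq (1+\eps/m) R(v,b)$ as required. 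The case $b < c_e$ is trivial since both quantities are $\infty$.

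For the S-node step, I would pick optimizers $b_1 + b_2 \leq b$ for the recurrence $R(v,b) = R(v_l,b_1) + R(v_r,b_2)$ and set $R_i := \ol{R}(v_i, b_i)$. Because $R_1, R_2$ are integers, $R_1 + R_2$ is itself a feasible candidate in the definition of $\ol{R}(v,b)$ (no ceiling is introduced at S-nodes in the recurrence), hence $\ol{R}(v,b) \leq R_1 + R_2$. Writing $n_i := |E(G_{v_i})|$ and applying the inductive hypothesis to each child gives
\[
\ol{R}(v,b) \leq (1 + \eps n_l/m) R(v_l,b_1) + (1 + \eps n_r/m) R(v_r,b_2) \leq \bigl(1 + \eps |E(G_v)|/m\bigr) R(v,b),
\]
using $|E(G_v)| = n_l + n_r$.

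The P-node step is the delicate one. With the same choice $R_i := \ol{R}(v_i,b_i)$ from the optimizing split, the candidate for $\ol{R}(v,b)$ is $\lceil R_1 R_2/(R_1+R_2) \rceil \leq R_1 R_2/(R_1+R_2) + 1$. Using monotonicity of the parallel combination in each argument, together with the elementary inequality $\alpha a + \beta b \geq \min(\alpha,\beta)(a+b)$ applied with $\alpha = 1+\eps n_l/m$, $\beta = 1+\eps n_r/m$, $a = R(v_l,b_1)$, $b = R(v_r,b_2)$, I would obtain
\[
\frac{R_1 R_2}{R_1+R_2} \leq \bigl(1 + \eps \max(n_l,n_r)/m\bigr) R(v,b).
\]
This leaves unused multiplicative slack of $\eps \min(n_l, n_r)/m \cdot R(v,b)$, which must absorb the additive $+1$ coming from the ceiling.

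The main obstacle is exactly this absorption. I would handle it by invoking the a priori lower bound $R(v,b) \geq 1/(mL) = m/\eps$, valid for any feasible solution because the rescaled graph has at most $m$ edges each of conductance at most $L$, so the effective conductance is at most $mL$. Since $\min(n_l,n_r) \geq 1$, the slack is at least $(\eps/m)(m/\eps) = 1$, which cancels the $+1$ from rounding. Combining, $\ol{R}(v,b) \leq (1 + \eps|E(G_v)|/m) R(v,b)$. The rescaling constant $L = \eps/m^2$ was chosen precisely so this absorption goes through in the worst case $n_l = 1$; a coarser rescaling would fail at P-nodes where one branch is a single edge. As a side check, I would verify inductively that $\ol{R}(v,b) < \infty$ whenever $R(v,b) < \infty$ so the chosen $R_i$ are actually valid.
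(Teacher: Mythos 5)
Your proof is correct and follows essentially the same route as the paper: structural induction on the SP-tree, with the $+1$ rounding error at leaves and P-nodes absorbed using the a priori lower bound $R(v,b)\geq 1/(mL)=m/\eps$ on the (rescaled) effective resistance of any subgraph. The only cosmetic difference is in the P-node step, where you bound the ratio by $1+\eps\max(n_l,n_r)/m$ via the inequality $\alpha a+\beta b\geq\min(\alpha,\beta)(a+b)$, while the paper uses the slightly looser $1+\eps(|E(G_v)|-1)/m$ by pulling the common factor out of $1/R(v_l,b_1)+1/R(v_r,b_2)$; both leave exactly the $\eps/m$ slack needed to absorb the ceiling.
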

\begin{proof}
We prove the lemma by induction on the tree node of the SP-tree.	

{\bf Base Case:} 
Suppose $v$ is a leaf node of $T$ and $G_v$ is a graph of a single edge $e$. 
\begin{itemize}
\item For $b < c_e$, we have $\ol{R}(v, b) = R(v, b) = \infty$.
\item For $b \geq c_e$, we have $R(v,b)=r_e$ and
  \begin{align*}
    \ol{R}(v, b) & 
    = \lceil r_e \rceil \leq r_e + 1 
    = r_e + \left(\frac{\varepsilon}{m}\right) \left(\frac{1}{mL}\right) 
    \leq r_e + \frac{\varepsilon}{m}r_e 
    = \left(1 + \frac{\varepsilon|E(G_v)|}{m} \right)R(v, b),
  \end{align*}
where the second inequality uses the fact that every resistance is at least $1/(mL)$, and the last equality uses $|E(G_v)|=1$ and $r_e=R(v,b)$.
\end{itemize}

{\bf S-node:} 
Suppose $v$ is a S-node.
For every $b \in [0, k]$, we have
\begin{align*}
    \ol{R}(v, b) &= \min_{b_1, b_2 \mid b_1 + b_2 = b} \{ \ol{R}(v_l, b_1) + \ol{R}(v_r, b_2)\} \\
    &\leq \min_{b_1, b_2 \mid b_1 + b_2 = b} \left\{ \left(1 + \frac{\varepsilon|E(G_{v_l})|}{m} \right) R(v_l, b_1) +\left(1 + \frac{\varepsilon|E(G_{v_r})|}{m} \right) R(v_r, b_2)\right\}\\
    &\leq \min_{b_1, b_2 \mid b_1 + b_2 = b} \left\{ \left(1 + \frac{\varepsilon|E(G_v)|}{m} \right) (R(v_l, b_1) + R(v_r, b_2))\right\}\\
    &= \left(1 + \frac{\varepsilon|E(G_v)|}{m} \right) \min_{b_1, b_2 \mid b_1 + b_2 = b} \{(R(v_l, b_1) + R(v_r, b_2))\}\\
    &= \left(1 + \frac{\varepsilon|E(G_v)|}{m} \right) R(v, b),
  \end{align*}
where the first inequality follows from the induction hypothesis, and the second inequality follows from the fact that $\max(|E(G_{v_l})|,  |E(G_{v_r})|) \leq |E(G_v)| - 1$.

{\bf P-node:}
Suppose $v$ is a P-node.
For every $b \in [0, B]$, we have
  \begin{align*}
    \ol{R}(v, b) &= \min_{b_1, b_2 \mid b_1 + b_2 = b} \left\{ \left\lceil \dfrac{1}{1/\ol{R}(v_l, b_1) + 1/\ol{R}(v_r, b_2)} \right\rceil \right\} \\
    &\leq \min_{b_1, b_2 \mid b_1 + b_2 = b} \left\{ \left\lceil \left(1 + \frac{\varepsilon(|E(G_v)|-1)}{m} \right) \dfrac{1}{1/R(v_l, b_1) + 1/R(v_r, b_2)} \right\rceil \right\}\\
    &= \left\lceil \left(1 + \frac{\varepsilon(|E(G_v)|-1)}{m} \right) R(v, b) \right\rceil \\
    &\leq  \left(1 + \frac{\varepsilon(|E(G_v)|-1)}{m} \right) R(v, b)  + 1\\
    &= \left(1 + \frac{\varepsilon(|E(G_v)|-1)}{m} \right) R(v, b) + \frac{\varepsilon}{m}\frac{1}{mL}\\
    &\leq \left(1 + \frac{\varepsilon(|E(G_v)|-1)}{m} \right) R(v, b) + \frac{\varepsilon}{m} R(v, b)\\
    &= \left(1 + \frac{\varepsilon|E(G_v)|}{m} \right) R(v, b),
   \end{align*}
where the first inequality follows from the induction hypothesis and the fact that $\max(|E(G_{v_l})|,  |E(G_{v_r})|) \leq |E(G_v)| - 1$, 
and the last inequality holds as the minimum resistance of any subgraph is at least $1/(mL)$.

Therefore, the lemma follows by induction on the SP-tree.
\end{proof}

By substituting $v = v_{\text{root}}$ and $b=k$,
we have
\[\ol{R}(v_{\text{root}}, k) \leq \left(1 + \frac{m\varepsilon}{m} \right) R(v_{\text{root}}, k) = (1 + \varepsilon) R(v_{\text{root}}, k),\]
which completes the proof of Theorem~\ref{t:SP}.

\section{Hardness Results} \label{s:hardness}

In this section, we first prove that the $s$-$t$ effective resistance network design problem is NP-hard in Section~\ref{ss:NPc}.
Then, we prove that the weighted problem is APX-hard assuming the small-set expansion conjecture in Section~\ref{ss:SSE}.

\subsection{NP-Hardness} \label{ss:NPc}

We will prove Theorem~\ref{t:NPc} in this subsection.
The following is the decision version of the problem.
\begin{problem}[$s$-$t$ effective resistance network design] \label{p:unit}~
\begin{enumerate}
\item[] 
{\bf Input:} An undirected graph $G=(V,E)$, two vertices $s,t \in V$, and two parameters $k$ and $R$.
\item[] 
{\bf Question:} Does there exist a subgraph $H$ of $G$ with at most $k$ edges and $\Reff_H(s,t) \leq R$?
\end{enumerate}
\end{problem}

We will show that this problem is NP-complete by a reduction from the 3-Dimensional Matching (3DM) problem.

\begin{problem}[3-Dimensional Matching]~
\begin{enumerate}
\item[]
{\bf Input:} Three disjoint sets of elements $X = \{x_1, \dots, x_q\},Y = \{y_1, \dots, y_q\}, Z=\{z_1, \dots, z_q\}$; 
a set of triples ${\mathcal T} \subseteq X \times Y \times Z$ where each triple contains exactly one element in $X,Y,Z$.
\item[] 
{\bf Question:} Does there exist a subset of $q$ pairwise disjoint triples in ${\mathcal T}$?
\end{enumerate}
\end{problem}

{\bf Reduction:}
Given an instance of 3DM with $\{(X,Y,Z),{\mathcal T}\}$, 
let $\tau=|{\mathcal T}|$ and denote the triples by ${\mathcal T} = \{T_1, \ldots, T_\tau\}$.

\begin{figure}[!ht]
	\centering
	\includegraphics[width=0.7\textwidth]{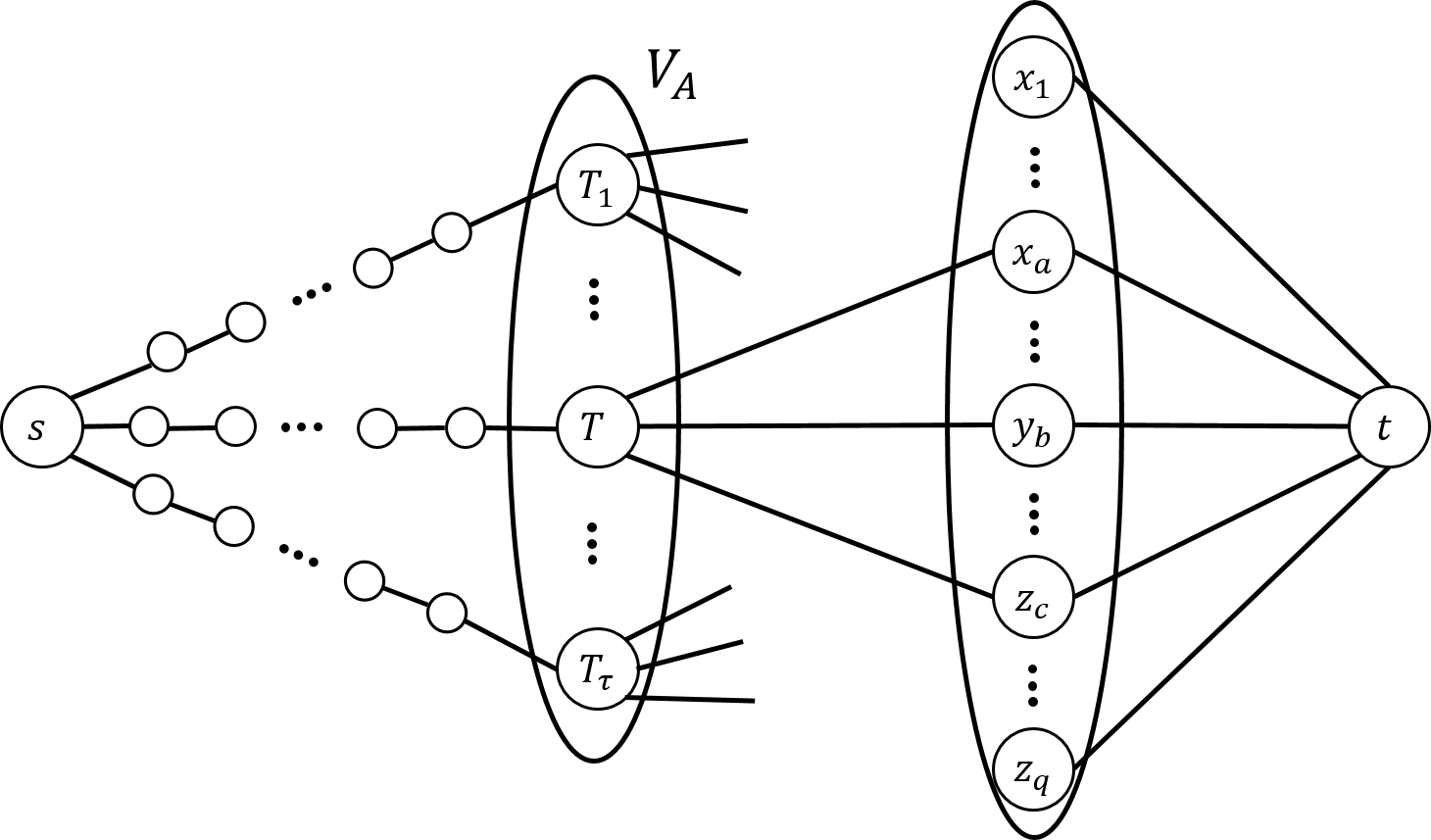}
\caption{An illustration of the construction of the graph $G$ from a 3DM instance.} \label{fig:3DM}

\end{figure}

We construct a graph $G = (V,E)$ as follows: 
\begin{enumerate}
\item[]
{\bf Vertex Set:} The vertex set $V$ is the disjoint union of five sets $\{s\}, \{t\}, V_A$, $V_B$, and $D$.
Each vertex in $V_A$ corresponds to a triple in ${\mathcal T}$, that is $V_A = \{T_1, \ldots, T_{\tau}\}$.
Each vertex in $V_B$ corresponds to an element in $X \cup Y \cup Z$, that is $V_B = \{x_1, \ldots, x_q, y_1, \ldots, y_q, z_1, \ldots, z_q\}$.
Let $l = 3\tau + 3q$.
The set $D$ consists of $\tau \cdot l$ ``dummy'' vertices $\{d_{i,j}~|~1 \leq i \leq \tau, 1 \leq j \leq l\}$. 
So, there are totally $\tau + 3q + 2 + \tau (3 \tau + 3q)$ vertices in $G$, which is polynomial in the input size of the 3DM instance.
\item[]
{\bf Edge Set:} The edge set $E$ is the disjoint union of three edge sets $F_1$, $F_2$ and $P$.
There are $3\tau$ edges in $F_1$, where we have three edges $(T,x_a)$, $(T,y_b)$ and $(T,z_c)$ for each triple $T=(x_a,y_b,z_c) \in {\mathcal T}$.
There are $3q$ edges in $F_2$, where there is an edge from each vertex in $V_B$ to $t$.
There are $\tau(l+1)$ edges in $P$, where there is a path $P_i := (s,d_{i,1},d_{i,2},\ldots,d_{i,l},T_i)$ for each triple $T_i \in {\mathcal T}$, $1 \leq i \leq \tau$.
So, there are totally $3\tau + 3q + \tau(3\tau+3q+1)$ edges in $E$, which is polynomial in the input size of the 3DM instance.
\end{enumerate}


The following claim completes the proof of Theorem~\ref{t:NPc}.

\begin{lemma}
Let $k=q(l+1)+3\tau+3q$ and $R=(3(l+1)+2)/3q$.
The 3DM instance has $q$ disjoint triples if and only if the graph $G$ has a subgraph $H$ with at most $k$ edges and $\Reff_H(s,t) \leq R$.
\end{lemma}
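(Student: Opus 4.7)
I will prove the biconditional in two directions.

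\emph{Forward direction (3DM solution $\Rightarrow$ good subgraph).} Given $q$ pairwise disjoint triples $T_{i_1}, \ldots, T_{i_q}$, I take $H$ to be the union of the $q$ corresponding paths $P_{i_1}, \ldots, P_{i_q}$, all of $F_1$, and all of $F_2$; a direct count gives $|E(H)| = q(l+1) + 3\tau + 3q = k$. Because the chosen triples partition the $3q$ element vertices, the resulting network has a clean symmetry, and I will exhibit the unit electrical $s$-$t$ flow explicitly: it carries $1/q$ on each chosen path, $1/(3q)$ on each $F_1$ edge leaving a chosen triple, and $1/(3q)$ on each $F_2$ edge, while at every unchosen triple the three adjacent element vertices sit at the common potential $1/(3q)$ and hence no current flows through it. Summing the energy yields $(l+1)/q + 2/(3q) = R$, giving $\Reff_H(s,t) \leq R$.

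\emph{Backward direction, bounding $q'$.} Suppose $|E(H)| \leq k$ and $\Reff_H(s,t) \leq R$. The first observation is that every dummy vertex $d_{i,j}$ has degree exactly $2$ in $G$ with both neighbors on $P_i$, so $T_i$ lies in the component of $s$ in $H$ only if the entire path $P_i \subseteq H$; let $q'$ count these fully-included paths. Contracting $V \setminus (\{s\} \cup D)$ in $H$ leaves $q'$ parallel paths of length $l+1$ between $s$ and the super-vertex, so Rayleigh's monotonicity gives $\Reff_H(s,t) \geq (l+1)/q'$; since $l = 3\tau + 3q$ is large enough that $(l+1)/(q-1) > R = (3l+5)/(3q)$ (cross-multiplying reduces this to $3l + 5 > 2q$), I conclude $q' \geq q$. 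In the other direction, $(q+1)(l+1) = k+1$ already exceeds the budget, so $q' = q$. The residual budget is exactly $|F_1| + |F_2| = 3\tau + 3q$, and by Rayleigh's monotonicity I may assume $H$ contains all of $F_1 \cup F_2$.

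\emph{Core step: disjointness via an energy lower bound.} The crux is to show the $q$ triples $T_{i_1}, \ldots, T_{i_q}$ tied to the included paths are pairwise disjoint. I will establish $\ene(f) \geq R$ for every unit $s$-$t$ flow $f$ in $H$, with equality forcing disjointness. Writing $a_j$ for the flow on $P_{i_j}$ with $\sum_j a_j = 1$ and decomposing the energy over paths, chosen-$F_1$ edges, unchosen-$F_1$ edges, and $F_2$ edges, three Cauchy--Schwarz inequalities yield $(l+1)\sum_j a_j^2 \geq (l+1)/q$ for the paths, $\sum_j a_j^2/3 \geq 1/(3q)$ for chosen-$F_1$ (balancing $a_j$ across three outgoing edges at each $T_{i_j}$), and $\geq 1/(3q)$ for $F_2$ (balancing the total inflow $1$ into $t$ across $3q$ edges); dropping the nonnegative unchosen-$F_1$ contribution and summing gives $\ene(f) \geq R$. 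Equality forces $a_j \equiv 1/q$, each chosen-$F_1$ flow and each $F_2$ flow to equal $1/(3q)$, and zero flow through unchosen triples; flow conservation at each element then demands that the element receives $1/(3q)$ of inflow solely from chosen-$F_1$ edges (each carrying $1/(3q)$), so each element is incident to exactly one chosen-$F_1$ edge. This forces the $q$ chosen triples to partition the $3q$ elements, i.e., to form a 3DM solution. The main obstacle I anticipate is this equality analysis, especially ruling out any rebalancing through unchosen triples; the argument hinges on the fact that any nonzero unchosen-$F_1$ flow contributes strictly positive energy that no other term absorbs, so non-disjointness forces the optimal electrical flow's energy, and hence $\Reff_H(s,t)$, to exceed $R$.
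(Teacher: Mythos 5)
Your proof is correct, and the biconditional is established by a genuinely different route than the paper's for the central ``no disjoint triples $\Rightarrow$ $\Reff > R$'' step. The early parts match the paper: both use the budget count $(q+1)(l+1) = k+1$ to force exactly $q$ fully-included paths and Rayleigh's monotonicity to assume all of $F_1 \cup F_2$ is present (your forward-direction $H$ takes all of $F_1$ rather than the $3q$ chosen $F_1$ edges, which is harmless since the extra edges carry no current at equal $V_B$ potentials). Where you diverge is the heart of the backward direction. The paper constructs a modified graph $H'$ by contracting $V_B$ and zeroing the resistance of unchosen $F_1$ edges, observes $H'$ is electrically equivalent to the Yes-case network so $\Reff_{H'}(s,t) = R$, and then shows the inequality $\Reff_H(s,t) \geq \Reff_{H'}(s,t)$ is strict by a case analysis on whether an uncovered element $u \in U$ receives any electrical current. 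You instead prove the unconditional lower bound $\ene(f) \geq (l+1)/q + 1/(3q) + 1/(3q) = R$ for \emph{every} unit $s$-$t$ flow $f$ in $H$ via three Cauchy--Schwarz applications (path edges, chosen-$F_1$ edges at each $T_{i_j}$ by conservation, $F_2$ edges at $t$), dropping the nonnegative unchosen-$F_1$ contribution, and then read off the equality conditions: uniform path flows $1/q$, uniform chosen-$F_1$ and $F_2$ flows $1/(3q)$, and zero flow on unchosen $F_1$, which by conservation at each element vertex forces exactly one chosen triple to cover it --- i.e., a 3DM solution. Your approach has the advantage of being a single self-contained variational argument with no auxiliary graph and no dichotomy on the electrical flow, and it also makes the forward-direction energy computation a special case of the same decomposition; the paper's approach is shorter to state because the monotonicity/contraction step does the bookkeeping for free but it does require the case analysis to convert $\geq$ to $>$. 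One small wording quibble: your claim that ``$T_i$ lies in the component of $s$ only if $P_i \subseteq H$'' is not literally true ($T_i$ can be reached via another full path and $V_B$), but this does not affect your argument since you only use $q'$ as the count of fully-included paths and the contraction bound $\Reff_H(s,t) \geq (l+1)/q'$ stands regardless.
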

\begin{proof}
%
%

One direction is easy.
If there are $q$ disjoint triples in the 3DM instance, say $\{T_1, \ldots, T_q\}$, then $H$ will consist of the $q$ paths $P_1, \ldots, P_q$, the $3q$ edges in $F_1$ incident on $T_1, \ldots, T_q$, and all the $3q$ edges in $F_2$.
There are $(l+1)q + 3q + 3q \leq k$ edges in $H$,
and $\Reff_H(s,t) = (l+1)/q + 1/3q + 1/3q = (3(l+1)+2)/3q = R$, 
as in the graph in Figure~\ref{f:reduced2}.

The other direction is more interesting.
If there do not exist $q$ disjoint triples in the 3DM instance, then we need to argue that $\Reff_H(s,t) > R$ for any $H$ with at most $k$ edges.
First, note that $k < (q+1)(l+1)$, and so the budget is not enough for us to buy more than $q$ paths.
As it is useless to buy only a proper subset of a path, we can thus assume that $H$ consists of $q$ paths and all the edges in $F_1, F_2$. $H$ has a total of exactly $q(l+1) + 3\tau + 3q = k$ edges.
For any such $H$, we will argue that $\Reff_H(s,t) > R$.
Without loss of generality, assume that $H$ consists of $P_1, \ldots, P_q$ and all edges in $F_1$ and $F_2$.
As $T_1, \ldots, T_q$ are not disjoint, there are some vertices in $V_B$ that are not neighbors of $T_1 \cup \ldots \cup T_q$. Call those vertices $U$.

We consider the following modifications of $H$ to obtain $H'$, and use $\Reff_{H'}(s,t)$ to lower bound $\Reff_H(s,t)$.
For every pair of vertices in $V_B$, we add an edge of zero resistance.
For each edge incident on $T_{q+1}, \ldots, T_{\tau}$, we decrease its resistance to zero.
By the monotonicity principle, the modifications will not increase the $s$-$t$ effective resistance, as we either add edges with zero resistance or decrease the resistance of existing edges.
The modifications are equivalent to contracting the vertices with zero resistance edges in between, and so $H'$ is equivalent to the graph in Figure~\ref{f:reduced2}.
Therefore, we have $\Reff_H(s,t) \geq \Reff_{H'}(s,t) \geq R$.

\begin{figure}[!ht]
\centering
	\centering
	\includegraphics[width=0.7\textwidth]{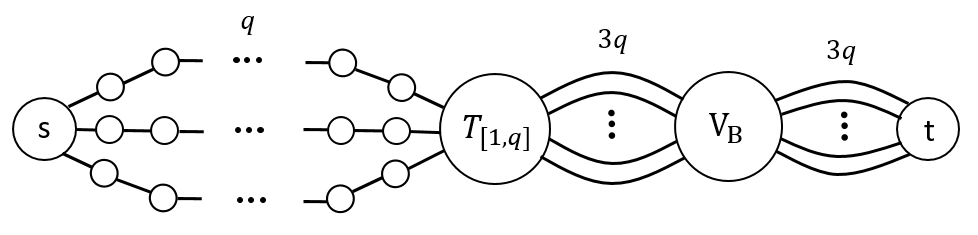}
\caption{The subgraph $H$ when the 3DM instance has $q$ disjoint triples.} \label{f:reduced2}
\end{figure} 

\begin{figure}[!ht]
\centering
	\centering
	\includegraphics[width=0.7\textwidth]{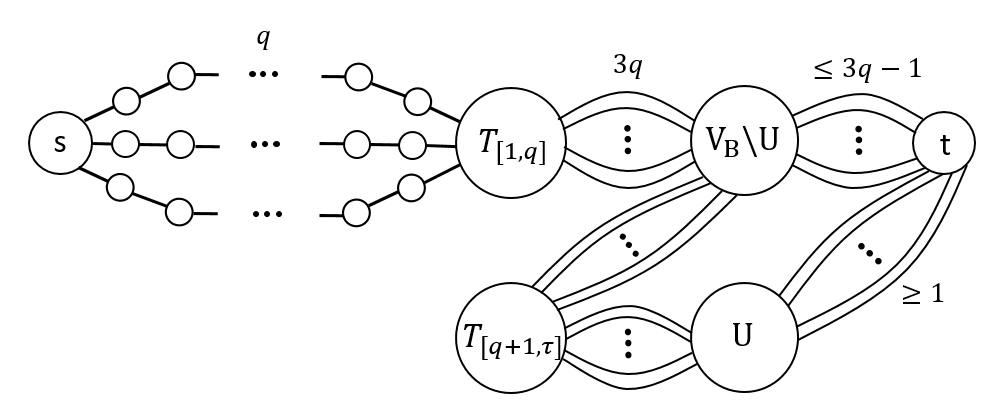}
\caption{The subgraph $H$ when $U$ is non-empty.} \label{f:reduced1}
\end{figure} 

We will prove that one of the inequalities in $\Reff_H(s,t) \geq \Reff_{H'}(s,t)\geq R$ must be strict when $U \neq \emptyset$ (Figure~\ref{f:reduced1}).
To argue the strict inequality, we look at the unit $s$-$t$ electrical flow $f$ in $H$ and consider two cases.
\begin{itemize}
\item If there exists some vertex $u \in U$ with no incoming electrical flow, then we can delete such a vertex without changing $\Reff_{H}(s,t)$.
But then in the modified graph $H'$, the number of parallel edges to $t$ is now strictly smaller than $3q$, and therefore $\Reff_{H'}(s,t) > R$.
\item If there exists some vertex $u \in U$ with some incoming electrical flow, then $f(T_j u) > 0$ for some $j \geq q+1$.
Since we have decreased the resistance of such an edge $T_j u$ to 0, the energy of $f$ in $H'$ is strictly smaller than the energy of $f$ in $H$.
By Thomson's principle, we have $\Reff_{H'}(s,t) \leq \ene_{H'}(f) < \ene_H(f) = \Reff_H(s,t)$.
\end{itemize}
Since the 3DM instance has no $q$ disjoint triples, it follows that $U \neq \emptyset$ and thus one of the above two cases must apply.
In either case, we have $\Reff_{H}(s,t) > R$ and this completes the proof of the other direction.
\end{proof}

\subsection{Improved Hardness Assuming Small-Set Expansion Conjecture} \label{ss:SSE}

In this subsection, we will prove Theorem~\ref{t:SSE} that it is NP-hard to approximate the weighted $s$-$t$ effective resistance network design problem within a factor smaller than $2$.
First, we will state the small-set expansion conjecture and its variant on bipartite graphs, and present an overview of the proof in Section~\ref{ss:conjecture}.
Next, we will reduce the bipartite small-set expansion problem to the weighted $s$-$t$ effective resistance network design problem in Section~\ref{ss:SSE-Reff},
and then reduce the small-set expansion problem to the bipartite small-set expansion problem in Section~\ref{ss:bipartite} to complete the proof.

\subsubsection{The Small-Set Expansion Conjecture and Proof Overview}
\label{ss:conjecture}

The gap small-set expansion problem is formulated by Raghavendra and Steurer~\cite{RS10}.
We use the version stated in~\cite{RST12}.

\begin{definition}[Gap Small-Set Expansion Problem~\cite{RS10,RST12}]
Given an undirected graph $G=(V,E)$, two parameters $0 < \beta < \alpha <1$ and $\delta>0$, the $(\alpha,\beta)$-gap $\delta$-small-set expansion problem, denoted by $\text{SSE}_{\delta}(\alpha, \beta)$, is to distinguish between the following two cases.
\begin{itemize}
\item {\sc Yes}: There exists a subset $S \subseteq V$ with $\vol(S) = \delta \vol(V)$ and $\phi(S) \leq \beta$.
\item {\sc No}: Every subset $S \subseteq V$ with $\vol(S) = \delta \vol(V)$ has $\phi(S) \geq \alpha$.
\end{itemize}
%
\end{definition}

It is conjectured in~\cite{RS10} that the gap small-set expansion problem becomes harder when $\delta$ becomes smaller.

\begin{conjecture}[Small-Set Expansion Conjecture \cite{RS10,RST12}]
For any $\eps \in (0,\frac12)$, there exists sufficiently small $\delta>0$ such that $SSE_\delta(1-\eps, \eps)$ is NP-hard even for regular graphs.
\end{conjecture}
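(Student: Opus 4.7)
The plan is to attempt a reduction-based attack, but it is important to state upfront that the statement in question is the notoriously open Small-Set Expansion Conjecture of Raghavendra--Steurer, for which no unconditional proof is known; what follows is therefore a research programme rather than a routine argument. The most promising route, in my view, is a direct PCP/label-cover construction that produces a regular graph $G$ with the required gap in small-set expansion, in analogy with the canonical noisy-hypercube / long code gadgets used to prove Unique Games--style inapproximability.

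First I would fix the target parameter $\eps \in (0,\tfrac12)$ and, working outward, decide the required $\delta = \delta(\eps) > 0$; as in known reductions of this flavor, $\delta$ should scale like $1/|\Sigma|$ where $\Sigma$ is the alphabet of an intermediate constraint satisfaction problem. Next, starting from an NP-hard instance (say a gap-3SAT or a gap label-cover obtained from the standard PCP theorem plus parallel repetition), I would compose the instance with a small-set expansion gadget, for which the short code or the noisy hypercube on $\{-1,1\}^n$ at noise $\rho = 1-\Theta(\eps)$ are the natural candidates. The soundness analysis would be Fourier-analytic: a small set $S$ of measure $\delta$ with $\phi(S) < \eps$ would, via invariance/hypercontractivity arguments in the spirit of Mossel--O'Donnell--Oleszkiewicz, be pulled back to an assignment that violates at most an $\eps$ fraction of the original CSP constraints, contradicting NP-hardness. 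Regularity can be arranged either by choosing the gadget to be vertex-transitive or by a standard symmetrization that does not affect expansion up to constant factors.

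The main obstacle, and the reason the conjecture is still open, is precisely the soundness step: existing Fourier-analytic tools give strong bounds on the expansion of sets of constant measure but become much weaker for sets of measure $\delta \to 0$, exactly the regime demanded here. In particular, the known reverse hypercontractive inequalities and the Kahn--Kalai--Linial / Friedgut-type machinery do not, on their own, rule out NO-case instances having small non-expanding sets of measure $\delta$; one needs a new isoperimetric statement quantifying how measure-$\delta$ near-minimizers of edge expansion look on the chosen gadget. A secondary obstacle is maintaining NP-hardness (rather than only SSE-hardness) while driving $\delta$ down: iterating the gadget to shrink $\delta$ usually blows up the instance size super-polynomially unless one has a graph-product construction that multiplies instances without losing the expansion gap.

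If I had to pick a concrete sub-goal to work on first, it would be a conditional version: assuming the Unique Games Conjecture (or, weaker, hardness of Max-2-Lin$(q)$ for growing $q$), design a gap-amplifying composition that produces SSE-hard instances with $\delta$ any desired function of $\eps$; this would at least formally align SSE with UGC in the direction not already known from~\cite{RS10}. The restriction to regular graphs would be handled at the very end by a standard degree-regularization step (subdividing or adding self-loops at low-degree vertices), verified to preserve both the YES witness set and the NO-case expansion bound up to the constants absorbed into $\eps$.
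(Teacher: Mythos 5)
You have correctly identified that this statement is not a theorem but the Small-Set Expansion Conjecture of Raghavendra and Steurer, which remains open. The paper does not prove it; it is stated as an unproven hypothesis and used as a \emph{hardness assumption} in the reduction of Section~\ref{ss:SSE} (the paper proves that \emph{if} the conjecture holds, then the weighted problem is hard to approximate within $2-\eps$). Consequently there is no ``paper proof'' to compare against, and your write-up is, as you yourself say, a research programme rather than a proof.

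Given that framing, the honest assessment is that what you have written cannot close the gap and does not purport to. A few specific remarks. First, the soundness obstacle you name is the right one: Fourier-analytic and hypercontractive tools control expansion of constant-measure sets but degrade in the $\delta\to 0$ regime, and no known isoperimetric statement rules out small non-expanding sets in a NO-instance. Second, the ``conditional sub-goal'' you propose (derive SSE-hardness from UGC or Max-2-Lin hardness) is known to be essentially circular: \cite{RS10} shows SSE implies UGC, and \cite{RST12} shows SSE is equivalent to a variant of UGC with small-set expansion, so a UGC-to-SSE reduction in the generality you want is itself an open problem roughly on par with SSE. Third, your regularization step is fine but is the least of the difficulties. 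In short, your proposal is a reasonable sketch of why the conjecture is hard, but for the purposes of this paper nothing needs to be (or can be) proved here; the conjecture is simply assumed, and your text should not be presented as a proof attempt of it.
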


It is known that the small-set expansion conjecture implies the Unique Game conjecture~\cite{RS10} and is equivalent to some variant of the Unique Game Conjecture~\cite{RST12}.

We will show the SSE-hardness of the weighted $s$-$t$ effective resistance network design problem in two steps, and use the small-set expansion problem on regular {\em bipartite} graphs as an intermediate problem.

\begin{proposition} \label{p:bipartite}
For any $\eps > 0$, there is a polynomial time reduction from $\text{SSE}_{\delta}(1-\eps,\eps)$ on $d$-regular graphs to $\text{SSE}_{\delta}(1-16\eps,\eps)$ on $d$-regular bipartite graphs.
\end{proposition}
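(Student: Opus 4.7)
I take $G'$ to be the bipartite double cover of $G$: let $V(G') = V_1 \sqcup V_2$ be the disjoint union of two copies of $V$, and for each edge $uv \in E$ place the two edges $u_1 v_2$ and $v_1 u_2$ in $E(G')$. This $G'$ is $d$-regular and bipartite with $|V(G')| = 2|V|$ and $|E(G')| = 2|E|$, so the construction runs in polynomial time. It remains to establish the two directions of the gap reduction, yes~$\Rightarrow$~yes and no~$\Rightarrow$~no (the latter via the contrapositive).

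The yes direction is direct. Given $S \subseteq V$ with $\vol_G(S) = \delta \vol_G(V)$ and $\phi_G(S) \leq \eps$, I lift it to $S' := S_1 \cup S_2 \subseteq V(G')$. A one-line check shows $\vol_{G'}(S') = 2 \vol_G(S) = \delta \vol_{G'}(V(G'))$, and since each edge of $\delta_G(S)$ lifts to exactly two edges of $\delta_{G'}(S')$, one has $|\delta_{G'}(S')| = 2|\delta_G(S)|$ and hence $\phi_{G'}(S') = \phi_G(S) \leq \eps$.

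For the no direction I take the contrapositive: assume $S' \subseteq V(G')$ satisfies $\vol_{G'}(S') = \delta \vol_{G'}(V(G'))$ and $\phi_{G'}(S') < 1 - 16\eps$, and write $S' = A_1 \cup B_2$ with $A, B \subseteq V$. The side-swap map $\sigma: v_i \mapsto v_{3-i}$ is an automorphism of $G'$, so $|\delta_{G'}(\sigma S')| = |\delta_{G'}(S')|$, and submodularity of the cut function gives
\[
  2\,|\delta_{G'}(S')| \;\geq\; |\delta_{G'}(S' \cup \sigma S')| + |\delta_{G'}(S' \cap \sigma S')| \;=\; 2\,|\delta_G(U)| + 2\,|\delta_G(W)|,
\]
where $U := A \cup B$ and $W := A \cap B$, and the final equality applies the yes-direction lift to $U_1 \cup U_2 = S' \cup \sigma S'$ and $W_1 \cup W_2 = S' \cap \sigma S'$. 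Combined with $\vol_G(U) + \vol_G(W) = \vol_{G'}(S') = 2\delta \vol_G(V)$, this shows that the volume-weighted average of $\phi_G(U)$ and $\phi_G(W)$ is strictly below $1 - 16\eps$, so at least one of them is; and since $W \subseteq U$ I also have $\vol_G(W) \leq \delta \vol_G(V) \leq \vol_G(U) \leq 2\delta \vol_G(V)$.

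The main obstacle is the final volume-adjustment step that produces a witness $T \subseteq V$ of volume \emph{exactly} $\delta \vol_G(V)$ with $\phi_G(T) < 1 - \eps$, since neither $U$ nor $W$ typically has that exact volume. When $\phi_G(U) < 1 - 16\eps$, I subsample $U$ by including each vertex independently with probability $p := \delta \vol_G(V)/\vol_G(U) \in [\tfrac12, 1]$; a direct expectation computation gives $\mathbb{E}[\vol_G(T)] = \delta \vol_G(V)$ and $\mathbb{E}[|\delta_G(T)|] = \delta \vol_G(V) \cdot [1 - p(1 - \phi_G(U))]$, hence $\mathbb{E}[\phi_G(T)] \leq 1 - 8\eps$, and a concentration plus $O(d)$-vertex repair pins the volume exactly while perturbing the cut by a negligible amount. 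When instead $\phi_G(W) < 1 - 16\eps$, a symmetric ``extension'' sampling keeps all of $W$ and includes each vertex of $A \triangle B$ independently with probability $\tfrac12$ (the exact probability making $\mathbb{E}[\vol_G(T)] = \delta \vol_G(V)$), analyzed via the fractional-cut identity $|\delta_G(x_\lambda)| = (1-\lambda)|\delta_G(W)| + \lambda |\delta_G(U)|$ for $x_\lambda := \mathbf{1}_W + \lambda \mathbf{1}_{A \triangle B}$. The constant $16$ in $1 - 16\eps$ is precisely the slack needed to absorb both the factor $\tfrac12$ loss from $p \geq \tfrac12$ and the small rounding/repair losses, so that the resulting $\phi_G(T)$ sits comfortably below $1 - \eps$ and yields the desired yes witness for $G$.
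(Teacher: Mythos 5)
Your construction (the bipartite double cover) and your {\sc Yes}-direction lift are exactly the paper's. Your {\sc No}-direction, however, takes a genuinely different route. The paper argues directly: assuming $G$ is a {\sc No}-instance, it partitions the lifted set $X \cup Y$ in $B$ into at most $8$ groups of volume at most $\delta \vol(V)/2$, and bounds $|E_B(X_i,Y_j)|$ for each pair of groups by using the fact that every subset $Z\subseteq V$ with $\vol(Z)\le\delta\vol(V)$ has few internal edges (since any extension of $Z$ to volume exactly $\delta\vol(V)$ has conductance at least $1-\eps$). Summing over the $O(1)$ pairs gives $|E_B(X,Y)| \le 16\eps\delta d|V|$ and hence $\phi_B(X\cup Y)\ge 1-16\eps$, with no volume adjustment needed. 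You instead argue the contrapositive, use submodularity together with the side-swap automorphism to reduce to cuts of $U = A\cup B$ and $W = A\cap B$ in $G$, and then try to repair the volumes. Both can be made to work, but the paper's direct argument sidesteps the delicate volume-fixing you have to do.

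There is a genuine gap in your Case~2 (the extension sampling). You analyze it ``via the fractional-cut identity $|\delta_G(x_\lambda)| = (1-\lambda)|\delta_G(W)| + \lambda|\delta_G(U)|$,'' but this identity is \emph{not} the expected cut of the random extension $T = W \cup (\text{random half of } A\triangle B)$. A direct computation gives
\[
\E\bigl[|\delta_G(T)|\bigr] \;=\; (1-\lambda)\,|\delta_G(W)| \;+\; \lambda\,|\delta_G(U)| \;+\; 2\lambda(1-\lambda)\,\bigl|E_G(A\triangle B,\,A\triangle B)\bigr|,
\]
and the extra term $2\lambda(1-\lambda)\,|E_G(A\triangle B, A\triangle B)|$ is \emph{not} a ``small rounding/repair loss.'' In general $|E_G(A\triangle B, A\triangle B)|$ can be of order $\delta d |V|$, which would wipe out the whole $16\eps$ margin. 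Your argument is rescued only by the implicit Case-2 assumption that $\phi_G(U) \ge 1-16\eps$ (this is what ``instead'' must mean): then $|E_G(U,U)| = \tfrac12(1-\phi_G(U))\vol_G(U) \le 8\eps\,\vol_G(U) \le 16\eps\,\delta\vol_G(V)$, and since $E_G(A\triangle B,A\triangle B)\subseteq E_G(U,U)$ the extra term is at most $8\eps\,\delta\vol_G(V)$, yielding $\E[|\delta_G(T)|] < (1-8\eps)\,\delta\vol_G(V)$. This bound has to be stated and proved; as written, the Case-2 analysis is incorrect. (Note that your Case~1 does not have this problem: your formula $\E[|\delta_G(T)|] = \delta\vol_G(V)\bigl[1-p(1-\phi_G(U))\bigr]$ is an exact computation that already accounts for the internal edges of $U$, which is exactly why the two cases cannot be treated ``symmetrically'' the way you suggest.) Finally, a smaller point: the concentration-plus-repair step typically needs to move $\Theta(\sqrt{|U|\log|V|})$ vertices rather than $O(d)$; the perturbation to the cut is still $o(\delta d |V|)$ so the conclusion stands, but the stated ``$O(d)$-vertex repair'' is not quite right.
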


\begin{proposition} \label{p:SSE-Reff}
Given an instance of $\text{SSE}_{\delta}(\alpha,\beta)$ on a $d$-regular bipartite graph $B$, there is a polynomial time algorithm to construct an instance of the weighted $s$-$t$ effective resistance network design problem with graph $G$ and cost budget $k$ satisfying the following properties.
\begin{itemize}
\item If $B$ is a {\sc Yes}-instance, then there is a subgraph $H$ of $G$ with cost at most $k$ and 
\[\Reff_H(s,t) \leq \frac{2}{(1-\beta)dk}.\]
\item if $B$ is a {\sc No}-instance, then every subgraph $H$ of $G$ with cost at most $k$ has 
\[\Reff_H(s,t) \geq \frac{2}{(1-\frac{\alpha}{2})dk}.\]
\end{itemize}
\end{proposition}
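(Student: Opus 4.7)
The plan is to attach a source $s$ and a sink $t$ to $B$ in the cheapest way that makes expansion of $B$ control effective resistance in $G$. Writing $B$ as a $d$-regular bipartite graph on $V_1 \cup V_2$ with $|V_1| = |V_2| = n$, build $G$ on vertex set $\{s,t\} \cup V_1 \cup V_2$ with three kinds of edges: a ``connection'' edge $(s,v)$ of cost $1$ and resistance $0$ for each $v \in V_1$, a connection edge $(v,t)$ of cost $1$ and resistance $0$ for each $v \in V_2$, and every edge of $B$ with cost $0$ and resistance $1$. Set the budget to $k := 2\delta n$, so that $dk = 2\delta dn = \delta \cdot \vol_B(V)$. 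The cost of any feasible $H$ is determined by which vertices of $V_1 \cup V_2$ are hooked to the terminals: writing $A_i := \{v \in V_i : \text{connection edge at } v \text{ is in } H\}$ and $S := A_1 \cup A_2$, feasibility means $|S| \leq k$ and spending the full budget corresponds to $\vol_B(S) = \delta \vol_B(V)$.

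For the yes direction, take $S$ from the SSE yes-instance and let $H$ consist of the connection edges to $S$ plus all of $B$. The cost is exactly $|S| = k$. Since $B$ is bipartite, every edge of $B$ interior to $S$ lies in $E_B(A_1,A_2)$, and $\phi_B(S) \leq \beta$ gives
\[
|E_B(A_1,A_2)| \;=\; \tfrac{1}{2}\bigl(\vol_B(S) - w(\delta_B(S))\bigr) \;\geq\; \tfrac{(1-\beta)\,dk}{2}.
\]
Contracting the zero-resistance connection edges identifies $s$ with $A_1$ and $t$ with $A_2$, turning these $B$-edges into parallel unit resistors between super-$s$ and super-$t$. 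By Rayleigh's monotonicity, $\Reff_H(s,t) \leq 1/|E_B(A_1,A_2)| \leq 2/((1-\beta)dk)$.

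For the no direction, let $H$ be any subgraph of cost at most $k$ with finite $\Reff_H(s,t)$. Because $B$-edges are free, Rayleigh's monotonicity lets us assume they all lie in $H$, and lets us pad $A_1, A_2$ so that $|S| = k$ exactly (each addition only decreases the resistance, so any lower bound for the padded graph transfers). The no-case hypothesis applied to $S$, whose volume is now exactly $\delta \vol_B(V)$, gives $|E_B(A_1,A_2)| = |E_B(S,S)| \leq (1-\alpha)dk/2$. To lower-bound $\Reff_H(s,t)$, apply the variational formula $\Reff_H(s,t) = \max_\phi (\phi(s)-\phi(t))^2 / \sum_{uv \in E(H)} w_{uv}(\phi(u)-\phi(v))^2$ with the test potential $\phi(s) = 1$, $\phi \equiv 1$ on $A_1$, $\phi(t) = 0$, $\phi \equiv 0$ on $A_2$, and $\phi \equiv \tfrac{1}{2}$ on $(V_1 \setminus A_1) \cup (V_2 \setminus A_2)$. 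This is consistent with every zero-resistance connection edge, and each $B$-edge contributes $(\phi(u)-\phi(v))^2$ equal to $1$, $1/4$, $1/4$, or $0$ according to whether it lies in $E_B(A_1,A_2)$, $E_B(A_1, V_2 \setminus A_2)$, $E_B(V_1 \setminus A_1, A_2)$, or $E_B(V_1 \setminus A_1, V_2 \setminus A_2)$. Summing using $|E_B(A_1, V_2 \setminus A_2)| = d|A_1| - |E_B(A_1,A_2)|$ and the symmetric identity, the Dirichlet energy equals $\tfrac{1}{2}|E_B(A_1,A_2)| + \tfrac{d}{4}(|A_1|+|A_2|) \leq (2-\alpha)dk/4$, so $\Reff_H(s,t) \geq 4/((2-\alpha)dk) = 2/((1 - \alpha/2)dk)$.

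The main technical step is selecting the right test potential: the value $\tfrac{1}{2}$ on outer vertices is precisely what makes each boundary $B$-edge contribute $\tfrac{1}{4}$, exactly half of the $\tfrac{1}{2}$ we can afford per internal edge, which balances the $(1-\alpha)$ and $1$ terms to yield the tight $(2-\alpha)/4$ prefactor and hence the factor-two gap of Theorem~\ref{t:SSE}. A minor bookkeeping detail is that the construction uses resistance-zero edges: these are permitted by the non-negativity requirement in Definition~\ref{d:weighted} and simply correspond to contracting $s$ into $A_1$ and $t$ into $A_2$; if one prefers to avoid them, they can be replaced by edges of resistance $\varepsilon$ and the same bounds recovered in the limit $\varepsilon \to 0$.
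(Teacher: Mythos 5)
Your construction, budget $k = 2\delta n$, and Yes-case analysis are identical to the paper's. The No-case is where you diverge methodologically: the paper contracts $X$, $Y$, and $Z = (V_1\setminus A_1)\cup(V_2\setminus A_2)$ to three super-vertices, reduces the resulting graph to a parallel combination of a direct $X$--$Y$ path of resistance $1/|E_B(X,Y)|$ and an indirect two-hop path through $Z$, and lower-bounds the latter's resistance by AM--GM ($1/x + 1/y \geq 4/(x+y)$ with $x+y = |\delta_B(S)|$). You instead plug a test potential ($1$ on $\{s\}\cup A_1$, $0$ on $\{t\}\cup A_2$, $1/2$ on $Z$) into the Rayleigh-quotient characterization $\Reff(s,t) = \max_\phi (\phi(s)-\phi(t))^2 / \sum_e w_e(\phi(u)-\phi(v))^2$. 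These are genuinely the same bound in disguise: your Dirichlet energy $\tfrac12 |E_B(A_1,A_2)| + \tfrac{dk}{4}$ equals $|E_B(A_1,A_2)| + \tfrac{x+y}{4}$, which is exactly the conductance the paper's series-parallel calculation produces after the AM--GM step, and the choice $\phi(Z)=\tfrac12$ is the electrical potential of the contracted network in the balanced ($x=y$) worst case. Your route is arguably cleaner since it avoids the explicit series-parallel reduction and the AM--GM lemma, replacing both with a single one-line energy computation; the paper's route has the advantage of making the graph structure (a $\Theta$-network) visually transparent. You also correctly note that the zero-resistance edges require either a contraction or an $\eps\to 0$ limit to make the variational formula literally applicable; the paper sidesteps this by contracting from the outset. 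The proof is correct.
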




Theorem~\ref{t:SSE} will follow immediately from the two propositions.

\begin{theorem} \label{t:APX}
For any $\eps'>0$, it is NP-hard to approximate the weighted $s$-$t$ effective resistance network design problem to within a factor of $2 - \eps'$, assuming that $\text{SSE}_\delta(1-\eps,\eps)$ is NP-hard on regular graphs for sufficiently small $\eps>0$. 
\end{theorem}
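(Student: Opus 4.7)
The plan is to derive Theorem~\ref{t:APX} by simply chaining Propositions~\ref{p:bipartite} and~\ref{p:SSE-Reff} and then doing the parameter arithmetic. Start from the Small-Set Expansion conjecture itself: for any $\eps > 0$ there is a $\delta > 0$ (taken sufficiently small) such that $\text{SSE}_\delta(1-\eps,\eps)$ on $d$-regular graphs is NP-hard. Apply Proposition~\ref{p:bipartite} to transport this hardness to $\text{SSE}_\delta(1-16\eps,\eps)$ on $d$-regular bipartite graphs. Then feed that bipartite instance into the reduction of Proposition~\ref{p:SSE-Reff} with parameters $\alpha = 1-16\eps$ and $\beta = \eps$, producing a weighted effective-resistance instance $(G,k)$.

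The next step is to read off the resulting gap. In the \textsc{Yes} case, Proposition~\ref{p:SSE-Reff} guarantees a subgraph of cost at most $k$ whose $s$-$t$ effective resistance is at most $2/((1-\eps)dk)$. In the \textsc{No} case, every subgraph of cost at most $k$ has effective resistance at least $2/((1-\tfrac{1-16\eps}{2})dk) = 4/((1+16\eps)dk)$. The ratio of these two thresholds is
\[
\frac{4/((1+16\eps)dk)}{2/((1-\eps)dk)} \;=\; \frac{2(1-\eps)}{1+16\eps},
\]
which tends to $2$ as $\eps\to 0$. Hence, given any target slack $\eps' > 0$, we first pick $\eps > 0$ so small that $2(1-\eps)/(1+16\eps) > 2 - \eps'$, and then invoke the SSE conjecture at that $\eps$ (and the corresponding small $\delta$). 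Any polynomial-time $(2-\eps')$-approximation for the weighted $s$-$t$ effective resistance network design problem would then distinguish \textsc{Yes} from \textsc{No} instances of $\text{SSE}_\delta(1-\eps,\eps)$, contradicting the conjecture.

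There is essentially no obstacle at this level; all the work is encapsulated in the two propositions. The only thing to be careful about is the direction of the parameter drift: the bipartite reduction weakens the completeness side from $\eps$ to $16\eps$, so we must verify that the ratio still tends to $2$ (not to some smaller constant) under the worsened parameters, which the calculation above confirms. The real substance — constructing the bipartite SSE instance in Proposition~\ref{p:bipartite} and embedding a bipartite SSE instance into an effective-resistance network so that the completeness/soundness thresholds $2/((1-\beta)dk)$ and $2/((1-\tfrac{\alpha}{2})dk)$ arise — will be the content of the two earlier subsections; at the level of Theorem~\ref{t:APX} itself it is just a one-paragraph chaining argument.
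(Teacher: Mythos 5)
Your proof matches the paper's almost verbatim: chain Proposition~\ref{p:bipartite} and Proposition~\ref{p:SSE-Reff} with $\alpha = 1-16\eps$, $\beta = \eps$, compute the ratio $\frac{2(1-\eps)}{1+16\eps}$, and take $\eps$ small enough to exceed $2-\eps'$. The arithmetic and the direction of the parameter drift are both handled correctly, so this is the same argument.
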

\begin{proof}
First, given a $d$-regular instance of $\text{SSE}_\delta(1-\eps,\eps)$, we apply Proposition~\ref{p:bipartite} to obtain a $d$-regular bipartite instance of $\text{SSE}_\delta(1-16\eps,\eps)$.
Then, we apply Proposition~\ref{p:SSE-Reff} with $\alpha = 1-16\eps$ and $\beta=\eps$ and see that the ratio between the $s$-$t$ effective resistance of the {\sc No}-case and the {\sc Yes}-case is at least
\[
\frac{(1-\beta)dk}{(1-\frac{\alpha}{2})dk} = \frac{1-\eps}{\frac{1}{2} + 8\eps} = \frac{2(1-\eps)}{1+16\eps} > 2 - \eps',
\]
for sufficiently small $\eps$.
\end{proof}

We will prove Proposition~\ref{p:SSE-Reff} in Section~\ref{ss:SSE-Reff} and Proposition~\ref{p:bipartite} in Section~\ref{ss:bipartite}.

\subsubsection{From Bipartite Small-Set Expansion to weighted $s$-$t$ Effective Resistance Network Design} \label{ss:SSE-Reff}

We prove Proposition~\ref{p:SSE-Reff} in this subsection.
In the {\sc Yes}-case of bipartite SSE, we use the small dense subgraph (from the small low conductance set) to construct a small subgraph with small $s$-$t$ effective resistance.
In the {\sc No}-case of bipartite SSE, we argue that every small subgraph has considerably larger $s$-$t$ effective resistance.

\begin{figure}[!ht]
\centering
\resizebox{0.6\textwidth}{!}{
\begin{tikzpicture}
	\tikzset{VertexStyle/.style = {shape = circle, fill = black, minimum size = 0.2}}
	\Vertex[Math,x=0,y=0,LabelOut,Lpos=180]{s}
	\Vertex[Math,x=3, y=2.5,LabelOut,Lpos=90]{x_1}		
	\node (dots) at (3, 1.5) {$\vdots$};	
	\Vertex[Math,x=3, y=0,LabelOut,Lpos=145]{x_i}	
	\node (dots) at (3, -1.3) {$\vdots$};
	\Vertex[Math,x=3, y=-2.5,LabelOut,Lpos=270]{x_n}		
	\Vertex[Math, x=6, y=2.5,LabelOut,Lpos=90]{y_1}
	\node (dots) at (6, 1.5) {$\vdots$};	
	\Vertex[Math,x = 6, y = 0,LabelOut,Lpos=45]{y_j}
	\node (dots) at (6, -1.3) {$\vdots$};	
	\Vertex[Math,x=6,y=-2.5,LabelOut,Lpos=270]{y_n}
	\Vertex[x=9,y=0,LabelOut]{t}

	\Edge(y_1)(t)
	\Edge(t)(y_n)	
	
	\tikzset{EdgeStyle/.style = {}}
	\Edge(s)(x_i)
	\Edge(t)(y_j)
	
	\Edge(s)(x_1)
	\Edge(s)(x_n)

	\draw (x_1) -- (4,2.3);	
	\draw (x_1) -- (4,2.1);	
	\draw (x_1) -- (4,1.9);	

	\draw (x_n) -- (4,-2.3);	
	\draw (x_n) -- (4,-2.1);	
	\draw (x_n) -- (4,-1.9);
	
	\draw (y_1) -- (5,2.2);	
	\draw (y_1) -- (5,2);	
	\draw (y_1) -- (5,1.7);	

	\draw (y_n) -- (5,-2.5);	
	\draw (y_n) -- (5,-2);	
	\draw (y_n) -- (5,-1.7);
	
	\draw (x_i) -- (4,0.6);	
	\draw (x_i) -- (4,0);	
	\draw (x_i) -- (4,-0.6);
	
	\draw (y_j) -- (5,0.6);	
	\draw (y_j) -- (5,0);	
	\draw (y_j) -- (5,-0.6);	
	


	\draw (3,0) ellipse [x radius = 0.8, y radius = 3.3];
	\node (VX) at (3, -3.6) {$V_X$};
	\draw (6,0) ellipse [x radius = 0.8, y radius = 3.3];
	\node (VY) at (6, -3.6) {$V_Y$};

\end{tikzpicture}
}

\caption{
Reduction from bipartite small set expansion to weighted $s$-$t$ effective resistance network design. 
} \label{f:bipartite}

\end{figure}
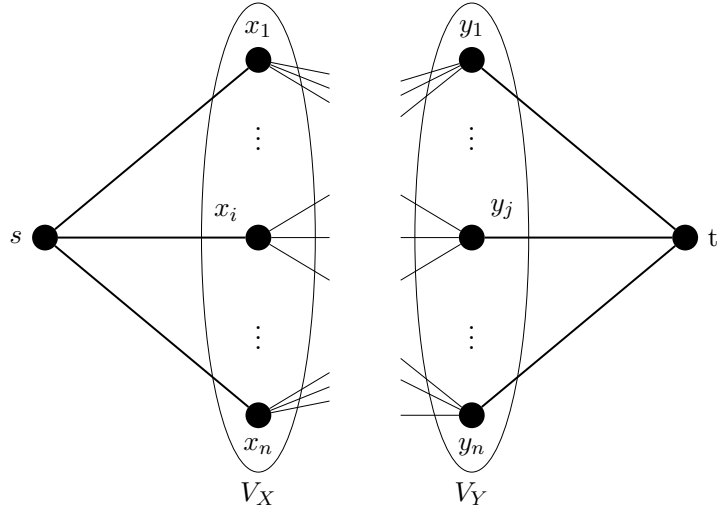

{\bf Construction:}
Given an $\text{SSE}_\delta(\alpha,\beta)$ instance with a $d$-regular bipartite graph $B=(V_X,V_Y;E_B)$, we construct an instance of the weighted $s$-$t$ effective resistance network design problem with graph $G=(V,E)$ as follows.
See Figure~\ref{f:bipartite} for an illustration.
\begin{itemize}
\item[] {\bf Vertex Set:} The vertex set $V$ of $G$ is simply the disjoint union of $\{s\}, V_X, V_Y, \{t\}$.
\item[] {\bf Edge Set:} The edge set $E$ of $G$ is the disjoint union of three edge sets $E_s, E_B, E_t$.
The edge set $E_s$ has $|V_X|$ edges, where there is an edge from $s$ to each vertex $v \in V_X$.
The edge set $E_t$ has $|V_Y|$ edges, where there is an edge from each vertex $v \in V_Y$ to $t$.
\item[] {\bf Costs and Resistances:} Every edge $e$ in $E_B$ has $c_e=0$ and $r_e=1$.  Every edge $e \in E_s \cup E_t$ has $c_e=1$ and $r_e=0$.
\item[] {\bf Budget:} The cost budget $k$ is $\delta |V_X \cup V_Y|$.
\end{itemize}

\begin{figure}[!ht]
\centering
\resizebox{0.65\textwidth}{!}{
\begin{tikzpicture}
	\def \rad{.7}
    \def \phi{30}
    \def \sx{0}
    \def \sy{0}
    \def \tx{11}
    \def \ty{0}
    \def \Xx{3}
    \def \Xy{0}
    \def \Yx{8}
    \def \Yy{0}
    \def \Zx{5.5}
    \def \Zy{2.5}
	
	\node[draw, radius=\radst, circle, fill, label=left:$s$] (s) at (\sx,\sy) {};
    \node[draw, radius=\radst, circle, fill, label=right:$t$] (t) at (\tx,\ty) {};
    \node (X) at (\Xx,\Xy) {$X$};
    \draw (\Xx,\Xy) circle [radius=\rad];
    \node (Y) at (\Yx,\Yy) {$Y$};
    \draw (\Yx,\Yy) circle [radius=\rad];
    \node (Z) at (\Zx,\Zy) {$Z$};
    \draw (\Zx,\Zy) circle [radius=\rad];
    
    \draw ($(\sx,\sy)$) -- node[above] {$|X|$ edges} ($(\Xx,\Xy) + ( 180-\phi: \rad )$);
    \node (dots) at ($(\sx/2,\sy/2) + (\Xx/2,\Xy/2) + (.4,0.1)$) {$\vdots$};
    \draw ($(\sx,\sy)$) -- ($(\Xx,\Xy) + ( -180+\phi: \rad )$);
    
    \draw ($(\Yx,\Yy) + ( \phi: \rad )$) -- node[above] {$|Y|$ edges} ($(\tx,\ty)$);
    \node (dots) at ($(\Yx/2,\Yy/2) + (\tx/2,\ty/2) + (-.4,0.1)$) {$\vdots$};
    \draw ($(\Yx,\Yy) + ( -\phi: \rad )$) -- ($(\tx,\ty)$);
    
    \draw ($(\Xx,\Xy) + ( \phi: \rad )$) -- ($(\Yx,\Yy) + ( 180-\phi: \rad )$);
    \node (dots) at ($(\Xx/2,\Xy/2) + (\Yx/2,\Yy/2) + (0,0.1)$) {$\vdots$};
    \draw ($(\Xx,\Xy) + ( -\phi: \rad )$) -- node[below] {$\geq \frac12 (1-\beta) d k$ edges} ($(\Yx,\Yy) + ( -180+\phi: \rad )$);
    
    \draw [dashed] ($(\Xx,\Xy) + ( \phi: \rad )$) -- ($(\Zx,\Zy) + ( -90-\phi: \rad )$);
    \draw [dashed] ($(\Xx,\Xy) + ( 10+2*\phi: \rad )$) -- ($(\Zx,\Zy) + ( -100-2*\phi: \rad )$);
    \node [rotate=-135] (dots) at ($(\Xx/2,\Xy/2) + (\Zx/2,\Zy/2)-(0,0.05)$) {$\vdots$};

    \draw [dashed] ($(\Yx,\Yy) + ( 180-\phi: \rad )$) -- ($(\Zx,\Zy) + ( -90+\phi: \rad )$);
    \draw [dashed] ($(\Yx,\Yy) + ( 80+\phi: \rad )$) -- ($(\Zx,\Zy) + ( -\phi+10: \rad )$);
    \node [rotate=135] (dots) at ($(\Yx/2,\Yy/2) + (\Zx/2,\Zy/2)-(0,0.05)$) {$\vdots$};
\end{tikzpicture}
}
\caption{In the {\sc Yes}-case, the solid edges are included in $H$ and the dashed edges are deleted.} \label{fig:YESReff}
\end{figure}
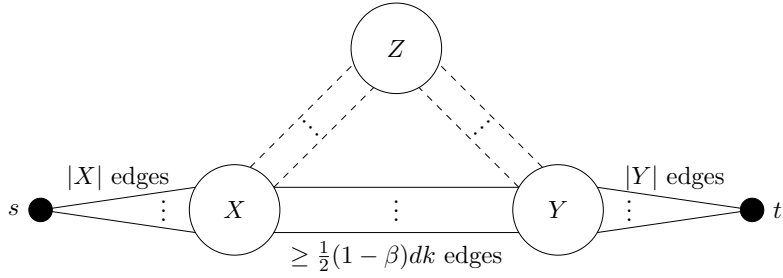 

{\bf {\sc Yes}-case:} 
Suppose $B$ is a {\sc Yes}-instance of $\text{SSE}_\delta(\alpha,\beta)$.
Since $B$ is regular,
there exist subsets $X \subseteq V_X$ and $Y \subseteq V_Y$ such that $|X \cup Y| = \delta|V_X \cup V_Y| = k$ and $\phi_B(X \cup Y) \leq \beta$.
We construct the subgraph $H$ of $G$ as follows.
\begin{itemize}
\item[] {\bf Subgraph $H$:} The subgraph $H$ includes all the edges from $s$ to $X$, all the edges from $X$ to $Y$, and all the edges from $Y$ to $t$.
Since edges from $X$ to $Y$ are of cost zero, the total cost in $H$ is equal to $|X| + |Y| = k$.
\end{itemize}
The following claim will complete the proof of the first item of Proposition~\ref{p:SSE-Reff}.
\begin{lemma}
$\Reff_H(s,t) \leq 2/((1-\beta)dk)$.
\end{lemma}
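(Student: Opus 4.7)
The plan is to produce an explicit unit $s$-$t$ flow $f$ in $H$ and invoke Thomson's principle (Fact~\ref{f:Thomson}) to conclude $\Reff_H(s,t) \leq \ene_H(f)$. The key observation is that all edges in $E_s \cup E_t$ have resistance $0$, so they contribute nothing to the energy, and therefore only the unit-resistance edges in $E_B$ between $X$ and $Y$ matter. This means we should try to spread the unit of flow as evenly as possible over the $X$-$Y$ edges of $H$.

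First I would lower-bound the number of $X$-$Y$ edges in $H$ using the expansion hypothesis. Since $B$ is $d$-regular and bipartite, $\vol_B(X \cup Y) = d|X \cup Y| = dk$. The conductance bound $\phi_B(X \cup Y) \leq \beta$ gives $|\delta_B(X \cup Y)| \leq \beta d k$. Since $B$ is bipartite, every edge with both endpoints in $X \cup Y$ lies between $X$ and $Y$, and double-counting vertex degrees yields
\[
|E_B(X,Y)| \;=\; \frac{\vol_B(X\cup Y) - |\delta_B(X\cup Y)|}{2} \;\geq\; \frac{(1-\beta)dk}{2}.
\]
All these edges are present in $H$ by construction.

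Next I would define the flow: for each edge $xy \in E_B(X,Y)$, route $1/|E_B(X,Y)|$ units along the length-3 path $s \to x \to y \to t$, using the (zero-resistance) edges in $E_s$ and $E_t$. Summing over all such edges, this gives a valid unit $s$-$t$ flow in $H$, where each $X$-$Y$ edge carries exactly $1/|E_B(X,Y)|$ units of flow. The energy is
\[
\ene_H(f) \;=\; \sum_{e \in E_s \cup E_t} r_e\, f_e^2 \;+\; \sum_{e \in E_B(X,Y)} 1 \cdot \Big(\frac{1}{|E_B(X,Y)|}\Big)^{\!2} \;=\; \frac{1}{|E_B(X,Y)|} \;\leq\; \frac{2}{(1-\beta)dk},
\]
using that $r_e = 0$ on $E_s \cup E_t$. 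Thomson's principle then gives $\Reff_H(s,t) \leq \ene_H(f) \leq 2/((1-\beta)dk)$.

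There is no real obstacle here; the argument is short once one recognizes that zero-resistance edges effectively contract $\{s\} \cup X$ into one supernode and $\{t\} \cup Y$ into another, reducing $H$ (electrically) to $|E_B(X,Y)|$ unit resistors in parallel. The only thing to double-check is the bipartite counting that yields the $(1-\beta)dk/2$ lower bound, which uses bipartiteness to ensure that all internal edges of $X \cup Y$ go between $X$ and $Y$.
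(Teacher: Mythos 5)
Your proof is correct and takes essentially the same approach as the paper: the same bipartite double-counting to show $|E_B(X,Y)| \geq \tfrac{1}{2}(1-\beta)dk$, and the same observation that the zero-resistance edges electrically contract $\{s\}\cup X$ and $\{t\}\cup Y$, leaving $|E_B(X,Y)|$ unit resistors in parallel. The paper reads off the parallel-resistance formula directly, whereas you phrase the last step via an explicit uniform flow and Thomson's principle, but these are the same calculation.
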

\begin{proof}
Since $B$ is a $d$-regular bipartite graph, we have 
\[d(|X|+|Y|) = \vol_B(X \cup Y) = |\delta_B(X \cup Y)| + 2 |E_B(X,Y)|,
\]
where $E_B(X,Y)$ denotes the set of edges with one endpoint in $X$ and one endpoint in $Y$. 
Since $\phi_B(X \cup Y) \leq \beta$, we have $|\delta_B(X \cup Y)| \leq \beta \cdot \vol_G(X \cup Y) = d\beta(|X|+|Y|)$. 
Hence, the number of edges between $X$ and $Y$ is
\[|E_B(X,Y)| = \frac{d(|X|+|Y|) - |\delta_B(X \cup Y)|}{2} 
\geq \frac{1}{2}(1-\beta)d(|X|+|Y|) = \frac{1}{2}(1-\beta)dk.\]
In terms of $s$-$t$ effective resistance, $H$ is equivalent to the graph in Figure~\ref{fig:YESReff}, where $Z = (V_X \backslash X) \cup (V_Y \backslash Y)$ is the set of vertices not in $X$ and $Y$. 
Since the edges from $s$ to $X$ and from $Y$ to $t$ have zero resistance and edges between $X$ and $Y$ have resistance one, we have
$\Reff_H(s,t) \leq 2/((1 - \beta) d k)$.
\end{proof}

{\bf {\sc No}-case:}
We will prove the second item of Proposition~\ref{p:SSE-Reff} by arguing that every subgraph of $B$ with total cost at most $k$ has considerably larger $s$-$t$ effective resistance.
Since all the edges between $V_X$ and $V_Y$ have zero cost and adding edges never increases $s$-$t$ effective resistance (by Rayleigh's monotonicity principle), 
we can assume without loss of generality that any solution $H$ to the weighted $s$-$t$ effective resistance network design problem takes all edges between $V_X$ and $V_Y$ and also takes exactly $k$ edges from $E_s \cup E_t$. 
Consider an arbitrary subgraph $H$ with the above properties.
Let $X \subseteq V_X$ be the set of neighbors of $s$ and $Y \subseteq V_Y$ be the set of neighbors of $t$, with $|X|+|Y|=k$.
Let $\phi:=\phi_B(X \cup Y)$.
Note that $\phi \geq \alpha$ as we are in the {\sc No}-case where $\phi_B(X \cup Y) \geq \alpha$ for every $|X \cup Y|=k$.
Using the same calculation as above, we have
\[
|E_B(X,Y)| = \frac{1}{2} (1-\phi_B(X \cup Y))dk = \frac{1}{2} (1-\phi)dk.
\]
The subgraph $H$ is shown in Figure~\ref{fig:NOReff}, where $Z = (V_X \backslash X) \cup (V_Y \backslash Y)$ is the set of vertices not in $X$ and $Y$, and the edges within $Z$ are not shown. 
To lower bound $\Reff_H(s,t)$, we modify $H$ to obtain $H'$ and argue that $\Reff_H(s,t) \geq \Reff_{H'}(s,t)$ and then show a lower bound on $\Reff_{H'}(s,t)$.

\begin{figure}[!ht]
\centering
\resizebox{0.65\textwidth}{!}{
\begin{tikzpicture}
	\def \rad{.7}
    \def \phi{30}
    \def \sx{0}
    \def \sy{0}
    \def \tx{11}
    \def \ty{0}
    \def \Xx{3}
    \def \Xy{0}
    \def \Yx{8}
    \def \Yy{0}
    \def \Zx{5.5}
    \def \Zy{2.5}
	
	\node[draw, radius=\radst, circle, fill, label=left:$s$] (s) at (\sx,\sy) {};
    \node[draw, radius=\radst, circle, fill, label=right:$t$] (t) at (\tx,\ty) {};
    \node (X) at (\Xx,\Xy) {$X$};
    \draw (\Xx,\Xy) circle [radius=\rad];
    \node (Y) at (\Yx,\Yy) {$Y$};
    \draw (\Yx,\Yy) circle [radius=\rad];
    \node (Z) at (\Zx,\Zy) {$Z$};
    \draw (\Zx,\Zy) circle [radius=\rad];
    
    \draw ($(\sx,\sy)$) -- node[above] {$|X|$ edges} ($(\Xx,\Xy) + ( 180-\phi: \rad )$);
    \node (dots) at ($(\sx/2,\sy/2) + (\Xx/2,\Xy/2) + (.4,0.1)$) {$\vdots$};
    \draw ($(\sx,\sy)$) -- ($(\Xx,\Xy) + ( -180+\phi: \rad )$);
    
    \draw ($(\Yx,\Yy) + ( \phi: \rad )$) -- node[above] {$|Y|$ edges} ($(\tx,\ty)$);
    \node (dots) at ($(\Yx/2,\Yy/2) + (\tx/2,\ty/2) + (-.4,0.1)$) {$\vdots$};
    \draw ($(\Yx,\Yy) + ( -\phi: \rad )$) -- ($(\tx,\ty)$);
    
    \draw ($(\Xx,\Xy) + ( \phi-10: \rad )$) --  ($(\Yx,\Yy) + ( 180-\phi+10: \rad )$);
    \node (dots) at ($(\Xx/2,\Xy/2) + (\Yx/2,\Yy/2) + (0, 0.1)$) {$\vdots$};
    \draw ($(\Xx,\Xy) + ( -\phi+10: \rad )$) -- node[below] {$\leq \frac{1}{2} (1-\alpha)  d  k$ edges} ($(\Yx,\Yy) + ( -180+\phi-10: \rad )$);
    
    \draw ($(\Xx,\Xy) + ( \phi-10: \rad )$) -- ($(\Zx,\Zy) + ( -80-\phi: \rad )$);
    \draw ($(\Xx,\Xy) + ( 45+2*\phi: \rad )$) -- ($(\Zx,\Zy) + ( -135-2*\phi: \rad )$);
    \node [rotate=-135] (dots) at ($(\Xx/2,\Xy/2) + (\Zx/2,\Zy/2) - (.1,0)$) {$\vdots$};

    \draw ($(\Yx,\Yy) + ( 190-\phi: \rad )$) -- ($(\Zx,\Zy) + ( -100+\phi: \rad )$);
    \draw ($(\Yx,\Yy) + ( 45+\phi: \rad )$) -- ($(\Zx,\Zy) + ( \phi/2: \rad )$);
    \node [rotate=135] (dots) at ($(\Yx/2,\Yy/2) + (\Zx/2,\Zy/2) + (.1,0)$) {$\vdots$};
\end{tikzpicture}
}
\caption{The subgraph $H'$ is obtained by identifying the subsets $X,Y,Z$ into single vertices.} \label{fig:NOReff}
\end{figure}
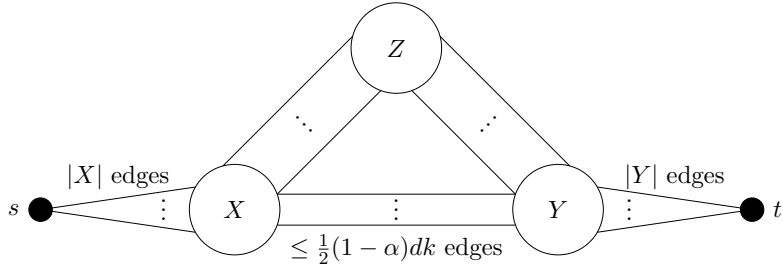

To obtain $H'$ from $H$, we simply identify the three subsets of vertices $X,Y,Z$ to three vertices, which is equivalent to adding a clique of zero resistance edges to each of these three subsets.
By Rayleigh's monotonicity principle, this could only decrease the $s$-$t$ effective resistance and so we have $\Reff_H(s,t) \geq \Reff_{H'}(s,t)$.

In terms of $s$-$t$ effective resistance,
the subgraph $H'$ is equivalent to the graph with two paths between $X$ and $Y$ (with parallel edges): one path $P_1$ of length one with $|E_B(X,Y)|$ parallel edges between $X$ and $Y$, another path $P_2$ of length two with $|E_B(X,Z)|$ parallel edges between $X$ and $Z$ and $|E_B(Z,Y)|$ parallel edges between $Z$ and $Y$.
To lower bound $\Reff_{H'}(s,t)$, we lower bound the resistance of $P_1$ and $P_2$, denoted by $r(P_1)$ and $r(P_2)$.
Note that 
\[r(P_1) = \frac{1}{E_B(X,Y)} = \frac{2}{(1-\phi)dk}.\]
For $r(P_2)$, 
let $x = |\delta_B(X,Z)|$ and $y = |\delta_B(Y,Z)|$, 
then
\[
r(P_2) = \frac{1}{x} + \frac{1}{y} = \frac{1}{x+y} \cdot \frac{(x+y)^2}{xy} = \frac{1}{x+y} \cdot \left(\frac{x}{y} + \frac{y}{x} + 2\right) \geq \frac{4}{x+y} = \frac{4}{\phi dk},
\]
where the inequality holds since $a + 1/a \geq 2$ for any $a > 0$,
and the last equality holds because $x + y = |\delta_B(X \cup Y)| = \phi dk$.
Finally, by Fact~\ref{f:resistance-SP},
\[
\Reff_H(s,t) \geq \Reff_{H'}(s,t) = \frac{1}{1/r(P_1) + 1/r(P_2)}
\geq \frac{1}{\frac{1}{2} (1-\phi) dk + \frac{1}{4} \phi dk} 
= \frac{2}{(1-\phi/2)dk}
\geq \frac{2}{(1-\alpha/2)dk},
\]
where the last inequality is because we are in the {\sc No}-case.
This completes the proof of the second item of Proposition~\ref{p:SSE-Reff}.

\begin{remark}
In this subsection, we show the hardness of the weighted $s$-$t$ effective resistance network design problem, when the edge cost and the edge resistance could be arbitrary.
Using a similar argument as in the proof of Theorem~\ref{t:NPc},
the reduction can be modified to the unit-cost case if we replace the edges from $s$ to $V_X$ and $V_Y$ to $t$ by sufficiently long paths (so that the cost of connecting $s$ to a vertex in $V_X$ is much larger than the cost of connecting a vertex in $V_X$ to a vertex in $V_Y$).
Therefore, the same $(2 - \eps)$-SSE-hardness also holds in the case when every edge has the same cost.
\end{remark}

\subsubsection{From Small Set Expansion to Bipartite Small Set Expansion} \label{ss:bipartite}

We prove Proposition~\ref{p:bipartite} in this subsection.

{\bf Construction:}
Given an instance $\SSE_\delta(1-\eps,\eps)$ on a $d$-regular graph $G=(V,E)$, we construct a $d$-regular bipartite graph $B=(V_X,V_Y;E_B)$ as follows.
For each vertex $v$ in $V$, we create a vertex $v_X \in V_X$ and a vertex $v_Y \in V_Y$, so that $|V_X|=|V_Y|=|V|$.
For each edge $uv \in E$, we add two edges $u_X v_Y$ and $u_Y v_X$ to $E_B$.
It is clear from the construction that $B$ is $d$-regular. 

{\bf Correctness:}
To prove Proposition~\ref{p:bipartite}, we will establish the following two claims.
\begin{enumerate}
\item {\bf {\sc Yes}-case:} 
If there is a set $S \subseteq V$ with $|S| = \delta |V|$ and $\phi_G(S) \leq \eps$ in $G$, then there exist $X \subseteq V_X$ and $Y \subseteq V_Y$ with $|X|+|Y| = \delta(|V_X|+|V_Y|)$ and $\phi_B(X \cup Y) \leq \eps$ in $B$.
\item {\bf {\sc No}-case:}
If every set $S \subseteq V$ with $|S| = \delta |V|$ has $\phi_G(S) \geq 1-\eps$ in $G$, then every sets $X \subseteq V_X$ and $Y \subseteq V_Y$ with $|X|+|Y| = \delta(|V_X|+|V_Y|)$ has $\phi_B(X \cup Y) \geq 1-8\eps$ in $B$.
\end{enumerate}

{\bf {\sc Yes}-case:}
Let $S \subseteq V$ be a subset with $|S|=\delta|V|$ and $\phi_G(S) \leq \eps$ in $G$.
Let $S_X := \{v_X \mid v \in S\}$ and $S_Y := \{v_Y \mid v \in S\}$, with $|S|=|S_X|=|S_Y|$.
By construction, an edge $uv \in \delta_G(S)$ if and only if both $u_X v_Y$ and $v_X u_Y$ are in $\delta_B(S_X \cup S_Y)$, and thus $|\delta_B(S_X \cup S_Y)| = 2|\delta_G(S)|$.
Since $|S_X \cup S_Y| = |S_X|+|S_Y| = 2|S|$ and $B$ is $d$-regular, we have
\[
\phi_B(S_X \cup S_Y) = \frac{|\delta_B(S_X \cup S_Y)|}{\vol_B(S_X \cup S_Y)}
= \frac{|\delta_B(S_X \cup S_Y)|}{d(|S_X| + |S_Y|)}
= \frac{2|\delta_G(S)|}{2d|S|} 
= \phi_G(S)
\leq \eps.
\]

{\bf {\sc No}-case:}
Consider arbitrary subsets $X \subseteq V_X$ and $Y \subseteq V_Y$ with $|X|+|Y| = \delta(|V_X|+|V_Y|) = 2\delta|V|$.
To lower bound $\phi_B(X \cup Y)$, we will upper bound $|E_B(X,Y)|$.
We partition $X$ into groups $X_1, \ldots, X_a$ where every group except the last group is of size $\delta|V|/2$ and the last group is of size at most $\delta|V|/2$.
We partition $Y$ into groups $Y_1, \ldots, Y_b$ in a similar way.
The following claim uses the small-set expansion property in $G$ to show that there is no small dense subset in $B$.

\begin{lemma} \label{f:expansion}
Suppose $G$ is a {\sc No}-instance of $\SSE_\delta(1-\eps, \eps)$.
Then, for any $1 \leq i \leq a$ and $1 \leq j \leq b$, 
\[|E_B(X_i, Y_j)| \leq \eps \delta d |V|.\]
\end{lemma}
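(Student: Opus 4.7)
The plan is to pull $X_i$ and $Y_j$ back to $V$ and invoke the No-case hypothesis on a slightly padded union. First, define $S_i := \{v \in V : v_X \in X_i\}$ and $T_j := \{v \in V : v_Y \in Y_j\}$; since the maps $v \mapsto v_X$ and $v \mapsto v_Y$ are bijections $V \to V_X$ and $V \to V_Y$, this gives $|S_i| = |X_i| \leq \delta|V|/2$ and $|T_j| = |Y_j| \leq \delta|V|/2$, so in particular $|S_i \cup T_j| \leq \delta|V|$.

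Next I would relate $|E_B(X_i, Y_j)|$ to the internal edge count of $S_i \cup T_j$ in $G$. By construction of $B$, each undirected edge $uv \in E$ contributes exactly two edges $u_X v_Y$ and $u_Y v_X$ to $E_B$; the first lies in $E_B(X_i, Y_j)$ iff $u \in S_i$ and $v \in T_j$, and the second iff $v \in S_i$ and $u \in T_j$. In either situation the underlying edge of $G$ has both endpoints in $S_i \cup T_j$, and each undirected edge of $G$ contributes at most $2$ to $|E_B(X_i, Y_j)|$. Writing $E_G(W)$ for the set of edges of $G$ with both endpoints in $W$, this yields
\[
|E_B(X_i, Y_j)| \;\leq\; 2\,|E_G(S_i \cup T_j)|.
\]

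Finally I would apply the No-case to a padded set. Since $|S_i \cup T_j| \leq \delta|V|$, choose an arbitrary superset $W \supseteq S_i \cup T_j$ with $|W| = \delta|V|$; monotonicity of the internal-edge count gives $|E_G(S_i \cup T_j)| \leq |E_G(W)|$. Because $G$ is $d$-regular, $\vol_G(W) = d|W| = \delta\vol_G(V)$, so the No-case hypothesis gives $\phi_G(W) \geq 1-\eps$, i.e. $|\delta_G(W)| \geq (1-\eps)\vol_G(W)$. Combined with the identity $\vol_G(W) = 2|E_G(W)| + |\delta_G(W)|$, this yields $|E_G(W)| \leq \frac{\eps}{2}\vol_G(W) = \frac{1}{2}\eps d \delta |V|$. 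Chaining the three inequalities gives $|E_B(X_i, Y_j)| \leq \eps d \delta |V|$, as claimed.

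The argument is essentially bookkeeping, and I do not anticipate any real obstacle; the one subtle point to track is the factor of two introduced by the bipartite double cover (each edge of $G$ producing two edges of $B$), which cancels neatly against the $1/2$ arising when converting a conductance lower bound into an upper bound on the number of internal edges.
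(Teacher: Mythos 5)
Your proof is correct and follows essentially the same route as the paper: your set $S_i \cup T_j$ is exactly the paper's $Z = \{v : v_X \in X_i \text{ or } v_Y \in Y_j\}$, the factor-of-two comparison $|E_B(X_i,Y_j)| \le 2|E_G(S_i \cup T_j)|$ is the same bipartite double-cover observation, and your explicit padding to a set $W$ of size exactly $\delta|V|$ is just spelling out what the paper dismisses as a triviality when it extends the internal-edge bound from sets of size $\delta|V|$ to sets of size at most $\delta|V|$.
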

\begin{proof}
We first argue that there is no small dense subset in $G$, and then we will use it to bound $|E_B(X_i,Y_j)|$.
Suppose $S \subseteq V$ with $|S| = \delta|V|$.
As $G$ is a {\sc No}-instance, we know that $\phi_G(S) \geq 1-\eps$ and thus 
$|\delta_G(S)| \geq (1-\eps)\vol_G(S) = (1-\eps)d|S|$.
Since $d|S| = \vol_G(S) = |\delta_G(S)| + 2|E_G(S,S)|$,
it follows that $|E_G(S,S)| \leq \eps d|S|/2 = \eps \delta d |V|/2$.
Note that this also implies trivially that $|E_G(Z,Z)| \leq \eps \delta d |V|/2$ for any $Z$ with $|Z| \leq \delta|V|$.

Given $X_i$ and $Y_j$, let $Z := \{v \in G \mid v_X \in X_i {\rm~or~} v_Y \in Y_j\}$.
In words, $Z$ is the set of vertices in $G$ which have at least one copy in $X_i \cup Y_j$ in $B$.
Since each $X_i$ and $Y_j$ is of size at most $\delta |V|/2$,
it follows that $|Z| \leq \delta |V|$.
Also, note that $|E_B(X_i,Y_j)| \leq 2|E_G(Z,Z)|$,
as each edge in $E_B(X_i,Y_j)$ corresponds to one edge in $E_G(Z,Z)$ while each edge in $E_G(Z,Z)$ is corresponded to at most two edges in $E_B(X_i,Y_j)$.
Therefore, we can apply the bound in the previous paragraph to conclude that $|E(X_i,Y_j)| \leq 2|E_G(Z,Z)| \leq \eps \delta d |V|$.
\end{proof}

We now use the lemma to bound $|E_B(X,Y)|$.
Since $|X|+|Y|=2\delta|V|$, it follows that $a \leq 4$ and $b \leq 4$,
and therefore 
\[|E_B(X,Y)| \leq \sum_{i=1}^a \sum_{j=1}^b |E_B(X_i,Y_j)| \leq ab\eps \delta d|V| \leq 16\eps \delta d |V|.\]
As $B$ is bipartite,
\[|\delta_B(X \cup Y)| = \vol_B(X \cup Y) - 2|E_B(X,Y)|
\geq 2\delta d|V| - 32 \eps \delta d |V| = 2(1-16\eps)\delta d |V|.
\]
Therefore, we have
\[\phi_B(X \cup Y) = \frac{|\delta_B(X \cup Y)|}{\vol_B(X \cup Y)}
\geq \frac{2(1-16\eps)\delta d |V|}{2\delta d |V|} = 1-16\eps.
\]
This completes the proof of Proposition~\ref{p:bipartite}.
We remark that a more careful argument gives $|E_B(X,Y)| \leq 6\eps \delta d|V|$ and thus $\phi_B(X \cup Y) \geq 1-6\eps$, but this constant does not matter for the proof of Theorem~\ref{t:APX}.


\section{Concluding Remarks}

We have formulated a new and natural network design problem and presented some hardness and algorithmic results.
It opens up a number of interesting problems to be studied.
\begin{enumerate}
\item 
For the $s$-$t$ effective resistance network design problem, we conjecture that the integrality gap of the convex program is exactly two.
As mentioned in Remark~\ref{r:five}, the analysis of the $8$-approximation is not tight, and we can show that the same algorithm achieves an approximation ratio strictly smaller than $5$.
It would be good to close the gap completely.
\item 
The weighted case of arbitrary costs and arbitrary resistances is wide open.  
It will be interesting if there are stronger convex programming relaxations for the problem (perhaps adding some knapsack constraint as suggested by the dynamic programming algorithms for series-parallel graphs).
\item
As in survivable network design, one could study the general problem when there are multiple source-sink pairs and each pair has a different effective resistance requirement.
It will be very interesting if it is still possible to achieve a constant factor approximation in this general setting.
\item
An interesting intermediate problem is to find a minimum cost network so that the maximum effective resistance over pairs (the resistance diameter) is minimized. 
This is an analog of the global connectivity problem in traditional network design.
\end{enumerate}
A more open-ended direction is to unify and extend the techniques for network design problems with spectral requirements.

\bibliographystyle{plain}

\end{document}